\newcommand\ringring[1]{%
  {%
   \mathop{\kern0pt #1}\limits^{%
     \vbox to-1.85ex{
       \kern-2ex %
       \hbox to 0pt{\hss\normalfont\kern.1em \r{}\kern-.45em \r{}\hss}%
       \vss %
     }%
   }%
  }%
}
\newcommand{\myleft}{\mathopen{}\mathclose\bgroup\left}
\newcommand{\myright}{\aftergroup\egroup\right}
\newcommand{\matchedDelimiter}[3]{\myleft#1 #3 \myright#2}
\newcommand{\tmatchedDelimiter}[3]{#1 #3 #2} %
\newcommand{\lr}[1]{\matchedDelimiter{(}{)}{#1}}
\newcommand{\tlr}[1]{\tmatchedDelimiter{(}{)}{#1}}
\newcommand{\brackets}[1]{\matchedDelimiter{[}{]}{#1}}
\newcommand{\set}[1]{\matchedDelimiter{\lbrace}{\rbrace}{#1}}
\newcommand{\tset}[1]{\tmatchedDelimiter{\lbrace}{\rbrace}{#1}}
\newcommand{\setM}[2]{\matchedDelimiter{\lbrace}{\rbrace}{#1 : #2}}
\newcommand{\norm}[2][]{\matchedDelimiter{\Vert}{\Vert}{#2}_{#1}}
\newcommand{\tnorm}[2][]{\tmatchedDelimiter{\Vert}{\Vert}{#2}_{#1}}
\newcommand{\abs}[1]{\left|{#1}\right|} %
\newcommand{\bilin}[2][]{\matchedDelimiter{<}{>}{#2}_{#1}}
\newcommand{\without}[1]{\setminus\set{#1}}
\newcommand{\restrictTo}[2]{{#1\big\vert_{#2}}}
\newcommand{\condInd}[3]{{#1 \perp \!\!\! \perp #2 \mid #3}}
\newcommand{\exInd}[3]{{#1 \perp_e #2 \ifthenelse{\equal{#3}{}}{}{\mid #3}}}
\newcommand{\deq}[1][d]{\stackrel{#1}{=}}
\renewcommand{\t}[1][t]{^{\myleft[#1\myright]}}
\renewcommand{\k}[1][k]{^{\lr{#1}}}
\newcommand{\tk}[1][t,k]{^{\myleft[#1\myright]}}
\newcommand{\T}{^\top\!}
\newcommand{\inv}{^{-1}}
\newcommand{\pinv}{^+}
\newcommand{\I}[1][I]{_{(#1)}}
\newcommand{\N}{\mathbb{N}}
\newcommand{\Rdd}[1][d]{\mathbb{R}^{#1 \times #1}}
\newcommand{\Rd}[1][d]{\mathbb{R}^{#1}}
\newcommand{\ROdd}[1][d]{\mathring{\mathbb{R}}^{#1 \times #1}}
\newcommand{\ROd}[1][d]{\mathring{\mathbb{R}}^{#1}}
\newcommand{\oneToX}[1]{{1, \dots, #1}}
\newcommand{\vecDots}[3]{\lr{#1_{#2}, \dots, #1_{#3}}}
\renewcommand{\L}{{\mathcal{L}}}
\newcommand{\SOk}[1][k]{\mathring\Sigma^{(#1)}}
\newcommand{\Image}{\operatorname{Im}}%
\newcommand{\Cli}{{\mathcal{C}}}
\newcommand{\Sep}{{\mathcal{D}}}
\newcommand{\Qcal}{\mathcal{Q}}
\newcommand{\Pcal}{\mathcal{P}}
\newcommand{\Rcal}{\mathcal{R}}
\newcommand{\Dcal}{\mathcal{D}}
\newcommand{\GammaO}{{\mathring\Gamma}}
\newcommand{\SigmaO}{{\mathring\Sigma}}
\newcommand{\SigmaT}{{\tilde\Sigma}}
\newcommand{\undefined}{\mathrm{``{?}"}}
\newcommand{\undefinedNoQuotes}{\mathrm{?}}
\newcommand{\undefinedMatrix}{\mathrm{``{??}"}}
\newcommand{\undefinedMatrixNoQuotes}{\mathrm{??}}
\newcommand{\half}{{\tfrac{1}{2}}}
\newcommand{\LambdaVec}{\Lambda^{\!c}}
\newcommand{\forsome}{\mathrm{~for~some~}}
\newcommand{\Id}{{I_d}}
\newcommand{\inftyVec}{{\boldsymbol{\infty}}}
\newcommand{\zeroVec}{{\mathbf{0}}}
\newcommand{\oneVec}{\mathbf{1}}
\newcommand{\oneVecT}{{\mathbf{1}\T}}
\newcommand{\oneVecTT}{{\oneVec\oneVecT}}
\newcommand{\e}{\mathbf{e}}
\newcommand{\Proj}[1][]{{\Pi}_{#1}}
\newcommand{\ID}{\Proj}
\newcommand{\tr}{\operatorname{tr}}
\newcommand{\eV}[1][d]{\mathbf{e}_{#1}}
\newcommand{\Ebar}{\overline E}
\newcommand{\diff}{\mathrm{d}}
\newcommand{\Gflight}{G_{\text{flight}}}
\newcommand{\Gflow}{G_{\text{flow}}}
\newcommand{\myCases}[1]{
    \begin{cases}
        \begin{aligned}
            #1
        \end{aligned}
    \end{cases}
}
\newcommand{\myCase}[2]{&#1&\qquad&#2}
\newcommand{\myFunctionDefinition}[5]{
  #1: \; & #2 &&\longrightarrow #3
  \\
  & #4 &&\longmapsto #5
}
\newcommand{\myFunctionDefinitionInline}[5]{
  #1: \; & #2 &\longrightarrow #3
  , \quad
  #4 \longmapsto #5
}
\let\oldBeginMatrix\matrix
\let\oldEndMatrix\endmatrix
\renewenvironment{matrix}{
  \ifdefmacro{\myspacingset}{
    \myspacingset{1.0}
  }{}
  \oldBeginMatrix
}{
  \oldEndMatrix
  \ifdefmacro{\myspacingset}{
    \myspacingset{1.9}
  }{}
}
\newcommand{\myFunction}[2]{
  #1 \ifx @#2@ \else
  \mathchoice{
    \lr{#2}
  }{
    \tlr{#2}
  }{
    \lr{#2}
  }{
    \lr{#2}
  }
  \fi
}
\newcommand{\Cov}[1]{{\mathrm{Cov}\lr{#1}}}
\newcommand{\littleO}[1]{o\lr{#1}} %
\newcommand{\comp}[2][]{\myFunction{\mathfrak{C}_{#1}}{#2}}
\newcommand{\diag}[1]{\myFunction{\operatorname{diag}}{#1}}
\newcommand{\lambdaM}[2][]{\myFunction{\lambda_{#1}}{{#2}_{#1}}}
\renewcommand{\P}[2][]{\myFunction{\mathbb{P}_{#1}}{#2}}
\newcommand{\E}[2][]{\myFunction{\mathbb{E}_{#1}}{#2}}
\newcommand{\limit}[2][\infty]{\lim_{#2\rightarrow #1}}
\newcommand{\proj}[2][]{\myFunction{\sigma_{#1}}{#2}}
\newcommand{\ppinv}[1]{\myFunction{\theta}{#1}}
\newcommand{\phik}[2][k]{\myFunction{\varphi_#1}{#2}}
\newcommand{\phikt}[2][k]{\myFunction{\tilde\varphi_#1}{#2}}
\newcommand{\KLdiv}[1]{\myFunction{\mathcal{I}}{#1}}
\newcommand{\pdet}[1]{\abs{#1}_+}
\newcommand{\trace}[1]{\myFunction{\mathrm{tr}}{#1}}
\newcommand{\laSpan}[1]{\myFunction{\mathrm{span}}{{#1}}}
\newcommand{\expF}[1]{\myFunction{\exp}{#1}}
\newcommand{\ExpF}[1]{\myFunction{\mathrm{Exp}}{#1}}
\newcommand{\normal}[1]{\myFunction{\mathcal{N}}{#1}}
\newcommand{\farris}[1]{\myFunction{\gamma}{#1}}
\newcommand{\allEdges}[1][V]{\myFunction{\mathcal{E}}{#1}}
\newcommand{\diagV}[1][V]{D_{#1}}
\newcommand{\CAP}{\expandafter\MakeUppercase}
\newcommand{\HR}{Hüsler--Reiss}
\newcommand{\HRP}{Hüsler--Reiss Pareto}
\newcommand{\HC}{Hammersley--Clifford}
\newcommand{\MP}{Moore--Penrose}
\newcommand{\gP}{generalized Pareto}
\newcommand{\multP}{multivariate Pareto}
\newcommand{\multGP}{multivariate generalized Pareto}
\newcommand{\KL}{Kullback--Leibler}
\newcommand{\PD}{positive definite}
\newcommand{\PSD}{positive semi-definite}
\newcommand{\CND}{conditionally negative definite}
\newcommand{\PCND}{partially conditionally negative definite}
\newcommand{\eglearn}{\texttt{EGlearn}}
\newcommand{\quotes}[1]{``#1''}
\newcommand{\neutralshorten}[1]{%
  \ifbool{cleanbuild}{\ifvmode\else\unskip\fi\ignorespaces}{#1}%
}
\newcommand{\comment}[2]{%
  \neutralshorten{%
    \ifmmode%
      \text{\textcolor{#1}{\sffamily\small[{#2}]}}%
    \else%
      \noindent%
      \textcolor{#1}{\sffamily\small[{#2\unskip}]}%
    \fi%
  }%
}
\newcommand{\maybecolor}[1]{\ifbool{cleanbuild}{}{\color{#1}}}
\definecolor{darkred}{rgb}{0.8,0,0}
\definecolor{darkgreen}{rgb}{0,0.8,0}
\definecolor{applegreen}{rgb}{0.58,0.7,0}
\newenvironment{align**}{
  \(
}{
  \)
}
\newcommand{\provenLabel}[1]{
  \label{#1}
  \ifbool{cleanbuild}{}{%
  \hyperlink{proof:#1}{(See Proof.)}%
  }
}
\newenvironment{proofOf}[1][]{
  \ifx @#1@
    \begin{proof}
  \else
    \begin{proof}[Proof of \hypertarget{proof:#1}{\cref{#1}}]
  \fi
}{
  \end{proof}
}
\newcommand{\valUsedP}{0.95}
\newcommand{\valUsedPClustering}{0.95}
\newcommand{\valUsedPMarginals}{0.95}
\newcommand{\valDateEstStart}{2005-01-01}
\newcommand{\valDateEstEnd}{2010-12-31}
\newcommand{\valDateValStart}{2011-01-01}
\newcommand{\valDateValEnd}{2020-12-31}
\newcommand{\valYearStart}{2005}
\newcommand{\valYearEnd}{2020}
\newcommand{\valMinYearlyAirportFlights}{1000}
\newcommand{\valNDaysEst}{1764}
\newcommand{\valNDaysVal}{1839}
\newcommand{\valNYearsAll}{16}
\newcommand{\valNAirportsFiltered}{169}
\newcommand{\valNAirportsCluster}{29}
\newcommand{\valNDaysFiltered}{5601}
\newcommand{\valNClustersWord}{four}
\newcommand{\valRhoStar}{0.15}
\newcommand{\valRhoMax}{0.5}
\newcommand{\valNEdgesEglearn}{101}
\newcommand{\valNEdgesTree}{28}
\newcommand{\valNEdgesFlightGraph}{250}
\newcommand{\valNEdgesComplete}{406}
\newcommand{\valTreeGraphLoglik}{-6435}
\newcommand{\danUsedP}{0.9}
\newcommand{\danYearAllStart}{1960}
\newcommand{\danYearAllEnd}{2010}
\newcommand{\danYearEstStart}{1960}
\newcommand{\danYearEstEnd}{1985}
\newcommand{\danYearValStart}{1986}
\newcommand{\danYearValEnd}{2010}
\newcommand{\danNObsAll}{428}
\newcommand{\danNObsEst}{220}
\newcommand{\danNObsVal}{208}
\newcommand{\danNStations}{31}
\newcommand{\danRhoStar}{0.06}
\newcommand{\danRhoMax}{0.1}
\newcommand{\danNEdgesEglearn}{69}
\newcommand{\danNEdgesTree}{30}
\newcommand{\danNEdgesFlowGraph}{30}
\newcommand{\danNEdgesComplete}{465}
\newcommand{\danCompleteGraphLoglik}{-1810}
\newcommand{\danFlowGraphLoglik}{-329}
\newcommand{\danTreeGraphLoglik}{-265}
\newcommand{\pdfPath}[1]{figures/\detokenize{#1}.pdf}
\newcommand{\inputTikz}[2][1\textwidth]{
  \ifblank{#1}{
    \includegraphics{\pdfPath{#2}}
  }{
    \includegraphics[width=#1]{\pdfPath{#2}}
  }
}
\begin{document}

\newgeometry{margin=1in}

\title{\bf Statistical Inference for \HR{} Graphical Models Through Matrix Completions}
\author{
Manuel Hentschel \\
Research Center for Statistics, University of Geneva\\
and \\
Sebastian Engelke \\
Research Center for Statistics, University of Geneva\\
and \\
Johan Segers\\
ISBA/LIDAM, UCLouvain}
\maketitle
\bigskip
\begin{abstract}

The severity of multivariate extreme events is driven by the dependence between the largest marginal observations. The \HR{} distribution is a versatile model for this extremal dependence, and it is usually parameterized by a variogram matrix.
In order to represent conditional independence relations and obtain sparse parameterizations, we introduce the novel \HR{} precision matrix. Similarly to the Gaussian case, this matrix appears naturally in density representations of the \HR{} Pareto distribution and encodes the extremal graphical structure through its zero pattern.
For a given, arbitrary graph we prove the existence and uniqueness of the completion of a partially specified \HR{} variogram matrix so that its precision matrix has zeros on non-edges in the graph. Using suitable estimators for the parameters on the edges, our theory provides the first consistent estimator of graph structured \HR{} distributions.
If the graph is unknown, our method can be combined with recent structure learning algorithms to jointly infer the graph and the corresponding parameter matrix.
Based on our methodology, we propose new tools for statistical inference of sparse \HR{} models
and illustrate them on large flight delay data in the U.S.,
as well as Danube river flow data.

\end{abstract}
\noindent%
{\it Keywords:} extreme value analysis; multivariate generalized Pareto distribution; sparsity; variogram 

\vfill

\restoregeometry

\newpage

\etocdepthtag.toc{mtchapter}
\etocsettagdepth{mtchapter}{subsection}
\etocsettagdepth{mtappendix}{none}

\section{Introduction}

\newcommand{\topic}[1]{}

In statistical modelling,
conditional independence and graphical models
are well-established concepts for analyzing structural relationships in data
\citep{lauritzen1996, wainwright2008}.
Particularly important are Gaussian graphical models, also known as Gaussian Markov random fields \citep{rue2005}.
The graphical structure of a multivariate normal distribution
with positive definite covariance matrix $\Sigma$ can be read off from the
zeros of its precision matrix $\Sigma\inv$.

For risk assessment in fields such as climate science, hydrology, or finance, the primary interest
is in extreme observations, with attention to both the marginal tails and the dependence between multiple risk factors.
In view of the growing complexity and dimensionality of modern data sets,
sparsity and graphical models are becoming crucial notions for
the analysis of extremes \citep[e.g.,][]{engelke2021a}.

There are two different ways of defining graphical models for extreme value distributions.
The first is based on max-linear models \citep{gissibl2018} and the second one studies \multP{} distributions \citep{engelke2020}.
We follow the second approach since their new notions of conditional independence and extremal graphical models link naturally to 
the well-known \HC{} theorem for density factorizations.
Moreover, in the case of tree graphs, \cite{segers2020} shows that the extremes of regularly varying Markov trees converge to these extremal tree models.
\cite{lee2018} propose parsimonious models for extreme value copulas; the link with extremal graphical models is made in \cite{asenova2021inference}.

The class of extremal graphical models with
\HRP{} distributions is of particular interest.
In $d$ dimensions, the parameter of this family is
a variogram matrix $\Gamma \in \Rdd[d]$. 
Because of their flexibility and stochastic properties,
\HR{} distributions can be seen as the counterpart of the Gaussian family for multivariate extremes.
In combination with extremal graphical models, the \HR{} family
constitutes a powerful tool for sparse extreme value modelling, with many open questions
still to explore.

For a connected, undirected graph $G= (V,E)$ with nodes $V= \{1,\dots, d\}$ and edges $E$,
\cite{engelke2020} show that such a distribution's graphical structure
can be read off
from a set of $(d-1) \times (d-1)$ precision matrices $\Theta\k$, for $k \in V$.
While zeros in $\Theta\k$ correspond to extremal conditional independence of nodes $i,j \neq k$, the information on edges involving the $k$th node is encoded only indirectly through the row sums of this matrix.
A natural question, appearing also in the discussion of \cite{engelke2020}, is if there exists a symmetric approach involving a single $d\times d$ precision matrix.

Statistical inference for \HR{} graphical models is limited so far to the simple structures of trees and block graphs \citep{engelke2020a, asenova2021extremes}.
The parameter matrix $\Gamma$ is then additive on the graph and the maximum likelihood estimator 
is an explicit
combination of the bivariate estimators on the edges.
Since block graphs 
lack flexibility for
general applications, several discussion contributions of \cite{engelke2020} have emphasized the need for estimators suitable for more general graphs.

In this paper, we obtain new theoretical results on \HR{} distributions that
answer the two open questions above and enable statistical inference
on extremal graphical models on decomposable and non-decomposable graphs.
Our main contributions are threefold. Firstly, in \cref{sec:HrPrecisionMatrix}, we introduce the \HR{} precision matrix $\Theta \in \Rdd$ as
$\Theta_{ij} = \Theta\k_{ij}$ for some $k \neq i, j$, a definition
which---surprisingly---is independent of the particular choice of $k \in V$.
This positive semi-definite matrix indeed reflects the sparsity of the
extremal graph by zero off-diagonal entries.
We give several characterizations of this matrix, one of them
as the Moore--Penrose inverse of
a projection of the parameter matrix $\Gamma$.

Secondly, we study how a \HR{} distribution on a given, general graph $G=(V,E)$ and fixed marginal distributions on the edges of the graph can be constructed. Thanks to the new \HR{} precision matrix, this task can be framed as a matrix completion  problem, which aims to find
a conditionally negative definite variogram matrix $\Gamma$
that has specified values $\Gamma_{ij}$ in the entries corresponding to the edges $(i,j) \in E$
of graph $G$ 
and whose precision matrix $\Theta$ has zeros in the remaining entries, i.e., $\Theta_{ij} = 0$ for $(i,j) \notin E$. A concrete example in dimension $d=4$ for the graph in \autoref{subfig:exampleBlock} is
\begin{align}
    \label{eq:simpleGammaThetaPartial}
    \Gamma
    =
    \lr{
        \begin{matrix}
     0 & 3 & \mathrm{?} & 1 \\
     3 & 0 & 10 & 2 \\
     \mathrm{?} & 10 & 0 & \mathrm{?} \\
     1 & 2 & \mathrm{?} & 0
\end{matrix}
    }
    , \qquad
    \Theta
    =
    \lr{
        \begin{matrix}
     \mathrm{?} & \mathrm{?} & 0 & \mathrm{?} \\
     \mathrm{?} & \mathrm{?} & \mathrm{?} & \mathrm{?} \\
     0 & \mathrm{?} & \mathrm{?} & 0 \\
     \mathrm{?} & \mathrm{?} & 0 & \mathrm{?}
\end{matrix}
    }
    ,
\end{align}%
where the completion problem corresponds to finding the entries with $\undefined$.
In \cref{sec:completion}, we show that such a completion exists and is unique. 
Our results can be seen as a semi-definite extension of matrix completion problems for
the covariance matrix of Gaussian distributions studied in \cite{speedKivery1986} and \cite{bakonyi2011}.

Thirdly, we leverage our theoretical results to provide effective statistical tools for extremal graphical models. The \HR{} precision matrix
allows us to represent the maximum (surrogate-)likelihood estimate of $\Gamma$ on the graph $G$ as the maximizer of the constrained optimization problem
\begin{equation*}
     \log \pdet{\Theta} + \tfrac{1}{2}\tr(\widehat{\Gamma}\Theta), \quad \text{s.t. } \Theta_{ij} = 0 \text{ if } (i,j) \notin E,
\end{equation*}
where $\widehat \Gamma$ is the empirical variogram \citep{engelke2020a} and $\pdet{\, \cdot\,}$ denotes the pseudo-determinant. In \cref{sec:statinf}, we prove that the solution to this optimization problem is given by the solution of the above matrix completion problem. We further combine recent structure learning methods \citep{engelke2022a} with our completion to yield the first estimator that is jointly sparsistent for the extremal graph and consistent for the model parameters. 
Our methodology enables new model assessment plots that allow for model interpretation and comparison of models with different degrees of sparsity. 
This is illustrated in a case study of large delays in the domestic U.S.~air travel network in \cref{sec:applic}.
The supplementary material contains another case study, together with mathematical details and proofs.

The \HR{} precision matrix and the properties we derive in this paper have already been used for multiple purposes, including the parameterization of sparse statistical models \citep{roettgerSchmitz2023, roettger2023parametric}, structure estimation \citep{engelke2022a,wan2023graphical}, efficient statistical inference \citep{roettger2023,lederer2023extremes}, and a characterization in terms of pairwise interaction models \citep{lalancette2023pairwise}.

\section{Extremal graphical models}
\label{sec:extrgraphmod}
\subsection{Multivariate generalized Pareto distributions}

Multivariate extreme value theory studies the tail behavior of a random vector $X = \vecDots{X}{1}{d}$. A first summary of the extremal dependence structure of the bivariate margins for $i,j \in V := \{1,\dots, d\}$ is the extremal correlation
\begin{align}
    \label{eq:defChi}
    \chi_{ij}
    :=
    \limit[0]{p} \chi_{ij}(p)
    :=
    \limit[0]{p}
    \P{
        F_i(X_i) > 1-p
        \mid
        F_j(X_j) > 1-p
    } \in [0,1]
    ,
\end{align}
defined whenever the limit exists and where $F_i$ is the distribution function of $X_i$. When $\chi_{ij} > 0$ we say that $X_i$ and $X_j$ are  asymptotically dependent, and when $\chi_{ij} = 0$ we speak of asymptotic independence.
In the former case,
there are two different,
but closely related approaches for modelling 
extremal dependence:
through component-wise maxima of independent copies of $X$ leading to max-stable distributions \citep{deHaan1977};
and through threshold exceedances of $X$ resulting in multivariate generalized Pareto distributions \citep{rootzen2006}.
Here,
we concentrate on the threshold exceedance approach since it 
is
well-suited for graphical modelling \citep{engelke2020, segers2020}.
For statistical models for asymptotic independence, we refer to \cite{heffernan2004}, for instance, and to \cite{papastathopoulos2017} in the context of extremes of Markov chains as well as \cite{casey2023decomposable} for extremes in decomposable graphical models.

To make abstraction of the univariate marginal distributions and 
concentrate on the extremal dependence,
it is usually assumed that all 
variables $X_i$ follow
a given continuous distribution.
Throughout, 
we use standard exponential margins, that is, $\P{X_i \leq x} = 1- \exp(-x)$ for $x \geq 0$ and $i \in V$.
Let $\zeroVec$, $\oneVec$, and $\inftyVec$ denote vectors of adequate size
with all elements equal to~$0$, $1$, and~$\infty$ respectively.
A random vector $Y = (Y_1,\dots, Y_d)$ is said to follow a multivariate generalized Pareto distribution \citep{rootzen2006} if for any $z \in \L = \set{x \in \Rd: x \not\leq \zeroVec}$, we have
\begin{align}
    \label{eq:mpd}
    \P{Y \leq z}
    :=
    \limit{u} \P{
        X - u\oneVec \leq z
        \mid
        X \not\leq u \oneVec
    }
    =
    \frac{
        \LambdaVec\lr{z \land \zeroVec} - \LambdaVec\lr{z}
    }{
        \LambdaVec\lr{\zeroVec}
    }
    , \quad
\end{align}
for some random vector $X$, which is then said to be in the domain of attraction of~$Y$; the simple normalization by subtracting $u\oneVec$ comes from the assumption of exponential margins of $X$. 
Multivariate generalized Pareto distributions are threshold-stable
in the sense that for a vector $a \geq \zeroVec$,
the conditional random vector $Y - a$ given $Y \nleq a$
is again \multGP{}.
This also implies useful stochastic representations; see \citet{rootzen2018} for details.
The so-called exponent measure $\Lambda$ is a 
measure on $[-\infty, \infty)^d \setminus \{-\inftyVec\}$
that is finite on sets bounded away from $-\inftyVec$,
and we write $\LambdaVec(z) := \Lambda\lr{[-\infty,\infty)^d \setminus [-\inftyVec, z]}$.
Multivariate generalized Pareto distributions are the only ones that can arise as limits of threshold exceedances as in~\cref{eq:mpd}.

We assume $\Lambda$ to be absolutely continuous
with respect to the $d$-dimensional Lebesgue measure
and let $\lambda$ denote its Radon--Nikodym derivative.
The set of valid exponent measure densities $\lambda$ is 
characterized by the following two properties:
\begin{alignat*}{3}
    \lambda\lr{y + t\oneVec}
    &=
    \expF{-t} \lambda\lr{y}
    , \qquad &&\forall
    t \in \Rd[], y \in \Rd
    ,
    \\
    \int_{y_i>0}
    \lambda\lr{y} \, \diff y
    &=
    1
    , \qquad &&\forall
    i \in V
    .
\end{alignat*}
Since the distribution of $Y$ is proportional to the restriction of $\Lambda$ to $\mathcal L$,
its density $f$ then also exists and is proportional to the exponent measure density $\lambda$ as $f\lr{y} = \lambda\lr{y} / \LambdaVec\lr{\zeroVec}$ for all $y \in \L$, since $\LambdaVec\lr{\zeroVec} = \Lambda(\mathcal L)$. 

When considering marginal distributions of \multGP{} distributions,
it is often more useful to work with the limit distributions that arise in \cref{eq:mpd}
by replacing $X$ with the sub-vector $X_I$, also in the conditioning event, for some non-empty $I \subset \set{\oneToX{d}}$.
The resulting ``\gP{} marginal distribution'', here denoted $Y\I$,
is supported on $\L_I = \setM{x \in \Rd[\abs{I}]}{x \not\leq \zeroVec}$,
and its corresponding exponent measure and density are the actual marginals
$\Lambda_I$ and $\lambda_I$.
Note that $Y\I$ is equal in distribution to $Y_I$ conditioned on the event $Y_I \not\leq \zeroVec$.
See \citet[Section~4]{rootzen2006} or \citet[Section~6]{rootzen2018} for a detailed discussion of these marginals.
Due to the choice of standard exponential univariate margins for $X$,
the univariate ``\gP{} marginals'' $Y\I[\set{i}] \deq (Y_i \mid Y_i > 0)$ are standard exponential as well.
For a discussion how to transform exponent measure densities between different choices of marginals, see \cref{remark:densitiesForDifferentMargins}.

\label{subsec:mgp}
\subsection{\HR{} distributions}
\label{subsec:HRdefinition}

A popular class of \multGP{} distributions is the class of \HR{} distributions,
which are parametrized by symmetric \CND{} matrices, here denoted by $\Gamma$ \citep{hueslerReiss1989, kabluchkoSchlatherDeHaan2009}.
The set of these so-called variogram matrices is defined as
\begin{align}
    \label{eq:defVariogramSet}
    \Dcal_d
    =
    \setM{
        \Gamma \in \Rdd
    }{
        \Gamma = \Gamma\T
        ,\;\;
        \diag{\Gamma} = \zeroVec
        ,\;\;
        v\T\Gamma v < 0
        \;\forall\,
        \zeroVec \neq v \perp \oneVec
    }
    .
\end{align}
Variogram matrices are closely related to covariance matrices
and can be constructed from a centered random vector $W$ with
covariance matrix $\Sigma$ as
\begin{align}
    \label{eq:FarrisTransformStochastic}
    \Gamma_{ij}
    =
    \E{}\lr{W_i - W_j}^2
    =
    \Sigma_{ii} + \Sigma_{jj} - 2 \Sigma_{ij},
    \qquad \forall i,j \in V := \set{\oneToX{d}}
    .
\end{align}
Notably, this so-called covariance transform \citep{farris1970}
is not injective.
One collection of covariance matrices corresponding to a given $\Gamma$,
used for example in \cite{engelke2020},
can be obtained for $k \in V$ as
\begin{align}
    \label{eq:defPhiKSigmaTK}
    \SigmaT\k
    =
    \phikt{\Gamma}
    :=
    \half\lr{
        \Gamma_{ik} + \Gamma_{jk} - \Gamma_{ij}
    }_{i,j = \oneToX{d}}
    \in \Rdd
    .
\end{align}
The $k$th row and column of these matrices are identically zero.
Let $\Sigma\k \in \Rdd[(d-1)]$ denote
the matrices with these entries removed,
conveniently indexed by $V\without{k}$,
and $\phik{}$ the corresponding mapping.

\begin{definition}
    \label{def:HRdensity}
    Let $\Gamma \in \Dcal_d$ be a variogram matrix,
    $k \in V$,
    $\Sigma\k = \phik{\Gamma}$,
    and
    $\mu\k = \lr{-\half \Gamma_{ik} }_{i \neq k} \in \Rd[d-1]$.
    A \multGP{} random vector $Y = \vecDots{Y}{1}{d}$
    is \HRP{} distributed with parameter matrix $\Gamma$
    if its exponent measure density $\lambda$ satisfies
    \begin{align*}
        \lambda(y; \Gamma)
        &=
        \frac{
            \expF{-y_k}
        }{
            \sqrt{\lr{2\pi}^{d-1}\abs{\Sigma\k}}
        }
        \expF{
            -\half
            \norm[\Theta\k]{y_{V \without{k}} - \oneVec y_k - \mu\k}^2
        }
        ,
        \qquad
        y \in \Rd
        ,
    \end{align*}
    with $\Theta\k = \lr{\Sigma\k}\inv$ and using the notation $\norm[M]{v}^2 = v\T Mv$.
\end{definition}
Notably, this expression does not depend on the choice of $k$ \citep[e.g.,][Theorem~3.3]{engelke2014}.
The marginals $Y\I$ (in the \gP{} sense of \cref{subsec:mgp})
of a \HRP{} distributed vector with variogram matrix $\Gamma$
are also \HRP{} distributed with variogram matrix $\Gamma_{I\times I}$ \citep[Example~7]{engelke2020}.

The popularity of \HR{} distributions has theoretical and practical origins.
From known properties about this model class, it becomes obvious that the \HR{} family takes the same role among \multGP{} distributions as the Gaussian distribution in the non-extreme setting. The results of our paper will further shed light on this analogy. Moreover, \cite{lalancette2023pairwise} recently showed that \HR{} distributions constitute the only pairwise interaction models among \multGP{} distributions.
From a practical point of view, \HR{} distributions have become a standard model in multivariate extreme value modeling because of their flexibility and tractability. Importantly, they are the finite dimensional distributions of the widely used spatial Brown--Resnick \citep{kabluchkoSchlatherDeHaan2009} processes. We refer to \cite{dav2012b} and \cite{engelke2021a} for overviews of applications of these models.

\subsection{Extremal conditional independence}
\label{subsec:extremalCondInd}

In classical statistics,
graphical models are defined through conditional independence relations between
components of a random vector \citep[e.g.,][]{lauritzen1996}.
Since the support $\L$ of a \multGP{} distribution is not a product space,
this definition 
is impractical
in this case.
Instead, \cite{engelke2020} introduce a novel notion of extremal conditional independence
in terms of the vectors $Y\k = (Y_{1}^{(k)},\ldots,Y_{d}^{(k)})$ for $k \in V  := \set{\oneToX{d}}$,
defined as the \multGP{} vector $Y$ conditioned on the event $\set{Y_k > 0}$,
and 
supported on the product spaces $\L\k = \setM{x \in \L}{x_k>0} = \setM{x \in \Rd}{x_k > 0}$.
The extremal version of conditional independence between sub-vectors $Y_A$ and $Y_B$ given $Y_C$,
for non-empty, disjoint sets $A \cup B \cup C = V$,
is then defined as
\begin{align}\label{eci}
    \forall k \in V: \quad \condInd{Y\k_A}{Y\k_B}{Y\k_C}
    ,
\end{align}
and it is denoted by $\exInd{Y_A}{Y_B}{Y_C}$.

We consider the index set $V$ as a set of nodes and let $E\subseteq \allEdges$ be the set
of undirected edges of a connected graph $G = (V,E)$,
where $\allEdges$ denotes the set of all possible edges.
\cref{fig:exampleGraphs} shows four different graph structures in increasing generality;
see \cref{sec:graphTheory} for definitions of the related terms.
We say that $Y$ follows an extremal graphical model on $G$ if for $i,j \in V$ with $i \ne j$ we have
\begin{align*}
    (i,j) \notin E
    \quad \implies \quad
    \exInd{Y_i}{Y_j}{Y_{V\without{i,j}}}
    .
\end{align*}
\cite{engelke2020} show that the existence of a positive and continuous
exponent measure density $\lambda$ of $Y$ implies
connectedness of the corresponding graph $G$
and
the equivalence of~\cref{eci} to the factorization
\begin{align*}
    \lambdaM{y} \, \lambdaM[C]{y}
    =
        \lambdaM[A \cup C]{y} \,
        \lambdaM[B \cup C]{y},
    \quad y \in \L .
\end{align*}
If $Y$ is an extremal graphical model on a decomposable graph $G$, this result, applied recursively, yields a Hammersley--Clifford theorem that states that
the exponent measure density of $Y$
factorizes on the graph into lower-dimensional marginal densities;
for an extension to unconnected graphs see~\cite{eng_iva_kir}. 

The extremal conditional independence $\exInd{Y_A}{Y_B}{Y_C}$ has an intuitive interpretation in terms of classical conditional independence. Suppose that we observe an extreme event in variables corresponding to $C$, that is, $\max_{j\in C} Y_j > 0$, then, conditionally on $Y_C$, the sub-vectors $Y_A$ and $Y_B$ are independent in the usual sense. 
For an extremal graphical structure $G$, this translates into an extremal local prediction property of the form
\[
	\lr{Y_i \mid Y_{\delta(i)} = y_{\delta(i)}}
	\stackrel{d}{=}
	\lr{Y_i \mid Y_{\without{i}} = y_{\without{i}}},
	\qquad
	y \in \Rd
    , \;
    \max_{j \in \delta(i)} y_j > 0
    ,
\]
where $\delta(i)$ denotes the neighborhood of vertex $i$ in $G$ (see \cref{sec:graphTheory}). Thus, knowing that an extreme event has occurred in the neighborhood of $Y_i$, is suffices to know the neighbors of $Y_i$ for predicting its value. 

\begin{figure}
    \centering
    \begin{subfigure}[b]{.19\textwidth}
        \inputTikz{G_02_graph}
        \setlength\abovecaptionskip{-1\baselineskip}
        \caption{}
        \label{subfig:exampleTree}
    \end{subfigure}
    \begin{subfigure}[b]{.19\textwidth}
        \inputTikz{G_03_graph}
        \setlength\abovecaptionskip{-1\baselineskip}
        \caption{}
        \label{subfig:exampleBlock}
    \end{subfigure}
    \begin{subfigure}[b]{.19\textwidth}
        \inputTikz{G_04_graph}
        \setlength\abovecaptionskip{-1\baselineskip}
        \caption{}
        \label{subfig:exampleDecomposable}
    \end{subfigure}
    \begin{subfigure}[b]{.19\textwidth}
        \inputTikz{G_05_graph}
        \setlength\abovecaptionskip{-1\baselineskip}
        \caption{}
        \label{subfig:exampleGeneral}
    \end{subfigure}
    \caption{
        Different undirected, connected graphs on four nodes:
        (\subref*{subfig:exampleTree}) is a tree,
        (\subref*{subfig:exampleBlock}) is a block graph,
        (\subref*{subfig:exampleDecomposable}) is decomposable,
        (\subref*{subfig:exampleGeneral}) is non-decomposable.
    }
    \label{fig:exampleGraphs}
\end{figure}

\section{\HR{} precision matrix}
\label{sec:HR}
\label{sec:HrPrecisionMatrix}

An important property of Gaussian graphical models with covariance matrix $\Sigma$ is that
the conditional independence structure
can be read off from the zeros in the (Gaussian) precision matrix
$\Sigma\inv$.
\HR{} graphical models satisfy a similar property. Indeed, let $Y$ be a \HRP{} vector with parameter matrix $\Gamma \in \Dcal_d$ that is an extremal graphical model on the undirected graph $G = (V,E)$. \cite{engelke2020} show that the graph $G$ is necessarily connected and that the graphical structure can be read off from the precision matrices
$\Theta\k = \tlr{\Sigma\k}\inv$
for any $k \in V$; in the following we index these $(d-1)\times(d-1)$ matrices by $V \setminus \{k\}$ for the sake of simpler notation.
Two distinct nodes $i, j \neq k$ are extremal conditionally independent in the sense that $\exInd{Y_i}{Y_j}{Y_{V\without{i,j}}}$
if and only if the corresponding entry $\Theta\k_{ij}$ is zero.
If one of the nodes, say $j$, is equal to $k$,
extremal conditional independence is equivalent to the row sum $\sum_{l\neq k} \Theta\k_{il}$ being zero.

Each $\Theta\k$ thus contains all information on conditional independence of $Y$.
A natural question,
already raised in the discussion of \cite{engelke2020},
is whether there is a single $d \times d$ precision matrix
that contains this information but does not require a choice of $k$.
In fact, such a matrix can be defined as follows.

\begin{definition}
    \label{def:Theta}
    Let $\Gamma \in  \Dcal_d$ be a variogram matrix. For $d\geq 3$
    define $\Theta \in \Rdd$ by
    \begin{align*}
        \Theta_{ij}
        =
        \Theta\k_{ij}
        , \qquad
        \forsome
        k \in V\setminus \{i,j\},
    \end{align*}
    with $\Theta\k$ as above.
    In the case $d=2$,
    set $\Theta_{11} = \Theta_{22} = -\Theta_{12} = -\Theta_{21} = 1/\Gamma_{12}$.
\end{definition}

Lemma~1 and Proposition~3 in \cite{engelke2020}
imply that the matrix $\Theta$ is well-defined and represents extremal conditional independence of distinct nodes $i,j \in V$ in the corresponding \HR{} model by
\begin{align}
    \label{theta0}
    \Theta_{ij} = 0 \quad \iff  \quad \exInd{Y_i}{Y_j}{Y_{V\without{i,j}}}.
\end{align}
While \cref{def:Theta} is a natural way to jointly represent the information contained in the $\Theta\k$ matrices,
it remains to be shown that the matrix $\Theta$
allows for useful mathematical representations.
In order to give a first such representation,
let $\lr{\,\cdot\,}\pinv$ denote the
\MP{} inverse (see \cref{subsec:moorePenrose} for details),
let $\Id$ be the $d\times d$ identity matrix,
write $\eV = d\inv \oneVec$,
and let $\Proj = \Id - \oneVec\eV\T$ be the $d \times d$ centering matrix, i.e., the projection matrix onto the orthogonal complement of $\oneVec$.
Recall \cref{eq:FarrisTransformStochastic},
which can be written as a mapping for arbitrary matrices $S \in \Rdd$ as
\begin{align}
    \label{eq:FarrisTransform}
    \farris{}:
    S
    \longmapsto
    \Gamma
    =
    \oneVec \diag{S}\T
    +
    \diag{S} \oneVec\T
    -
    2 S
    ,
\end{align}
see \cref{def:gamma,lemma:gammaMapsToCND} for details.

\begin{proposition}
    \provenLabel{prop:ThetaReps}
    Let $\Gamma \in  \Dcal_d$
    and $S \in \Rdd$ satisfy $\farris{S} = \Gamma$.
    Then
    \begin{align}
        \label{eq:conditionThetaReps2}
        \ID S \ID
        =
        \ID \lr{- \half \Gamma} \ID
        .
    \end{align}
    Furthermore,
    the matrix $\Theta$ from \cref{def:Theta} satisfies
    \begin{align}
        \Theta
        &= \lr{\Proj S \Proj}\pinv
        \label{eq:ThetaRepPinv}
        \\
        &= \limit{t} \lr{t \oneVecTT + S}\inv.
        \label{eq:ThetaRepLimit}
    \end{align}
\end{proposition}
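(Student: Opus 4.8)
The plan is to establish the three equalities in order, treating \eqref{eq:conditionThetaReps2} as a warm-up, \eqref{eq:ThetaRepPinv} as the crux, and \eqref{eq:ThetaRepLimit} as a consequence of the pseudo-inverse representation. For \eqref{eq:conditionThetaReps2}, I would write $\farris{S}=\Gamma$ in matrix form, $\Gamma = \oneVec\diag{S}\T + \diag{S}\oneVec\T - 2S$, so that $S + \half\Gamma = \half\lr{\oneVec\diag{S}\T + \diag{S}\oneVec\T}$. Each summand carries a factor $\oneVec$ on one side, and since $\Proj\oneVec = \zeroVec$, conjugation by $\Proj$ annihilates both, giving $\Proj S\Proj = \Proj\lr{-\half\Gamma}\Proj =: \Sigma$. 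The representative $S$ thus drops out, and the \CND{} property of $\Gamma$ makes $\Sigma$ positive semi-definite with $\ker\Sigma = \laSpan{\oneVec}$, hence range $\oneVec^\perp$ with associated orthogonal projection $\Proj$.

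By the first part it suffices to prove $\Theta = \Sigma\pinv$. The engine is a \emph{grounded-inverse identity}: for any symmetric $\Xi$ with $\ker\Xi = \laSpan{\oneVec}$ and $\Sigma := \Xi\pinv$, every principal submatrix $\Xi_{V\setminus\{k\},V\setminus\{k\}}$ is invertible, and its inverse equals the double-centering of $\Sigma$ at $k$, the matrix $\lr{\Sigma_{ij} - \Sigma_{ik} - \Sigma_{jk} + \Sigma_{kk}}_{i,j\neq k}$. To prove it I would order $k$ last, write $\Xi$ and $\Sigma$ in $2\times 2$ blocks, and feed in the \MP{} relations $\Sigma\Xi = \Proj$ and $\Sigma\oneVec = \Xi\oneVec = \zeroVec$; the latter force the off-diagonal blocks to equal $-\Sigma_{11}\oneVec$ and $-\Xi_{11}\oneVec$, the double-centered matrix collapses to $\lr{I + \oneVecTT}\Sigma_{11}\lr{I + \oneVecTT}$ (here $I$ and $\oneVec$ have dimension $d-1$), and its product with $\Xi_{11}$ telescopes to the identity via the $(1,1)$-block relation $\Sigma_{11}\lr{I + \oneVecTT}\Xi_{11} = I - d\inv\oneVecTT$ together with $\oneVec\T\oneVec = d-1$. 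Invertibility of $\Xi_{11}$ is clear, since the only multiple of $\oneVec$ vanishing in coordinate $k$ is $\zeroVec$. Applying this with $\Xi = \Sigma\pinv$ (which is symmetric with the same kernel, and $\lr{\Sigma\pinv}\pinv = \Sigma$), and noting via $J_k\Proj = J_k$ for the coordinate-difference map $J_k$, $\lr{J_k v}_i = v_i - v_k$, that the double-centering of $\Sigma = \Proj\lr{-\half\Gamma}\Proj$ at $k$ has entries $\half\lr{\Gamma_{ik} + \Gamma_{jk} - \Gamma_{ij}} = \Sk_{ij}$, I conclude $\lr{\Sigma\pinv}_{V\setminus\{k\},V\setminus\{k\}} = \lr{\Sk}\inv = \Theta\k$ for every $k$. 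Since $d\geq 3$, each index pair $(i,j)$ lies in some such block (choose $k\neq i,j$), and as $\Theta_{ij} = \Theta\k_{ij}$ by \cref{def:Theta}, this forces $\Sigma\pinv = \Theta$ entrywise; the case $d=2$ is checked directly against the explicit formula.

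Equation \eqref{eq:ThetaRepLimit} then drops out of \eqref{eq:ThetaRepPinv}. In an orthonormal basis $[\,U,\ \oneVec/\sqrt d\,]$ adapted to $\oneVec^\perp\oplus\laSpan{\oneVec}$ (so that $UU\T = \Proj$ and $U\T\oneVec = \zeroVec$), the matrix $t\oneVecTT + S$ has blocks $U\T S U$, $U\T S\oneVec/\sqrt d$, and $\oneVec\T S\oneVec/d + td$; since $U\T S U = U\T\Sigma U$ is positive definite, the matrix is invertible for large $t$, and block inversion sends the $\oneVec$-row and column and the off-diagonal blocks to $\zeroVec$ while the leading block tends to $\lr{U\T\Sigma U}\inv$. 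Transforming back gives $\limit{t}\lr{t\oneVecTT + S}\inv = U\lr{U\T\Sigma U}\inv U\T = \Sigma\pinv = \Theta$.

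I expect the \textbf{main obstacle} to be the grounded-inverse identity underlying \eqref{eq:ThetaRepPinv}: the block computation closes only because of the zero-row-sum structure, and one must separately (if cleanly) argue that deleting a single index removes the kernel so the relevant submatrices invert. Everything else is bookkeeping with the projection identities $\Proj\oneVec = \zeroVec$ and $J_k\Proj = J_k$.
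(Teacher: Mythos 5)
Your proof is correct, but it runs in the opposite direction from the paper's and rests on a different key lemma. The paper first proves the limit representation \eqref{eq:ThetaRepLimit} for $S=-\half\Gamma$ (\cref{lemma:GammaLimTheta}): it interprets $t\oneVecTT-\half\Gamma$ as a Gaussian covariance matrix, shows that the conditional covariance given the $k$th coordinate converges to $\Sigma\k$, and uses the Schur-complement identity for block inverses to conclude $\lr{t\oneVecTT-\half\Gamma}\inv_{ij}\to\Theta\k_{ij}$; only afterwards does it identify this limit with $\lr{\Proj S\Proj}\pinv$ by verifying the \MP{} criteria of \cref{lemma:conditionsPinv} (\cref{lemma:PSPeqLim}), having established invertibility of $t\oneVecTT+S$ beyond an explicit threshold $t_0$ (\cref{lemma:SigmatInvertible}). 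You instead prove \eqref{eq:ThetaRepPinv} first, with no limits and no probability: writing $\Sigma=\Proj\lr{-\half\Gamma}\Proj$, your key lemma is the exact block identity that for symmetric $\Xi$ with $\ker\Xi=\laSpan{\set{\oneVec}}$ the submatrix $\Xi_{V\without{k},V\without{k}}$ is invertible with inverse equal to the double-centering of $\Xi\pinv$ at $k$; applied to $\Xi=\Sigma\pinv$ this gives $\lr{\Sigma\pinv}_{V\without{k},V\without{k}}=\lr{\Sigma\k}\inv=\Theta\k$, hence $\Sigma\pinv=\Theta$ entrywise for $d\geq 3$, and \eqref{eq:ThetaRepLimit} then follows by block inversion in an orthonormal basis adapted to $\set{\oneVec}^\perp\oplus\laSpan{\set{\oneVec}}$. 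I checked the computations---in particular the $(1,1)$-block relation $\Sigma_{11}\lr{I+\oneVecTT}\Xi_{11}=I-d\inv\oneVecTT$ and the cancellation $\lr{I+\oneVecTT}\lr{I-d\inv\oneVecTT}=I$ via $\oneVec\T\oneVec=d-1$---and they close; your treatment of \eqref{eq:conditionThetaReps2} and of $d=2$ matches the paper's (\cref{lemma:farrisSeqG}), and the symmetric block form in your last step is licensed because $\farris{S}=\Gamma$ forces $S$ symmetric, as the paper remarks. What each approach buys: yours is shorter and purely algebraic, and your grounded-inverse identity is of independent interest, saying that the matrices $\Theta\k$ of \cref{def:Theta} are literally the principal submatrices of $\Sigma\pinv$ rather than merely agreeing with it entrywise; the paper's longer route produces byproducts it reuses elsewhere---positive definiteness of $t\oneVecTT+S$ past the explicit threshold $t_0$, which underpins the link to the original \HR{} Gaussian triangular-array construction, and the converse statement that \eqref{eq:conditionThetaReps2} is also necessary for \eqref{eq:ThetaRepPinv} even for asymmetric $S$, a claim the surrounding text advertises but which your argument does not address (nor does the proposition require it).
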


Since the mapping $\gamma$ preserves (a)symmetry
and $\Gamma$ is symmetric,
the condition $\farris{S} = \Gamma$
requires $S$ to be a symmetric matrix as well.
It turns out that,
restricted to symmetric matrices,
\cref{eq:conditionThetaReps2}
is in fact equivalent to $\farris{S} = \Gamma$;
see \cref{lemma:farrisSeqG}.

In practice,
the matrix $S$ is usually a (definite or semi-definite)
covariance matrix,
so it is natural to only consider symmetric matrices here.
For the sake of completeness 
we show in the proof of \cref{prop:ThetaReps}
that \cref{eq:conditionThetaReps2}
is a sufficient and necessary
condition for \cref{eq:ThetaRepPinv},
also allowing asymmetric matrices $S$.
\cref{lemma:SisGeneralizedInverse} shows that \cref{eq:conditionThetaReps2}
is further equivalent to $S$ being a generalized inverse of $\Theta$
in the sense $\Theta S \Theta = \Theta$.

In light of \cref{eq:conditionThetaReps2}, %
an obvious choice of $S$ for a given $\Gamma$ is setting $S=-\half \Gamma$,
while other interesting choices for $S$ are the matrices $\SigmaT\k = \phikt{\Gamma}$, defined in \cref{eq:defPhiKSigmaTK}.
Furthermore, these matrices satisfy 
$\ID \lr{-\half\Gamma} \ID
    =
    d^{-1} \sum_{k=1}^d \tilde\Sigma^{(k)} - t(\Gamma) \oneVecTT$
where $t(\Gamma) = \half d^{-2} \oneVec\T \Gamma \oneVec$ is the largest scalar $t$ such that
$d\inv \sum_{k=1}^d \tilde{\Sigma}^{(k)} - t \oneVec \oneVec\T$ is \PSD{}.
Further connections between $\Gamma$, $\tilde{\Sigma}^{(k)}$ and $\Theta$ are investigated in \citet[Propositions~3.1 and~3.2]{wan2023graphical}.

To better understand representation~\cref{eq:ThetaRepPinv},
let $\Pcal_d^\oneVec \subset \Rdd$ denote the set of symmetric \PSD{} matrices with kernel equal to $\laSpan{\set{\oneVec}}$.
Let $\sigma$ and $\theta$ be the mappings from a variogram matrix $\Gamma$
to the corresponding matrices $\Sigma$ and $\Theta$, respectively:
\begin{equation}
    \begin{aligned}
        \label{eq:thetaSigma}
        \sigma: \;
        &\Gamma \longmapsto
        \Sigma := \Proj \lr{-\half\Gamma} \Proj
        , \\
        \theta: \;
        &\Gamma \longmapsto
        \Theta := \Sigma\pinv = \sigma(\Gamma)\pinv = \lr{\ID\lr{-\half\Gamma}\ID}\pinv
        .
    \end{aligned}
\end{equation}
\begin{proposition}
    \provenLabel{prop:thetaHomeo}
    The mappings $\sigma$ and $\theta$ are homeomorphisms between $\Dcal_d$ and $\Pcal_d^\oneVec$.
    Using $\gamma(\cdot)$ defined in \cref{eq:FarrisTransform}, their inverses are
    \begin{alignat*}{3}
        \sigma\inv\lr{\Sigma}
        &=
        \farris{\Sigma}
        , \qquad
        \theta\inv\lr{\Theta}
        &=
        \farris{\Theta\pinv}
        .
    \end{alignat*}
\end{proposition}

From this proposition it follows that
the matrix $\Theta$ from \cref{def:Theta} is always \PSD{} with kernel equal to $\laSpan{\set{\oneVec}}$.
Furthermore, the class of \HR{} distributions,
parametrized by the set $\Dcal_d$ of \CND{} matrices,
can just as well be parametrized by $\Pcal_d^\oneVec$,
interpreted either in the role of $\Theta$ or $\Sigma$.
As discussed after \cref{prop:HrDensitySimple}, the matrix
$\Sigma$ is the degenerate covariance matrix of a particular transformation of the \HRP{} vector $Y$.
Similarly to the Gaussian case, the precision matrix $\Theta$ can be obtained
from this covariance matrix as $\Theta = \Sigma\pinv$,
using the \MP{} inverse due to its singularity.

For a given $\Gamma$,
the characterizations in \cref{prop:ThetaReps} give a straightforward way to compute $\Theta$ in \cref{def:Theta} as $\Theta = \theta(\Gamma) = \lr{\ID\tlr{-\half\Gamma}\ID}\pinv$
and thus retrieve the conditional independence structure of the corresponding extremal graphical model.
In a Gaussian model,
another property of the precision matrix is
that it can be used to express its probability density function in a concise way.
The following result shows that a similar expression is also possible for \HR{} distributions.

\begin{proposition}
    \provenLabel{prop:HrDensitySimple}
    Let $\Gamma \in \Dcal_d$ and $\Theta = \theta\lr{\Gamma}$.
    Then the exponent measure density $\lambda(\,\cdot\,;\Gamma)$
    from \cref{def:HRdensity} can be expressed as
    \begin{align}
        \lambda(y; \Theta)
        &=
        c_1
        \cdot
        c_2
        \cdot
        \expF{
            -\half
            y\T \Theta y
            +
            y\T r_\Theta
            -
            y\T\eV
        }
        , \qquad
        y \in \Rd
        ,
    \end{align}
    with $r_\Theta = -\half \Theta\Gamma\eV$,
    $c_1 = \lr{2\pi}^{-(d-1)/2}\lr{d\inv\pdet{\Theta}}^{1/2}$,
    and
    $c_2 = \expF{ - \tfrac{1}{8} \eV\T \lr{ \Gamma \Theta \Gamma + 2\Gamma } \eV }$.
\end{proposition}

This expression is remarkably similar to 
a degenerate Gaussian density on the hyperplane $\set{\oneVec}^\perp$,
differing only by the value of the constant and the term $\expF{-y\T\eV}$.
In fact,
when restricting the support to the half-space $\setM{x \in \Rd}{\oneVecT x \geq 0}$,
this density is proportional to the probability density function of a random variable
$Y_\oneVec = W_\Sigma + R \oneVec$,
with $R \sim \ExpF{1}$
and $W_\Sigma$ a degenerate normal vector with covariance matrix
$\Sigma = \Proj \lr{-\half\Gamma} \Proj$, independent of $R$.

A careful look at other characterizations of $\lambda$ such as in \cite{wadsworthTawn2014},
or the slightly more general definitions of the \HR{} exponent measure density
in \cite{hoDombry2017} and \cite{kiriliouk2018POT},
shows that the precision matrix $\Theta$ appears naturally in these parameterizations as well.

\begin{remark}
    \label{remark:otherApplicationsTheta}
    Besides the extremal conditional independence structure and exponent measure density,
    the precision matrix $\Theta$ is 
    also useful to describe other stochastic properties of \HR{} distributions.
    Based on the results of the present paper,
    \cite{roettger2023} and \cite{roettgerSchmitz2023}
    show that multivariate total positivity of order two, a notion of positive dependence, can be encoded as 
    $\Theta_{ij} \leq 0$ for all $i\neq j$.
    \cite{engelke2022a} suggest that the precision matrix can be used to estimate extremal graphical structures by penalizing its $L^1$-norm $\| \Theta \|_1$,
    and a similar approach is taken in \cite{wan2023graphical} and \cite{lederer2023extremes}.
    Compared to the Gaussian case,
    a difficulty of dealing with $\Theta$ in mathematical derivations and statistical implementations is that it is
    not invertible.
\end{remark}

\section{Matrix completion problems}
\label{sec:matcomprob}
\label{sec:completion}
A well-studied problem related to Gaussian graphical models is
the task of constructing
a model with a given graphical structure and specified marginal distributions on the fully connected
subsets of vertices,
or equivalently, to complete a partially defined covariance matrix
such that its precision matrix has zeros where there are no edges in the specified graph
\citep{speedKivery1986,bakonyi2011}.
For extremal \HR{} models,
the same problem can be posed,
expressed as a matrix completion problem on the variogram and precision matrix.

To formalize the notion of a partially specified matrix, we define the set
$\ROd[] := \Rd[] \cup \set{\undefined}$,
consisting of the real numbers and the placeholder
$\undefined$ for unspecified values
\citep[see e.g.,][for this use of $\undefined$]{bakonyi2011}.
For an undirected graph $G=(V,E)$,
a matrix is said to be \quotes{specified on $G$}
if it is specified on the diagonal
and the entries corresponding to the edges of $G$.
A matrix is said to be \quotes{\PCND{}}
if it is symmetric, its diagonal is fully specified,
and all fully specified principal submatrices are \CND{}
(a principal submatrix is any submatrix obtained by removing the same index set from both the columns and rows of the matrix).
In computations involving partially specified matrices,
an entry in the result is $\undefined$
as soon as any of the entries used to compute it
is itself $\undefined$.
For definitions of graph theoretical concepts used in this section,
we refer to \cref{sec:graphTheory}.

\begin{definition}
    \label{def:matrixCompletionProblem}
    Let $G=(V,E)$ be an undirected graph and let $\mathring{\Gamma} \in \ROdd$ be a partially conditionally negative definite matrix, specified on $G$. The corresponding matrix completion problem is to
    find a \CND{} matrix $\Gamma \in \Dcal_d$
    and $\Theta = \ppinv{\Gamma}$ such that
    \begin{gather}
    \begin{alignedat}{2}
        \label{eq:matrixCompletionProblem}
        \Gamma_{ij}
        &=
        \GammaO_{ij}
        \qquad &&
        \forall (i,j) \in \Ebar
        ,
        \\
        \Theta_{ij}
        &=
        0
        \qquad &&
        \forall (i,j) \notin \Ebar
        ,
    \end{alignedat}
    \end{gather}
    where $\Ebar$ denotes the edge set $E$
    augmented by the diagonal entries $\setM{(i,i)}{i \in V}$.
\end{definition}

An example for such a matrix completion problem on the graph in
\autoref{subfig:exampleBlock} is given in \cref{eq:simpleGammaThetaPartial}.
In the Gaussian case, a similar problem can be posed with the
covariance matrix $\Sigma$ instead of $\Gamma$ and with $\Theta = \Sigma\inv$.
To this \PD{} completion problem,
an explicit solution for decomposable graphs
and a convergent algorithm for general graphs is given
in \cite{speedKivery1986}.
In this section we discuss semi-definite matrix completion problems for \HR{} models as in \cref{def:matrixCompletionProblem}, starting from simple graph structures such as trees and finishing
with general graphs.
We assume throughout that the graph $G$ is connected since only those can be associated to non-degenerate \HR{} models;
see \cref{subsec:extremalCondInd}.

\begin{example}
Block graphs are simple graph structures where the separator sets contain only single nodes.
Trees are a special case of this class where 
all cliques consist of exactly two nodes.
The particular structure of block graphs makes them appealing for the construction of parametric models and for statistical inference \citep{engelke2020, asenova2021inference}. In fact, this structure also yields a simple explicit solution to 
the matrix completion problem in \cref{def:matrixCompletionProblem}.
If $\GammaO$ is a \PCND{} %
matrix on the connected block graph $G = (V,E)$,
then a unique completion exists \citep[][Proposition~4]{engelke2020} and can be expressed as
\begin{align}
    \label{eq:tree_metric}
    \Gamma_{ij}
    =
    \sum_{(s,t) \in \text{path}(i,j)} \GammaO_{st}
    ,
\end{align}
where $\text{path}(i,j)$ is the unique shortest path between $i$ and $j$ in $G$; see also \cite{engelke2020a} and \cite{asenova2021extremes}.
Using this result, the missing entries in \cref{eq:simpleGammaThetaPartial}
can be computed to be $\Gamma_{13} = \Gamma_{31} = 13$ and $\Gamma_{34} = \Gamma_{43}=12$.
\end{example}

\subsection{Decomposable graphs}

In this section, a solution to the matrix completion problem will be given for connected, decomposable graphs.
First, consider the simplest (non-trivial) example from this class of graphs, a graph consisting of exactly two cliques.
Recall
$\Sigma\k$, $\SigmaT\k$ and $\varphi_k$ from \cref{eq:defPhiKSigmaTK},
$\gamma$~from \cref{eq:FarrisTransform},
and observe that %
$\varphi_k\inv\tlr{\Sigma\k} = \gamma\tlr{\SigmaT\k}$.

\begin{lemma}
    \provenLabel{lemma:completeTwoCliquesGamma}
    Let $G=(V,E)$ be a connected decomposable graph consisting of two cliques $C_1$ and $C_2$,
    separated by $D_2 = C_1 \cap C_2 \neq \emptyset$.
    Let $\GammaO$ be a \PCND{} matrix, specified on $G$.
    For some $k\in D_2$, let $\SOk = \phik{\GammaO}$
    and let $\Sigma\k$ be its unique \PD{} completion with graphical structure $\restrictTo{G}{V\without{k}}$.
    Then
    \begin{align*}
        \Gamma
        &:=
        \varphi_k\inv
        \tlr{
            \Sigma\k
        }
        =
        \gamma\tlr{\SigmaT\k}
        ,
    \end{align*}
    is the unique solution of the matrix completion problem~\ref{def:matrixCompletionProblem}
    for $\GammaO$ and graph $G$.
\end{lemma}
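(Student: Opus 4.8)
The plan is to reduce the \HR{} completion problem to the classical Gaussian positive-definite completion problem of \cite{speedKivery1986,bakonyi2011} by transporting every ingredient through the map $\varphi_k$. The crucial structural fact is that, since $k \in D_2 = C_1 \cap C_2$ and the cliques $C_1, C_2$ cover $V$, the node $k$ is adjacent in $G$ to every other node; consequently $\GammaO_{ik}$ is specified for all $i \neq k$, and no non-edge of $G$ can involve $k$. First I would write $\SOk = \varphi_k(\GammaO)$ entrywise as $\SOk_{ij} = \half(\GammaO_{ik} + \GammaO_{jk} - \GammaO_{ij})$ and observe that $\SOk_{ii} = \GammaO_{ik}$ is always specified while the off-diagonal entry $\SOk_{ij}$ is specified exactly when $\GammaO_{ij}$ is, i.e.\ when $(i,j) \in E$. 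Thus $\SOk$ is a partially specified matrix on the induced subgraph $H := \restrictTo{G}{V\without{k}}$, whose maximal cliques are $C_1 \without{k}$ and $C_2 \without{k}$.

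Next I would check the hypotheses of the Gaussian completion theorem for the chordal graph $H$ (induced subgraphs of decomposable graphs are decomposable). The specified principal submatrix of $\SOk$ indexed by $C_r \without{k}$ equals $\varphi_k$ applied to the fully specified, \CND{} submatrix $\restrictTo{\GammaO}{C_r}$, hence is \PD{} because $\varphi_k$ maps $\Dcal_{|C_r|}$ into the positive-definite cone (this is exactly what makes $\Theta\k = (\Sigma\k)\inv$ well defined in \cref{def:HRdensity}). With every clique submatrix \PD{} on a decomposable graph, \cite{speedKivery1986,bakonyi2011} guarantee that the \PD{} completion $\Sigma\k$ of $\SOk$ with graphical structure $H$ exists and is unique, which legitimizes the object named in the statement.

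I would then pull $\Sigma\k$ back by setting $\Gamma := \varphi_k\inv(\Sigma\k) = \gamma(\SigmaT\k)$ and verify the defining properties of \cref{def:matrixCompletionProblem}. Membership $\Gamma \in \Dcal_d$ follows since $\SigmaT\k$ is \PSD{} with kernel $\laSpan{\set{\eV[k]}}$, so for $\zeroVec \neq v \perp \oneVec$ one computes $v\T\gamma(\SigmaT\k)v = -2\,v\T\SigmaT\k v < 0$ (cf.\ \cref{lemma:gammaMapsToCND}). Agreement on $\Ebar$ is a direct unfolding of $\gamma \circ \varphi_k\inv$: the diagonal vanishes on both sides; for an edge with $j = k$ the zero $k$th row of $\SigmaT\k$ gives $\Gamma_{ik} = \Sigma\k_{ii} = \GammaO_{ik}$; and for an edge $(i,j)$ with $i,j \neq k$ one substitutes $\Sigma\k_{ii} = \GammaO_{ik}$, $\Sigma\k_{jj} = \GammaO_{jk}$, $\Sigma\k_{ij} = \SOk_{ij}$ to recover $\GammaO_{ij}$. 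For the precision constraint, note that $\varphi_k(\Gamma) = \Sigma\k$ by construction, so the matrix attached to $\Gamma$ in \cref{def:Theta} is $\Theta\k = (\Sigma\k)\inv$, and $\Theta_{ij} = \Theta\k_{ij}$ whenever $i,j \neq k$. Since every non-edge $(i,j) \notin \Ebar$ avoids $k$ and is therefore also a non-edge of $H$, the Gaussian completion property $\Theta\k_{ij} = 0$ transfers to $\Theta_{ij} = 0$.

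Finally, for uniqueness I would run the transport in reverse: given any solution $\Gamma'$, set $\Sigma' := \varphi_k(\Gamma')$. This matrix is \PD{}, agrees with $\SOk$ on the specified entries (again using $(i,k),(j,k) \in \Ebar$), and $(\Sigma')\inv$ vanishes on the non-edges of $H$, because \cref{def:Theta} identifies its off-diagonal entries with those of $\theta(\Gamma')$, which are zero there. By the uniqueness half of the Gaussian completion theorem $\Sigma' = \Sigma\k$, and injectivity of $\varphi_k$ on $\Dcal_d$ forces $\Gamma' = \Gamma$. The main obstacle is the bookkeeping of the first paragraph: one must verify precisely that the $\varphi_k$-image translates the two-clique, separator-containing-$k$ structure of $G$ into the induced-subgraph structure of $H$, both for which entries are specified and for how the zero pattern of $\Theta$ corresponds to that of $\Theta\k$. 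The hypothesis $k \in D_2$ is exactly what guarantees this correspondence and is indispensable.
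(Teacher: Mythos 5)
Your construction is exactly the paper's: transport $\GammaO$ through $\varphi_k$, observe that $k \in D_2$ makes $k$ adjacent to every other node so that $\SOk$ is a partial matrix on $\restrictTo{G}{V\without{k}}$ with \PD{} clique blocks, complete it via the Gaussian two-clique result (\cref{lemma:completeTwoCliques}), and pull back with $\varphi_k\inv = \gamma(\SigmaT\k)$; your verification of agreement on $\Ebar$ and of the zero pattern via \cref{def:Theta} matches the paper's, just spelled out entrywise, with membership $\Gamma \in \Dcal_d$ checked through \cref{lemma:gammaMapsToCND}. Where you genuinely depart from the paper is uniqueness: you run the transport in reverse, showing that any solution $\Gamma'$ yields a \PD{} completion $\Sigma' = \varphi_k(\Gamma')$ of $\SOk$ whose inverse vanishes on the non-edges of $\restrictTo{G}{V\without{k}}$, and then invoke the uniqueness half of the Gaussian completion theorem plus injectivity of $\varphi_k$ (which the paper's proof itself records via the explicit inverse). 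The paper instead defers to \cref{cor:completionUnique}, which rests on the \KL{}-divergence machinery of \cref{lemma:divergence2}. Your route is more elementary and entirely self-contained within the two-clique setting; the paper's route proves a single uniqueness statement that simultaneously covers \cref{lemma:completeTwoCliquesGamma,prop:completeDecomposable,prop:completeGeneral}, so the non-decomposable case requires no additional argument later. Both are valid, and your existence argument is correct in all details.
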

The existence and uniqueness of such a \PD{} completion for general graph structures is shown in \cite{speedKivery1986},
and a short proof of the case with two cliques is given in \cref{lemma:completeTwoCliques}.
Using this result, a completion for a general decomposable graph $G$
can be constructed as follows.
Let $\set{C_1, \dots, C_N}$ be the cliques of $G$,
ordered according to the running intersection property (cf. \cref{sec:graphTheory}),
and, without losing generality, assume that the vertices in $V$ are ordered accordingly,
in the sense $i \leq j$ for all $i \in C_k$ and $j \in C_l$ with $k < l$.
Let $\Gamma_1 = \restrictTo{\GammaO}{C_1}$,
$V_n = C_1 \cup \hdots \cup C_n$,
and $K_n = V_n \times V_n$.
Further, for $n = 2, \dots, N$, iteratively define
$\GammaO_n \in \ROdd[\abs{V_n}]$ with entries
\begin{alignat*}{2}
    \tlr{\GammaO_n}_{ij}
    &=
    \myCases{
        \myCase{
            \lr{\Gamma_{n-1}}_{ij}
        }{
            (i,j) \in K_{n-1}
            ,
        }
        \\
        \myCase{
            \GammaO_{ij}
        }{
            (i,j) \in \restrictTo{\Ebar}{V_n} \backslash K_{n-1}
            ,
        }
        \\
        \myCase{
            \undefined
        }{
            \text{otherwise}
            ,
        }
    }
\end{alignat*}
and $\Gamma_n = \comp{}\tlr{\GammaO_n} \in \Rdd[\abs{V_n}]$,
where $\comp{}$ denotes the completion from \cref{lemma:completeTwoCliquesGamma}.

\begin{proposition}
    \provenLabel{prop:completeDecomposable}
    Let $G=(V,E)$ be a connected, decomposable graph and $\GammaO$
    a \PCND{} matrix, specified on $G$.
    Then there exists a unique solution to the matrix completion problem
    \ref{def:matrixCompletionProblem} with $\GammaO$ and $G$.
    This solution can be computed explicitly
    as $\Gamma_N$ in the construction above.
\end{proposition}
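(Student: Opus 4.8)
The plan is to prove existence and uniqueness simultaneously by induction on the number $m$ of maximal cliques, following the running-intersection ordering $C_1,\dots,C_m$ and applying the two-clique completion of \cref{lemma:completeTwoCliquesGamma} at each step, exactly as in the construction preceding the statement. Write $V_i = C_1\cup\dots\cup C_i$, $C' = V_{i-1}$, and let $S_i = C_i\cap C'$ be the $i$-th separator, which is nonempty because $G$ is connected. I would carry an induction hypothesis slightly stronger than the bare claim: that at stage $i$ the matrix $\Gamma^{(i)}$ produced by the procedure is the unique solution of the matrix completion problem of \cref{def:matrixCompletionProblem} for $\restrictTo{\GammaO}{V_i}$ on the induced subgraph $\restrictTo{G}{V_i}$. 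The case $m=1$ is immediate, since then the graph is complete, $\GammaO$ is fully specified and \CND{} by the \PCND{} hypothesis, and the constraint on $\Theta$ is vacuous; the case $m=2$ is precisely \cref{lemma:completeTwoCliquesGamma}.

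For the inductive step I would first verify that the two blocks handed to the Lemma are consistent. The $C'\times C'$ block is the fully specified \CND{} matrix $\Gamma^{(i-1)}$ from the hypothesis; the $C_i\times C_i$ block is the fully specified principal submatrix $\restrictTo{\GammaO}{C_i}$, which is \CND{} because $\GammaO$ is \PCND{} and $C_i$ is a clique; and the two must agree on $S_i\times S_i$. This is exactly where the running-intersection property enters: $S_i$ is contained in a single earlier clique, so every pair in $S_i\times S_i$ is a specified edge of $\GammaO$, and specified entries are preserved through all earlier applications of the Lemma, so both blocks read off the same values there. Applying \cref{lemma:completeTwoCliquesGamma} to the two-clique graph $G_i$ with cliques $C'$ and $C_i$ then produces $\Gamma^{(i)}$, which preserves all specified entries of $\restrictTo{\GammaO}{V_i}$ and restricts to $\Gamma^{(i-1)}$ on $C'$.

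The crux is that $\Theta^{(i)} = \theta(\Gamma^{(i)})$ must vanish on every non-edge of the fine graph $\restrictTo{G}{V_i}$, whereas the Lemma only certifies the coarse structure of $G_i$. The non-edges of $G_i$, joining $C'\setminus S_i$ to $C_i\setminus S_i$, are returned by the Lemma directly; the remaining non-edges lie inside $C'$, and the main obstacle is that attaching $C_i$ might disturb the within-$C'$ zeros, since $\theta$ is the strongly nonlinear \MP{} inverse of a projection. I would dispose of this through an additive precision identity read off from the decomposable factorization. As $\Theta^{(i)}$ is graphical on the two-clique graph $G_i$, the exponent measure density on $V_i$ factorizes as $\lambda\,\lambda_{S_i} = \lambda_{C'}\,\lambda_{C_i}$; matching the quadratic exponents of \cref{prop:HrDensity} — polynomials that agree on an open subset of the support and hence everywhere — gives the matrix identity $\Theta^{(i)} = [\Theta_{C'}]^0 + [\Theta_{C_i}]^0 - [\Theta_{S_i}]^0$, with $[\,\cdot\,]^0$ denoting zero-padding to $V_i$ and the three summands the precision matrices of the respective marginals. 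Since $\restrictTo{\Gamma^{(i)}}{C'} = \Gamma^{(i-1)}$, the hypothesis and \cref{prop:thetaHomeo} give that $\Theta_{C'}$ already carries all within-$C'$ non-edge zeros, while the corrections $[\Theta_{C_i}]^0$ and $[\Theta_{S_i}]^0$ meet $C'$ only in $S_i\times S_i$. As $S_i$ sits inside a clique, every within-$C'$ non-edge has an endpoint outside $S_i$ and is left untouched; hence $\Theta^{(i)}$ inherits these zeros and vanishes on all non-edges of $\restrictTo{G}{V_i}$, closing the existence half and mirroring the Gaussian completion mechanism of \cite{speedKivery1986}.

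For uniqueness I would run the induction in reverse. Given any solution $\Gamma$ on $G$, its precision has the prescribed zeros joining $C_m\setminus S_m$ to the rest, the extremal Markov statement that identifies the sub-variogram $\restrictTo{\Gamma}{V_{m-1}}$ as a completion of $\restrictTo{\GammaO}{V_{m-1}}$ on $\restrictTo{G}{V_{m-1}}$; the hypothesis then forces this restriction to equal $\Gamma^{(m-1)}$, and the uniqueness clause of \cref{lemma:completeTwoCliquesGamma} pins down the entries involving $C_m\setminus S_m$. The one technical point to secure is that passing to $\restrictTo{\Gamma}{V_{m-1}}$ preserves both membership in $\Dcal_{|V_{m-1}|}$ (immediate, as principal submatrices of \CND{} matrices are \CND{}) and the required zero pattern of its precision; this is cleanest through the degenerate-Gaussian marginalization behind \cref{cor:HrDecompositionV}, where deleting the simplicial set $C_m\setminus S_m$ is a Schur-complement operation that preserves the decomposable precision structure.
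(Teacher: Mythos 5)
Your proposal is correct and follows the same outer scaffolding as the paper's proof: induction along a running-intersection ordering of the cliques, applying \cref{lemma:completeTwoCliquesGamma} at every step, with the running intersection property guaranteeing that the two blocks handed to the lemma agree on $S_i \times S_i$ and form a \PCND{} partial matrix. Where you genuinely diverge is in the two key technical steps. For the preservation of the within-$C'$ zeros, the paper invokes \cref{lemma:reverseMarginalPreservesZeros}: it composes the \HC{}-type factorizations of \cite{engelke2020} (first $\lambda\,\lambda_{S_i}=\lambda_{C'}\lambda_{C_i}$, then the factorization of $\lambda_{C'}$ over the cliques of the finer graph) to conclude the pairwise Markov property relative to the finer graph, and hence the zeros via the equivalence~\eqref{theta0}. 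You instead extract from the same factorization an explicit additive precision identity $\Theta^{(i)} = [\Theta_{C'}]^0 + [\Theta_{C_i}]^0 - [\Theta_{S_i}]^0$ by matching quadratic exponents in \cref{prop:HrDensity}, and read the zeros off linearly; this is sound (the polynomial-identity step is legitimate since the densities agree on an open set), is more quantitative, and the identity is a useful by-product, but it additionally presupposes that marginals of a \HR{} vector are \HR{} with the corresponding sub-variograms, a fact the paper's route never needs in this explicit form. For uniqueness, the paper applies the \KL{}-divergence Pythagorean argument of \cref{lemma:divergence2} once, to the whole graph, via \cref{cor:completionUnique}; you instead run a reverse induction that peels off the last clique, using the uniqueness clause of \cref{lemma:completeTwoCliquesGamma} together with the claim that marginalizing out $C_m\setminus S_m$ preserves the graphical structure on $V_{m-1}$. (Note that in the paper the two-clique lemma's uniqueness is itself proved by \cref{cor:completionUnique}, so your argument is only independent of the \KL{} machinery if that lemma is taken as a black box, which is legitimate here.) The marginalization claim is the one place your write-up stays informal: it is true, and can be made rigorous either by applying your own additive identity to the two-clique coarsening of $G$ with cliques $V_{m-1}$ and $C_m$, or by a Schur-complement computation based on the limit representation~\cref{eq:ThetaRepLimit}, showing $\theta\lr{\Gamma_{A\times A}} = \Theta_{AA}-\Theta_{AB}\Theta_{BB}\inv\Theta_{BA}$ for $A = V_{m-1}$, $B = C_m \setminus S_m$, and noting that the correction term vanishes on non-edges because every vertex of $B$ has all its neighbours inside $C_m$ and $S_m$ is complete. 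Neither difference is a gap; your route trades the paper's single global \KL{} argument for local, linear-algebraic ones, making the induction self-contained at the cost of having to establish these marginalization facts.
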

The class of decomposable graphs is a significant extension of the class of block graphs.
For instance,
a non-decomposable graph can be approximated in a non-trivial fashion
by computing a so-called decomposable completion.
In contrast, the only block graph completion of any biconnected graph
(i.e., a graph that remains connected after removal of any one vertex)
is already the complete graph,
since the only biconnected block graph is the complete graph,
and adding edges to a graph preserves connectivity.
See for instance \citet[][Section~5.3]{lauritzen1996} for further details about decomposable graphical models.

\begin{example}
	\label{ex:exampleDecomposableCompletion}
    \cref{fig:exampleDecomposableCompletion}
    shows an example of the matrix completion algorithm from \cref{prop:completeDecomposable}.
    Starting with the clique $\set{1, 2, 3}$,
    the missing values are computed clique by clique.
    Edges whose corresponding matrix entries were already computed,
    are considered as part of the graph in subsequent steps,
    such that each computation is a direct application of \cref{lemma:completeTwoCliquesGamma}.
    Thanks to the running intersection property,
    the required conditional independence structure is preserved.
\end{example}

\begin{figure}
    \center
    \begin{minipage}{.3\linewidth}
        \inputTikz{GD_01_graph}
    \end{minipage}
    \hspace{10pt}
    \begin{minipage}{.3\linewidth}
        \inputTikz{GD_02_graph}
    \end{minipage}
    \hspace{10pt}
    \begin{minipage}{.3\linewidth}
        \inputTikz{GD_03_graph}
    \end{minipage}
    \\
    \vspace{-15pt}
    \noindent\resizebox{1\textwidth}{!}{
        \begin{minipage}{1\textwidth}
            \begin{align*}
                \lr{\input{examples/GD_01_matrix.tex}}
                \mapsto
                \lr{\input{examples/GD_02_matrix.tex}}
                \mapsto
                \lr{\input{examples/GD_03_matrix.tex}}
            \end{align*}
        \end{minipage}
    }
    \caption{
        Illustration of \cref{ex:exampleDecomposableCompletion}.
        On the left-hand side is the initial partial matrix $\GammaO$ and the corresponding decomposable graph.
        In each of the following steps, dashed edges (top) correspond to newly computed matrix entries (underlined, bottom).
    }
    \label{fig:exampleDecomposableCompletion}
\end{figure}

\subsection{General graphs}

The class of general connected graphs is much larger than the class of decomposable ones; see \autoref{subfig:exampleGeneral} for an example. In applications, when the graph is estimated from data without restrictions, non-decomposable structures often arise.
The following results provide a more general
but slightly weaker solution to the matrix completion problem in \cref{def:matrixCompletionProblem},
where, as before, $\Ebar$ is equal to $E \cup \setM{(i, i)}{i \in V}$.

\begin{proposition}
    \provenLabel{prop:completeGeneral}
    Let $G = (V,E)$ be a connected graph
    and let
    $\GammaO \in \ROdd$ be specified on $G$,
    such that there exists a fully specified \CND{} matrix
    that agrees with $\GammaO$ on the entries $(i,j) \in \Ebar$.
    Then there exists a unique \CND{} matrix $\Gamma$
    that solves the matrix completion problem
    \cref{eq:matrixCompletionProblem}
    for $\GammaO$ and $G$.
\end{proposition}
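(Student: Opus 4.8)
The plan is to realize the completion as the unique solution of a strictly concave maximization problem over precision matrices, in direct analogy with the Gaussian covariance-completion theory. Fix any fully specified \CND{} matrix $\Gamma_0 \in \Dcal_d$ agreeing with $\GammaO$ on $\Ebar$; such a $\Gamma_0$ exists by hypothesis. Since $\theta$ is a homeomorphism between $\Dcal_d$ and $\Pcal_d^\oneVec$ (\cref{prop:thetaHomeo}), it is equivalent to search over $\Theta = \theta(\Gamma)$ in the closed convex cone $\overline{\Pcal} = \{\Theta \succeq 0 : \Theta\oneVec = \zeroVec\}$, whose relative interior is exactly $\Pcal_d^\oneVec$. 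I would maximize
\[
F(\Theta) = \log\pdet{\Theta} + \tfrac{1}{2}\tr(\Gamma_0\Theta), \qquad \text{subject to } \Theta_{ij} = 0 \text{ for } (i,j)\notin\Ebar .
\]
Because feasible $\Theta$ vanish off $\Ebar$ and variograms have zero diagonal, $\tr(\Gamma_0\Theta) = \sum_{(i,j)\in E}\GammaO_{ij}\Theta_{ij}$ depends only on the specified entries, so $F$ is independent of the chosen completion $\Gamma_0$ and the problem is genuinely posed in terms of $\GammaO$.

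\emph{Uniqueness.} Restricting every $\Theta\in\overline{\Pcal}$ to $\oneVec^\perp$ identifies $\overline{\Pcal}$ with the $(d-1)$-dimensional \PSD{} cone, under which $\pdet{\Theta}$ becomes an ordinary determinant. Hence $\Theta\mapsto\log\pdet{\Theta}$ is strictly concave on the relative interior, the linear term is affine, and $F$ is strictly concave on the convex feasible set, so it has at most one maximizer. \emph{Existence.} The feasible cone has nonempty relative interior since $G$ is connected: the Laplacian of any spanning tree lies in $\Pcal_d^\oneVec$ and is supported on $E$, so $F$ is finite somewhere. For coercivity, AM--GM gives $\log\pdet{\Theta}\le (d-1)\log\bigl(\tr(\Theta)/(d-1)\bigr)$, while the genuine conditional negative definiteness of $\Gamma_0$ yields $\tr(\Gamma_0\Theta) = \tr\bigl(\restrictTo{\Gamma_0}{\oneVec^\perp}\,\restrictTo{\Theta}{\oneVec^\perp}\bigr)\le -c\,\tr(\Theta)$ with $c = -\lambda_{\max}(\restrictTo{\Gamma_0}{\oneVec^\perp}) > 0$. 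Thus $F(\Theta)\to-\infty$ both as $\tr(\Theta)\to\infty$ and as $\Theta$ approaches the rank-deficient boundary where $\log\pdet{\Theta}\to-\infty$, so the supremum is attained at an interior point $\Theta^\star\in\Pcal_d^\oneVec$.

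\emph{Optimality and conclusion.} The tangent space of the feasible set is spanned over the edges $(i,j)\in E$ by the symmetric directions $H^{ij}$ having $+1$ in positions $(i,j),(j,i)$ and $-1$ in positions $(i,i),(j,j)$; note $H^{ij}\oneVec = \zeroVec$. Using that the directional derivative of $\log\pdet{}$ along a tangent $H$ equals $\tr(\Theta\pinv H)$, and writing $\Sigma^\star = (\Theta^\star)\pinv$, stationarity $\tr\bigl((\Sigma^\star + \tfrac12\Gamma_0)H^{ij}\bigr) = 0$ reduces to $2\Sigma^\star_{ij} - \Sigma^\star_{ii} - \Sigma^\star_{jj} = -\GammaO_{ij}$, i.e.\ $\farris{\Sigma^\star}_{ij} = \GammaO_{ij}$. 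Setting $\Gamma^\star = \theta\inv(\Theta^\star) = \farris{(\Theta^\star)\pinv}$, the constraint gives $\Theta^\star_{ij} = 0$ off $\Ebar$ and the stationarity identity gives $\Gamma^\star_{ij} = \GammaO_{ij}$ on $\Ebar$ (the diagonal matching trivially), so $\Gamma^\star$ solves \cref{eq:matrixCompletionProblem}. Conversely, any \CND{} solution $\Gamma$ maps under $\theta$ to a feasible interior point satisfying the same stationarity identities; by strict concavity it must equal $\Theta^\star$, and uniqueness of $\Gamma$ then follows from injectivity of $\theta$.

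The main obstacle is existence, concentrated in the coercivity step: the hypothesis that $\GammaO$ extends to a genuine \CND{} matrix is exactly what forces the linear term to decay linearly in $\tr(\Theta)$ and thereby dominate the merely logarithmic growth of $\log\pdet{\Theta}$, pinning the maximizer to the interior $\Pcal_d^\oneVec$ (equivalently, keeping $\Gamma^\star$ inside $\Dcal_d$ rather than on its boundary). Without that assumption the supremum could fail to be attained, which is why the general, non-decomposable case yields only this abstract existence statement rather than the explicit finite-step construction available for decomposable graphs.
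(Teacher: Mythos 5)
Your proof is correct, but it takes a genuinely different route from the paper's. The paper adapts the iterative scheme of Speed and Kiiveri: it introduces a \KL{} divergence between the degenerate Gaussian laws attached to variogram matrices (\cref{def:KLdivGamma}), establishes a Pythagorean decomposition and a compactness property for it (\cref{lemma:divergence2}), and shows in \cref{lemma:prop3SpeedKivery} that cyclically applying decomposable completions over a finite family of decomposable graphs covering $G$ yields a sequence converging to the desired completion, with the graph Laplacian of $G$ serving as the target precision matrix; uniqueness comes from the Pythagorean identity (\cref{cor:completionUnique}). You instead prove existence and uniqueness in one stroke by convex analysis on the constrained log-pseudo-determinant program: coercivity from strict conditional negative definiteness of $\Gamma_0$ (giving $\trace{\Gamma_0 \Theta} \le -c \, \trace{\Theta}$, which dominates the logarithmic growth of $\log\pdet{\Theta}$), attainment in the relative interior, strict concavity for uniqueness, and the stationarity identity $\farris{(\Theta^\star)\pinv}_{ij} = \GammaO_{ij}$ on $E$. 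In effect you prove \cref{prop:completionOptProblem} first and read off \cref{prop:completeGeneral} as a corollary; the paper's logical order is the reverse (its proof of \cref{prop:completionOptProblem} invokes \cref{prop:completeGeneral}), so your argument shows that dependency can be inverted without circularity. What the paper's route buys is the convergent algorithm of \cref{cor:completeGeneral} as a by-product and divergence machinery reused elsewhere; what yours buys is brevity, the standard covariance-selection template, and the likelihood interpretation built in. Two points you gloss over, both repairable and neither a genuine gap: the pseudo-determinant as defined in the paper is discontinuous at rank-deficient matrices, so the claim that $\log\pdet{\Theta} \to -\infty$ at the boundary should be phrased along sequences in the relative interior (the product of the $d-1$ largest eigenvalues tends to zero), or via an upper semicontinuous envelope; and the linear-decay bound needs the standard inequality $\trace{AB} \le \lambda_{\max}(A)\,\trace{B}$ for \PSD{} $B$, applied on $\set{\oneVec}^\perp$, where both restricted operators act.
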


This result provides the same theoretical existence of a unique solution
as \cref{prop:completeDecomposable}
and allows the definition of the following mapping.
We use the notation $\restrictTo{\Gamma}{G} \in \ROdd$
to denote the restriction of a fully specified matrix $\Gamma$
to a graph $G = (V, E)$,
in the sense
\begin{align}
    \label{eq:defRestrictToGraph}
    \lr{\restrictTo{\Gamma}{G}}_{ij}
    &=
    \myCases{
        \myCase{\Gamma_{ij}}{(i,j) \in \Ebar,} \\
        \myCase{\undefined}{(i,j) \notin \Ebar.}
    }
\end{align}

\begin{definition}
    \label{def:notationComp}
    For a connected graph $G=(V,E)$ let
    $
        \mathring\Dcal_G
        =
        \setM{
            \lr{\restrictTo{\Gamma'}{G}}
        }{
            \Gamma' \in \Dcal_d
        }
    $
    be the restriction of \CND{} matrices to $G$.
    The function
    \begin{alignat*}{3}
        \myFunctionDefinitionInline{\comp[G]{}}{
            \mathring\Dcal_G
        }{
            \Dcal_d
        }{
            \GammaO
        }{
            \Gamma
        },
    \end{alignat*}
    maps a partial matrix $\GammaO \in \mathring\Dcal_G$
    to its unique completion $\Gamma$
    satisfying the matrix completion problem in \cref{def:matrixCompletionProblem}
    with respect to $G$.
\end{definition}

The uniqueness of this completion
can be understood intuitively by counting the number of parameters and constraints involved in \cref{eq:matrixCompletionProblem}.
Notably, the existence of any \CND{} completion of $\GammaO$
is sufficient for the existence of a graphical completion.
In the decomposable case, this can be guaranteed by verifying the
definiteness of all fully specified principal submatrices,
but in the general case this criterion does not work,
as the following counter-example shows.
\begin{example}
    \provenLabel{exmp:noCompletion}
    For $d \geq 4$, let $\GammaO \in \ROdd$ be a partial matrix on the $d$-dimensional ring graph with entries
    $\GammaO_{ij} = 1$ if $\abs{i-j}=1$, $\GammaO_{1d} = \GammaO_{d1} = (2d)^2$,
    zeros on the diagonal, and $\undefined$ elsewhere.
    Then all fully specified principal submatrices of this $\GammaO$ are \CND{},
    but there exists no \CND{} completion of the entire matrix
    (see \cref{subsubsec:proofNonCompletableGraphs} for details).
\end{example}

In general, computing the completion is less straightforward than
in the decomposable case,
but the following algorithm follows from the proof of \cref{prop:completeGeneral}.
This procedure is similar to the one described in \citet[Section~4]{speedKivery1986},
using the above results instead of \PD{} matrix completions.
\begin{corollary}
    \label{cor:completeGeneral}
    Let $G$ and $\GammaO$ be as in \cref{prop:completeGeneral},
    and let $\Gamma_0\in \mathcal D_d$ be such that $\restrictTo{\lr{\Gamma_0}}{G} = \GammaO$.
    For some fixed $m \in \mathbb N$,
    let $G_i = (V, E_i)$, $i=1, \dots, m$, be decomposable graphs
    such that $\bigcap_i E_i = E$ and define $(\Gamma_n)_{n \ge 1}$ recursively by
    \begin{align*}
        \Gamma_n
        &=
        \comp[G_{t_n}]{
            \restrictTo{\lr{\Gamma_{n-1}}}{G_{t_n}}
        }
        ,
    \end{align*}
    with $t_n \in \set{\oneToX{m}}$, $t_n \equiv n \mod m$, and
    $\comp[G_{t_n}]{}$ computed as in \cref{prop:completeDecomposable}.
    Then $\Gamma_n$ converges to the unique completion $\comp[G]{\GammaO}$ as $n\to \infty$.
\end{corollary}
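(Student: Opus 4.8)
The plan is to recognize the recursion as a cyclic information projection and to invoke the generalized Pythagorean theorem for Kullback--Leibler divergences. By \cref{prop:thetaHomeo} it is equivalent to track the precision matrices $\Theta_n = \ppinv{\Gamma_n} \in \Pcal_d^\oneVec$, and by \cref{cor:HrDecompositionV} the Kullback--Leibler divergence $\KLdiv{\Gamma_1 \,\|\, \Gamma_2}$ between the two \HR{} distributions with variograms $\Gamma_1,\Gamma_2$ reduces to the divergence between the degenerate Gaussian parts with covariances $\Theta_i\pinv$. Concretely this is the Bregman divergence generated by the convex potential $\Theta \mapsto -\log\pdet{\Theta}$ on $\Pcal_d^\oneVec$, the log-partition function of the \HR{} exponential family with natural parameter $\Theta$. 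First I would record that $\KLdiv{\Theta_1 \,\|\, \Theta_2} \ge 0$ with equality iff $\Theta_1 = \Theta_2$, that it has compact sublevel sets, and that it dominates the Euclidean distance on $\Pcal_d^\oneVec$.

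Next I would describe each iteration as a projection. The constraint set $\Cli_t := \set{\Theta \in \Pcal_d^\oneVec : \Theta_{ij} = 0 \text{ for } (i,j) \notin \Ebar_t}$ is a linear, hence affine, family in the natural parameter $\Theta$, and the variogram entries on $\Ebar_t$ are the conjugate (mean) coordinates. The completion $\comp[G_t]{\restrictTo{\Gamma_{n-1}}{G_t}}$ returns the unique $\Gamma_n$ with $\Theta_n \in \Cli_t$ and $\Gamma_n = \Gamma_{n-1}$ on $\Ebar_t$; a short computation using $\Proj\Theta\Proj=\Theta$ shows that matching the edge values while lying in $\Cli_t$ is exactly the optimality condition for $\max_{\Theta \in \Cli_t}\set{\log\pdet{\Theta} + \half\tr(\Gamma_{n-1}\Theta)}$, equivalently $\min_{\Theta \in \Cli_t}\KLdiv{\Theta_{n-1} \,\|\, \Theta}$, whose maximizer is the explicit decomposable completion of \cref{prop:completeDecomposable}. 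Two consequences follow. Since $E = \bigcap_t E_t$ gives $\Ebar = \bigcap_t \Ebar_t$, every projection preserves the entries on $\Ebar \subseteq \Ebar_t$, so by induction $\Gamma_n = \GammaO$ on $\Ebar$ for all $n$; moreover $\bigcap_t \Cli_t$ equals the set of precision matrices with zeros off $\Ebar$, which contains $\Theta^* := \ppinv{\comp[G]{\GammaO}}$. Second, the generalized Pythagorean identity for Bregman projection onto the affine set $\Cli_t$ yields, for every $\Theta' \in \Cli_t$,
\begin{align*}
    \KLdiv{\Theta_{n-1} \,\|\, \Theta'}
    =
    \KLdiv{\Theta_n \,\|\, \Theta'}
    +
    \KLdiv{\Theta_{n-1} \,\|\, \Theta_n}
    .
\end{align*}

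Taking $\Theta' = \Theta^* \in \Cli_t$ shows that $\KLdiv{\Theta_n \,\|\, \Theta^*}$ is nonincreasing in $n$ and that $\sum_n \KLdiv{\Theta_{n-1} \,\|\, \Theta_n} < \infty$, so $\KLdiv{\Theta_{n-1} \,\|\, \Theta_n} \to 0$ and hence $\Theta_n - \Theta_{n-1} \to 0$. I would then use compactness of the sublevel sets to extract a convergent subsequence $\Theta_{n_l} \to \bar\Theta$; since each residue class modulo $m$ recurs within every window of $m$ steps and consecutive iterates coalesce, closedness of each $\Cli_t$ forces $\bar\Theta \in \bigcap_t \Cli_t$, while $\Gamma_n = \GammaO$ on $\Ebar$ passes to the limit. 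By the uniqueness in \cref{prop:completeGeneral}, $\bar\Theta = \Theta^*$. Continuity of $\KLdiv{\cdot \,\|\, \Theta^*}$ then forces its monotone limit to be $0$, whence $\Theta_n \to \Theta^*$ and, by the homeomorphism of \cref{prop:thetaHomeo}, $\Gamma_n \to \comp[G]{\GammaO}$.

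The main obstacle I anticipate is making the exponential-family and Bregman apparatus rigorous in the rank-deficient setting $\Pcal_d^\oneVec$: one must verify that $\Theta \mapsto -\log\pdet{\Theta}$ is smooth and strictly convex relative to the manifold of positive semidefinite matrices with kernel exactly $\laSpan{\set{\oneVec}}$, that the Pythagorean relation holds there as an equality rather than an inequality, and that the iterates cannot drift to the boundary where the kernel enlarges, which is precisely what secures the compactness used in the limit-point argument. Identifying the completion of \cref{prop:completeDecomposable} with the edge-moment-matching projection, once the singular null space is accounted for in the first-order conditions, is the remaining technical point but should be routine once the divergence is in place.
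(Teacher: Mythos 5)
Your proposal is correct and follows essentially the same route as the paper: the recursion is handled by the cyclic KL-projection argument of Speed--Kiiveri adapted to the semi-definite setting, where your ``generalized Pythagorean identity'' is exactly the additivity property in \cref{lemma:divergence2} (applied with the completion preserving edge entries of $\Gamma$ and zero non-edge entries of $\Theta$), and the summability, compact-sublevel-set, subsequence-coalescence, and uniqueness steps mirror the proof of \cref{lemma:prop3SpeedKivery}. The only cosmetic differences are that you telescope against $\Theta^\ast$ itself (available since the corollary presupposes \cref{prop:completeGeneral}) where the paper uses the graph Laplacian of $G$ as reference, and you conclude via monotonicity of $\KLdiv{\Theta_n \vert \Theta^\ast}$ rather than the paper's all-subsequences argument.
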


\begin{remark}
    \label{remark:generalCompletion}
    An easy way to construct a suitable set of graphs $\set{G_1,\ldots,G_m}$ is to
    put $\set{e_1, \hdots, e_m} = \allEdges\setminus E$
    and use
    $E_i = \allEdges\setminus\set{e_i}$, where $\allEdges$ denotes the set of all possible edges, see \cref{sec:graphTheory}.
    However, this choice leads to slow convergence
    since each iteration only updates one entry in $\Gamma$
    while requiring inversion of a matrix of dimension  $(d-3)\times(d-3)$.
    Better performance can be achieved by choosing the $G_i$ to be decomposable completions of $G$,
    with separator sets being as small as possible; see \cite{baz2022} for details and further optimizations.
\end{remark}

In order to apply \cref{cor:completeGeneral},
an initial (non-graphical) completion of $\GammaO$,
to be used as $\Gamma_0$, is required.
The problem of finding such a matrix
is also known as the Euclidean distance matrix completion problem,
with solution algorithms for example in
\cite{bakonyiJohnson1995} and \cite{fangEtAl2011}.

\begin{example}
    \label{exmp:completionGeneral}
    To illustrate the algorithm from \cref{cor:completeGeneral},
    consider the matrix $\Gamma_0$ below,
    which does not have any non-trivial graphical structure,
    and its \quotes{completion} $\Gamma$,
    whose conditional independence structure is
    described by the graph in \cref{fig:compConv}.
    \begin{align*}
        \Gamma_0 =
        \lr{
            \begin{matrix}
     0 & 0.23 & \underline{0.08} & 0.09 & 0.21 \\
     0.23 & 0 & 0.14 & \underline{0.23} & \underline{0.19} \\
     \underline{0.08} & 0.14 & 0 & 0.11 & \underline{0.20} \\
     0.09 & \underline{0.23} & 0.11 & 0 & 0.16 \\
     0.21 & \underline{0.19} & \underline{0.20} & 0.16 & 0
\end{matrix}
        }
        \mapsto
        \Gamma =
        \lr{
            \begin{matrix}
     0 & 0.23 & \underline{0.17} & 0.09 & 0.21 \\
     0.23 & 0 & 0.14 & \underline{0.20} & \underline{0.35} \\
     \underline{0.17} & 0.14 & 0 & 0.11 & \underline{0.26} \\
     0.09 & \underline{0.20} & 0.11 & 0 & 0.16 \\
     0.21 & \underline{0.35} & \underline{0.26} & 0.16 & 0
\end{matrix}
        }
    \end{align*}
    These two matrices differ only in the highlighted entries, corresponding to non-edges in $G$.
    During the computation of $\Gamma$,
    the entries of $\Gamma_0$ that correspond to edges in $G$ do not change,
    and the convergence of the remaining entries in $\Theta$ to zero is plotted in \cref{fig:compConv}.
    It can be seen that the entries corresponding to edges $(2,5)$ and $(3,5)$
    stay at zero (up to numerical precision of magnitude $10^{-15}$)
    after the first few iterations,
    whereas the maximum of entries $(2,4)$ and $(1,3)$ converges to zero at a slower rate.
    This observation is in line with the fact that vertices $\set{1, 2, 3, 4}$
    induce the chordless cycle that makes the graph non-decomposable.
\end{example}

\begin{figure}
    \begin{subfigure}[c]{.68\textwidth}
        \inputTikz{completionConvergence}
    \end{subfigure}
    \begin{subfigure}[c]{.3\textwidth}
        \inputTikz{completionConvergenceGraph}
    \end{subfigure}
    \setlength\abovecaptionskip{-1\baselineskip}
    \caption{
        Illustration of \cref{prop:completeGeneral} in \cref{exmp:completionGeneral}.
        A non-decomposable graph $G = (V,E)$ (right),
        and the convergence of $\Theta_{ij}$ to zero for $(i,j) \notin E$
        as the graphical completion of $\Gamma$ is computed (left).
        In each iteration,
        the graph defined by edge set
        $E_1 = E \cup \set{(1,3)}$
        or
        $E_2 = E \cup \set{(2, 4)}$ 
        is completed using \cref{prop:completeDecomposable}.
    }
    \label{fig:compConv}
\end{figure}

\section{Statistical inference}
\label{sec:statinf}

Estimation of a \HR{} parameter matrix $\Gamma$ that is an extremal graphical model on a given graph $G=(V,E)$
is currently restricted to the simple structures of trees \citep{engelke2020a, hu2022}, latent trees \citep{asenova2021inference} and block graphs \citep{engelke2020, asenova2021extremes}. In the discussion of \cite{engelke2020}, this has been pointed out as too restrictive in practice, since many data sets require more general graph structures.

In this section we solve this issue by applying our results on matrix completions for variograms. In particular, we show how any consistent estimator $\widehat \Gamma$ can be transformed into a consistent estimator $\widehat \Gamma^G$ with desired graph structure.
When the graph is unknown, it can be recovered by existing structure learning methods $\widehat G$. Our completion then produces the first estimator $\widehat \Gamma^{\widehat G}$ that jointly estimates the graph and the variogram matrix for general graphs $G$ in a consistent way.

\subsection{Matrix completion as likelihood optimization}

For a more statistical perspective on the matrix completions in Section~\ref{sec:matcomprob}, we first characterize them as constrained maximum likelihood estimators; see \citet{uhler2017} for the Gaussian case.
For a variogram matrix $\overline{\Gamma} \in \Dcal_d$ and a connected graph $G=(V,E)$,
we will show that the maximizer of the \PSD{} Gaussian log-likelihood
\begin{align}\label{logdet}
  \log \pdet{\Theta} + \half \trace{\overline{\Gamma} \Theta},
\end{align}
under suitable graph constraints, is equal to our completion operator $\comp[G]{}$ from \cref{def:notationComp} applied to $\restrictTo{\overline{\Gamma}}{G}$.
The connection to likelihood estimation for \HR{} distributions arises by choosing for $\overline{\Gamma}$ the empirical variogram $\widehat \Gamma$ \citep{engelke2020a}. In this case,~\cref{logdet}~is the (surrogate) log-likelihood of the \HR{} model parameterized in terms of the precision matrix $\Theta$ \citep[][Section 5.1]{roettger2023}; see \cref{subsec:completionAsOpt} for a simple derivation. Recall $\Ebar = E \cup \setM{(i,i)}{i \in V}$ for $E \subseteq \allEdges$ as well as the map $\theta$ and its inverse in \cref{prop:thetaHomeo}.

\begin{proposition}
    \provenLabel{prop:completionOptProblem}
    Let $\overline{\Gamma} \in \Dcal_d$ be a variogram matrix and $G=(V,E)$ a connected graph.
    Then $\comp[G]{\restrictTo{\overline{\Gamma}}{G}} = \theta^{-1}\tlr{\overline{\Theta}_G}$ where $\overline{\Theta}_G$ is the unique maximizer of~\cref{logdet} over all \HR{} precision matrices $\Theta \in \Pcal_d^{\oneVec}$ under the constraint
    that $\Theta_{ij} = 0$ for all $(i,j) \notin \Ebar$.
\end{proposition}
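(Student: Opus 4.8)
The plan is to recognize \eqref{logdet} as a strictly concave maximization over a convex set, extract its unique maximizer through a first-order stationarity condition, and then verify that this condition encodes exactly the two constraints of the matrix completion problem in \cref{def:matrixCompletionProblem}. Throughout write $\Qcal_G = \setM{\Theta\in\Pcal_d^\oneVec}{\Theta_{ij}=0\;\forall (i,j)\notin\Ebar}$ for the feasible set.

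First I would settle existence and uniqueness. Parametrizing $\Theta = P A P\T$ with the columns of $P\in\Rd[d\times(d-1)]$ an orthonormal basis of $\set{\oneVec}^\perp$ and $A\succ0$ gives $\pdet\Theta=\Det A$, so $\log\pdet\Theta=\log\Det(P\T\Theta P)$ is strictly concave in $\Theta$; as $\half\trace{\overline\Gamma\Theta}$ is linear, the whole objective is strictly concave, forcing uniqueness of any maximizer. For existence I would exploit that $\overline\Gamma\in\Dcal_d$ is strictly \CND{}: every eigenvector of $\Theta\in\Pcal_d^\oneVec$ lies in $\set{\oneVec}^\perp$, hence $\trace{\overline\Gamma\Theta}<0$, and the linear term diverges to $-\infty$ like $-c\norm{\Theta}$ while $\log\pdet\Theta$ grows only logarithmically; moreover $\log\pdet\Theta\to-\infty$ whenever a nonzero eigenvalue vanishes. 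The objective is therefore coercive on the closed convex cone-section $\set{\Theta\succeq0,\ \Theta\oneVec=\zeroVec,\ \Theta_{ij}=0\ \forall(i,j)\notin\Ebar}$ and attains its maximum at an interior, rank-$(d-1)$ point $\overline\Theta_G\in\Qcal_G$.

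Next I would write down stationarity. The feasible directions at $\overline\Theta_G$ form the linear space $\mathcal{T}=\setM{H=H\T}{H\oneVec=\zeroVec,\ H_{ij}=0\ \forall(i,j)\notin\Ebar}$, the condition $H\oneVec=\zeroVec$ being what preserves the fixed kernel $\set{\oneVec}$. Since the directional derivative of $\Theta\mapsto\log\pdet\Theta$ along such $H$ equals $\trace{(\overline\Theta_G)\pinv H}$ (again via the compression $P\T\Theta P$), optimality is equivalent to $\trace{M H}=0$ for all $H\in\mathcal{T}$, where $M=(\overline\Theta_G)\pinv+\half\overline\Gamma$. A symmetric $H$ with zero row sums supported on $\Ebar$ is determined by free values $h_{ij}=H_{ij}$, $(i,j)\in E$, with $H_{ii}=-\sum_{j\sim i}H_{ij}$; substituting and collecting terms yields
\[
  \trace{M H}
  =
  \sum_{(i,j)\in E}\lr{2M_{ij}-M_{ii}-M_{jj}}h_{ij}
  =
  -\sum_{(i,j)\in E}\farris{M}_{ij}\,h_{ij}.
\]
As the $h_{ij}$ range freely, stationarity is equivalent to $\farris{M}_{ij}=0$ for every edge $(i,j)\in E$. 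By linearity of $\gamma$, $\farris{M}=\farris{(\overline\Theta_G)\pinv}+\half\farris{\overline\Gamma}$; the first term equals $\Gamma:=\theta\inv(\overline\Theta_G)$ by \cref{prop:thetaHomeo}, and since $\overline\Gamma$ has zero diagonal $\farris{\overline\Gamma}=-2\overline\Gamma$, so $\farris{M}=\Gamma-\overline\Gamma$ and the condition reads $\Gamma_{ij}=\overline\Gamma_{ij}$ for all $(i,j)\in E$. Together with $(\overline\Theta_G)_{ij}=0$ on non-edges (feasibility) and the vanishing diagonals of $\Gamma,\overline\Gamma$, the pair $(\Gamma,\overline\Theta_G)$ satisfies precisely the matrix completion problem for $\restrictTo{\overline\Gamma}{G}$; its uniqueness (\cref{prop:completeGeneral}) gives $\Gamma=\comp[G]{\restrictTo{\overline\Gamma}{G}}$, i.e.\ $\comp[G]{\restrictTo{\overline\Gamma}{G}}=\theta\inv(\overline\Theta_G)$.

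The step I expect to be most delicate is the differential calculus on the singular cone $\Pcal_d^\oneVec$: one must justify that $\log\pdet$ and its gradient reduce to $\log\Det$ and the ordinary inverse after compression by $P$, that restricting test directions to $H\oneVec=\zeroVec$ gives exactly the tangent space of the fixed-kernel, rank-$(d-1)$ manifold, and that coercivity together with strict concavity promotes the stationarity condition to a genuine characterization of the global maximizer. The remaining algebra—the passage from $\trace{MH}=0$ to $\farris{M}_{ij}=0$ and the two identities $\farris{(\overline\Theta_G)\pinv}=\Gamma$ and $\farris{\overline\Gamma}=-2\overline\Gamma$—is routine.
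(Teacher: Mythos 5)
Your proposal is correct and follows essentially the same route as the paper: both treat \eqref{logdet} as a concave program over the graph-constrained precision matrices, show that first-order stationarity in the edge coordinates amounts to $\gamma\lr{\Theta\pinv}$ agreeing with $\overline\Gamma$ on $\Ebar$, and then invoke the unique solvability of the matrix completion problem (\cref{prop:completeGeneral}) to identify the maximizer with $\theta\tlr{\comp[G]{\restrictTo{\overline\Gamma}{G}}}$. The only differences are presentational: the paper parametrizes the feasible set by the upper-triangular entries $U(\Theta)$ and cites \citet[Proposition~A.5]{roettger2021} for the gradient identity $\nabla \log\pdet{\Theta} = U\lr{-\gamma\lr{\Theta\pinv}}$, letting existence and uniqueness of the maximizer follow from the completion problem itself, whereas you re-derive that identity by hand (the compression onto $\set{\oneVec}^\perp$ and the trace algebra turning $\trace{MH}=0$ into $\farris{M}_{ij}=0$ on edges) and supply explicit coercivity and strict-concavity arguments for existence and uniqueness.
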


We note that the result holds for any variogram matrix $\overline{\Gamma}$, but only for the empirical variogram $\widehat \Gamma$ there is an interpretation in terms of maximum (surrogate) likelihood estimation.
In practice,
solving this optimization provides an alternative to \cref{cor:completeGeneral} to compute the graphical completion of a matrix. We %
leave this approach for future research.

\subsection{Consistency}

In order to show consistency results, a useful property of the completion $\comp[G]{}$ is that it is a continuous mapping from the space $\mathring\Dcal_G$ of partially specified variogram matrices to the space of variogram matrices $\Dcal$.

\begin{lemma}
    \provenLabel{lemma:matrixCompletionContinuous}
    The mapping $\comp[G]{}$ from \cref{def:notationComp} is continuous.
  \end{lemma}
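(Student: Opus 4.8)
The plan is to exhibit $\comp[G]{}$ as the inverse of a manifestly continuous map and then upgrade this inverse to a continuous map by a compactness-and-uniqueness argument. Write $\Dcal_G \subseteq \Dcal_d$ for the set of graphical variograms, i.e.\ those $\Gamma \in \Dcal_d$ with $\theta(\Gamma)_{ij} = 0$ for all $(i,j) \notin \Ebar$. The restriction map $\restrictTo{\,\cdot\,}{G} \colon \Dcal_G \to \mathring\Dcal_G$ is linear, hence continuous, and by the existence and uniqueness in \cref{prop:completeGeneral} it is a bijection whose inverse is precisely $\comp[G]{}$. Thus it suffices to prove that this inverse is sequentially continuous. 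I would fix a convergent sequence $\GammaO^{(n)} \to \GammaO$ in $\mathring\Dcal_G$, set $\Gamma^{(n)} = \comp[G]{\GammaO^{(n)}}$ and $\Gamma = \comp[G]{\GammaO}$, and aim to show $\Gamma^{(n)} \to \Gamma$.

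Granting that $(\Gamma^{(n)})_n$ is precompact in $\Dcal_d$ (with subsequential limits staying inside $\Dcal_d$, established below), the argument closes quickly. Let $\Gamma^*$ be the limit of any convergent subsequence $\Gamma^{(n_k)}$. Continuity of the restriction map gives $\restrictTo{\Gamma^*}{G} = \lim_k \restrictTo{\Gamma^{(n_k)}}{G} = \lim_k \GammaO^{(n_k)} = \GammaO$, so $\Gamma^*$ matches $\GammaO$ on $\Ebar$. Since $\theta$ is continuous by \cref{prop:thetaHomeo} and $\theta(\Gamma^{(n_k)})_{ij} = 0$ for every non-edge $(i,j)$, passing to the limit yields $\theta(\Gamma^*)_{ij} = 0$ on non-edges as well. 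Hence $\Gamma^*$ solves the completion problem for $\GammaO$ and $G$, and the uniqueness in \cref{prop:completeGeneral} forces $\Gamma^* = \Gamma$. As every convergent subsequence of the precompact sequence has the same limit $\Gamma$, the full sequence converges to $\Gamma$.

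The crux, and the step I expect to be the \emph{main obstacle}, is this precompactness: the completions must stay bounded and their limits must not degenerate out of $\Dcal_d$ (equivalently, the associated precision matrices must not acquire a kernel larger than $\laSpan{\set{\oneVec}}$). Here I would pass to the precision side via the homeomorphism $\theta$ and exploit the variational characterization of \cref{prop:completionOptProblem}: $\Theta^{(n)} := \theta(\Gamma^{(n)})$ is the unique maximizer over $\setM{\Theta \in \Pcal_d^\oneVec}{\Theta_{ij} = 0,\ (i,j)\notin\Ebar}$ of $\log\pdet{\Theta} + \half\tr(\overline{\Gamma}^{(n)}\Theta)$, for any $\overline{\Gamma}^{(n)} \in \Dcal_d$ restricting to $\GammaO^{(n)}$ on $G$. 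Using $\Theta = \Proj\Theta\Proj$ and $\Proj(-\half\overline{\Gamma}^{(n)})\Proj = \sigma(\overline{\Gamma}^{(n)}) =: \overline{\Sigma}^{(n)}$, the linear term equals $-\tr(\overline{\Sigma}^{(n)}\Theta) \le 0$, with $\overline{\Sigma}^{(n)} \in \Pcal_d^\oneVec$ positive definite on $\set{\oneVec}^\perp$. Since $\log\pdet{\Theta}$ grows only logarithmically while $\tr(\overline{\Sigma}^{(n)}\Theta)$ grows linearly in $\norm{\Theta}$ on the feasible cone, the objective is coercive, so the maximizer lies in a sublevel set that is compact in and bounded away from the degenerate boundary of $\Pcal_d^\oneVec$. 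For uniformity near $\GammaO$, I would build representatives explicitly: fixing one $\overline{\Gamma} \in \Dcal_d$ with $\restrictTo{\overline{\Gamma}}{G}=\GammaO$ and overwriting its $\Ebar$-entries by those of $\GammaO^{(n)}$ gives $\overline{\Gamma}^{(n)} \to \overline{\Gamma}$, which lie in the open set $\Dcal_d$ for large $n$; the coercivity constants are then uniform, confining the $\Theta^{(n)}$ to a common compact subset of $\Pcal_d^\oneVec$, and the continuous $\theta\inv$ transfers precompactness (and non-degeneracy of limits) back to $(\Gamma^{(n)})_n$.

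An alternative that packages these estimates is to invoke Berge's maximum theorem directly: the objective is jointly continuous in $(\Theta, \GammaO)$ and strictly concave in $\Theta$ on the feasible cone, since it reduces to $\log\det$ of the compression to $\set{\oneVec}^\perp$; once coercivity confines the argmax to a fixed compact set, the unique maximizer $\overline{\Theta}_G = \theta(\comp[G]{\GammaO})$ depends continuously on $\GammaO$, and continuity of $\comp[G]{} = \theta\inv \circ (\GammaO \mapsto \overline{\Theta}_G)$ follows from the continuity of $\theta\inv$ in \cref{prop:thetaHomeo}. Either route reduces the lemma to the single analytic fact that the $\Theta$-parametrized log-likelihood is coercive uniformly in the data matrix.
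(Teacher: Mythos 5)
Your proposal is correct, but it follows a genuinely different route from the paper. The paper's proof is purely topological: it identifies the graph-constrained precision matrices $\Pcal_G$ with an open subset $\mathring\Pcal_G \subset \Rd[m]$ (where $m = \abs{E}$), observes that $f = \pi_G \circ \theta\inv \circ \pi_G\inv$ is a continuous bijection from $\mathring\Pcal_G$ onto $\mathring\Dcal_G \subset \Rd[m]$, and invokes Brouwer's Invariance of Domain theorem to conclude that $f$ is a homeomorphism, whence $\comp[G]{} = \theta\inv \circ \pi_G\inv \circ f\inv$ is continuous (and, as a by-product, $\mathring\Dcal_G$ is open). You instead run an M-estimation style argument: sequential continuity via precompactness plus the uniqueness from \cref{prop:completeGeneral}, with precompactness supplied by the variational characterization of \cref{prop:completionOptProblem} and uniform coercivity of the log-det objective $\log\pdet{\Theta} - \tr(\overline{\Sigma}^{(n)}\Theta)$ on the feasible cone. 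Your key computations check out: the identity $\half\tr(\overline{\Gamma}\Theta) = -\tr(\sigma(\overline{\Gamma})\Theta)$ holds because $\Theta\Proj = \Proj\Theta = \Theta$; the smallest positive eigenvalue of $\sigma(\overline{\Gamma}^{(n)})$ is uniformly bounded below along $\overline{\Gamma}^{(n)} \to \overline{\Gamma} \in \Dcal_d$; evaluating the objective at a fixed feasible point (e.g.\ the graph Laplacian of the connected graph $G$) gives a uniform lower bound on the maxima; and the resulting superlevel sets are compact and bounded away from rank-degeneracy, so the limits of $\Theta^{(n)}$ stay in $\Pcal_d^\oneVec$ and $\theta\inv$ pulls everything back into $\Dcal_d$. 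There is no circularity, since the paper's proof of \cref{prop:completionOptProblem} rests only on \cref{prop:completeGeneral} and concavity, not on this continuity lemma. The trade-off: the paper's argument needs no estimates at all and gets openness of $\mathring\Dcal_G$ for free, at the cost of invoking a deep topological theorem; your argument is analytically self-contained, mirrors standard consistency proofs, and could in principle be made quantitative, at the cost of the coercivity/uniformity bookkeeping, which you sketch correctly but leave to be filled in (also, the sets you call sublevel sets are superlevel sets of the objective, since you are maximizing).
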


By continuity of $\comp[G]{}$, a consistent estimator on the edges of
a \HR{} graphical model can be extended to a consistent estimator
of the whole parameter matrix.

\begin{theorem}
    \provenLabel{thm:matrixCompletionConsistent}
    Consider a \HR{} graphical model with graphical structure $G = (V,E)$
    and variogram matrix $\Gamma$.
    Let $\widehat\GammaO \equiv \widehat\GammaO_n$
    be an estimator sequence of $\restrictTo{\Gamma}{G}$
    which satisfies $\diag{\widehat\GammaO} \equiv \zeroVec$,
    is symmetric,
    and is consistent
    in the sense that
    for all $\varepsilon > 0$, we have
    \begin{align}\label{partial_consist}
        \P{
            \max_{ (i,j) \in E }
            \abs{ \widehat\GammaO_{ij} - \Gamma_{ij} }
            <
            \varepsilon
        } \to 1
        , \quad
        n \to \infty
        .
    \end{align}
    Then with probability tending to one there exists a completion of $\widehat\GammaO$, that is, $\mathbb P( \widehat\GammaO \in \mathring\Dcal_G) \to 1$ as $n\to \infty$.
    Let $\widehat \Gamma^G = \comp[G]{\widehat\GammaO}$ denote this completion and $\widehat \Theta^G$ the corresponding precision matrix (and set both matrices to the zero matrix if the completion does not yet exist).
    Then $\widehat \Gamma^G$ is a consistent estimator for $\Gamma$ with the correct graph structure,
    that is, $\widehat\Theta^G_{ij} = 0$ if $(i,j) \notin \Ebar$ and for all $\varepsilon > 0$,
    \begin{align*}
        \P{
            \max_{(i,j) \in V \times V}
            \abs{\widehat \Gamma^G_{ij} - \Gamma_{ij}}
            <
            \varepsilon
        } \to 1
        , \quad
        n \to \infty
        .
    \end{align*}
\end{theorem}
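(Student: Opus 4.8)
The plan is to treat $\restrictTo{\Gamma}{G}$ as a fixed point of the completion operator and then exploit openness of the domain together with the continuity from \cref{lemma:matrixCompletionContinuous}. First I would record that the true matrix is indeed a fixed point: since $Y$ is an extremal graphical model on $G$, its \HR{} precision matrix $\Theta = \theta\lr{\Gamma}$ satisfies $\Theta_{ij} = 0$ for all $(i,j) \notin E$ by~\eqref{theta0}. Hence $\Gamma$ is itself a \CND{} matrix solving the matrix completion problem of \cref{def:matrixCompletionProblem} for the partial matrix $\restrictTo{\Gamma}{G}$, and the uniqueness in \cref{prop:completeGeneral} gives $\comp[G]{\restrictTo{\Gamma}{G}} = \Gamma$. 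In particular $\restrictTo{\Gamma}{G} \in \mathring\Dcal_G$.

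Next, for existence of the completion with high probability, I would show that $\mathring\Dcal_G$ is open in the space of symmetric, zero-diagonal partial matrices specified on $G$ (identified with the edge coordinates). The set $\Dcal_d$ from~\eqref{eq:defVariogramSet} is open, because the defining requirement $v\T \Gamma v < 0$ for all $\zeroVec \neq v \perp \oneVec$ is an open strict-definiteness constraint on a fixed hyperplane. The restriction $\restrictTo{\,\cdot\,}{G}$ is a coordinate projection, and coordinate projections are open maps, so the image $\mathring\Dcal_G$ is open. Because $\restrictTo{\Gamma}{G}$ lies in this open set, an entire neighborhood of it does; since $\widehat\GammaO$ has the fixed zero diagonal, is symmetric, and converges to $\restrictTo{\Gamma}{G}$ on the edges by~\eqref{partial_consist}, it falls into this neighborhood with probability tending to one, giving $\P{\widehat\GammaO \in \mathring\Dcal_G} \to 1$.

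To transfer consistency, fix $\varepsilon > 0$. Continuity of $\comp[G]{}$ at $\restrictTo{\Gamma}{G}$ provides $\delta > 0$ such that every $\GammaO' \in \mathring\Dcal_G$ within $\delta$ of $\restrictTo{\Gamma}{G}$ on the edges satisfies $\max_{(i,j) \in V\times V}\abs{\comp[G]{\GammaO'}_{ij} - \Gamma_{ij}} < \varepsilon$, where I use the fixed-point identity $\comp[G]{\restrictTo{\Gamma}{G}} = \Gamma$ from the first step. On the intersection of $\set{\widehat\GammaO \in \mathring\Dcal_G}$ with the $\delta$-version of the event in~\eqref{partial_consist}, whose probability tends to one by the previous paragraph, the full-matrix error is below $\varepsilon$, which is the asserted consistency of $\widehat \Gamma^G$. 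The zero pattern is then immediate: whenever the completion exists, $\widehat \Gamma^G = \comp[G]{\widehat\GammaO}$ solves the problem in \cref{def:matrixCompletionProblem} by \cref{def:notationComp}, so $\widehat\Theta^G_{ij} = 0$ for all $(i,j) \notin \Ebar$; on the complementary (vanishing-probability) event both matrices are set to the zero matrix and the condition holds trivially.

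The main obstacle is the openness claim in the second step: one must verify that \emph{completability}, i.e.\ membership in $\mathring\Dcal_G$, is an open condition in the partially specified coordinates. This is exactly where the strict definiteness in the definition of $\Dcal_d$ and the open-mapping property of the coordinate projection are essential—without strict inequalities the boundary matrices would be completable but not stably so, and the high-probability existence statement could fail. Once openness is established, the remainder follows routinely from the continuity of $\comp[G]{}$ and the fixed-point identity.
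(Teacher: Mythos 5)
Your proof is correct and follows the same two-stage skeleton as the paper's: first show that $\widehat\GammaO$ is completable with probability tending to one, then transfer consistency through \cref{lemma:matrixCompletionContinuous} and the fixed-point identity $\comp[G]{\restrictTo{\Gamma}{G}} = \Gamma$, with the zero pattern of $\widehat\Theta^G$ holding by construction of $\comp[G]{}$. The only substantive difference is how completability-with-high-probability is established. The paper argues quantitatively: it introduces the compact set $K$ of $\oneVec$-orthogonal vectors with unit sup-norm, sets $\varepsilon = -\Delta_\Gamma/(2d^2)$ where $\Delta_\Gamma = \max_{v \in K} v\T \Gamma v < 0$, checks that every symmetric zero-diagonal matrix within $\varepsilon$ of $\Gamma$ entrywise is still \CND{}, and then (implicitly) completes a partial matrix close to $\restrictTo{\Gamma}{G}$ by filling the unspecified entries with those of $\Gamma$. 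You instead argue topologically: $\Dcal_d$ is open in the space of symmetric zero-diagonal matrices (the same compactness fact, phrased abstractly), and $\mathring\Dcal_G$ is its image under a coordinate projection, hence open by the open-mapping property of projections. Both are valid; the paper's version yields an explicit radius and an explicit completion witness, which is useful if one wants finite-sample or rate statements, while yours is shorter and cleanly sidesteps the set-level manipulation of restricted matrix classes that the paper compresses into an \quotes{observe} (note that an inclusion like $\restrictTo{A}{G} \cap \restrictTo{B}{G} \subseteq \restrictTo{A \cap B}{G}$ is not automatic for arbitrary sets and needs exactly the fill-in argument). You also make explicit the identity $\comp[G]{\restrictTo{\Gamma}{G}} = \Gamma$ via \eqref{theta0} and the uniqueness in \cref{prop:completeGeneral}, which the paper uses only implicitly when it invokes continuity to conclude.
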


\cref{thm:matrixCompletionConsistent} provides the first consistent estimator of a \HR{} variogram matrix that respects the structure of a general graph $G=(V,E)$.
Indeed, the only ingredient that is needed is a consistent estimator of $\Gamma$ on the edge set $E$.
There are many different possibilities for such estimators in the literature.
A natural and computationally efficient estimator is the empirical variogram \citep{engelke2014, engelke2020a}, which for the \HR{} distributions is the empirical version of the parameter matrix $\Gamma$.
Other proposals include M-estimators \citep{einmahl2012, einmahl2016, lalancette2021}, proper scoring rules \citep{deFondeville2018} and likelihood methods.
The latter have the advantage that they can incorporate censoring of components of the data that are not extreme \citep{ledfordTawn1996, wadsworthTawn2014}.
Pairwise likelihood methods \citep{padoan2010} reduce the high computational cost of censoring.

In order to guarantee that even for a fixed sample size $n$ there exists a completion, it can be advantageous to start with a consistent estimator $\widehat \Gamma \in \Dcal$ of the entire parameter matrix $\Gamma$. Since $\restrictTo{\widehat \Gamma}{G} \in \mathring\Dcal_G$ by definition, \cref{prop:completeGeneral} ensures that
\begin{align}
    \label{eq:completionFull}
    \widehat{\Gamma}^G
    =
    \comp[G]{\restrictTo{\widehat \Gamma}{G}}
\end{align}
exists and is a consistent estimator with graph structure $G$. In larger dimensions $d = |V|$, estimating all entries of $\Gamma$ by likelihood optimization or censoring is often infeasible, and the only option is then the empirical variogram or estimators based on simple summary statistics such as extremal coefficients \citep{einmahl2018continuous}.

If the graph $G$ is sparse, then there is an efficient alternative to estimating every entry of $\Gamma$, which requires only estimation on all cliques separately; see \cref{subsec:estimationOnSparseGraphs}.
Furthermore,
in \cref{sec:simulation}, we perform a simulation study that illustrates the performance gains achieved by this method,
while yielding a similar quality of estimation of $\Gamma$.

The graph $G$ is in practice often unknown and has to be estimated from the data.
Consistent structure estimation methods for extremal graphs exist for trees \citep{engelke2020a, hu2022}, and for general graphs based on lasso-type $L^1$ penalization \citep{engelke2022a,wan2023graphical}.
The latter paper proposes the \eglearn{} method and shows, under certain conditions, its sparsistency even in the high-dimensional case where the dimension may grow with the sample size.
More precisely,
if $G$ is the graph implied by the zero entries in the true $\Theta$,
and $\widehat G = (V,\widehat E)$ is the estimated graph from \eglearn{} then
\begin{align}\label{sparsistency}
  \mathbb P( \widehat G = G ) \to 1, \quad n \to\infty.
\end{align}
While they consistently recover a general graph structure, they do not obtain an estimate of the $\Gamma$ matrix on the estimated graph structure.

Our theory complements the structure estimation in \citet{engelke2022a}.
In combination, we are now able to estimate jointly any graph structure and the corresponding \HR{} parameter matrix consistently.

\begin{corollary}
    \label{cor:matrixCompletionSparsistent}
    Consider a \HR{} graphical model with graphical structure $G = (V,E)$
    and variogram matrix $\Gamma$. Let $\widehat\Gamma$ be a consistent estimator for $\Gamma$ and $\widehat G = (V,\widehat E)$ a sparsistent estimator of $G$ as in~\cref{sparsistency}.
    If $\widehat \Gamma^{\widehat G} = \comp[\widehat{G}]{\restrictTo{\widehat \Gamma}{\widehat{G}}}$ is the completion on $\widehat G$ and $\widehat \Theta^{\widehat G}$ its precision matrix, then $\widehat \Gamma^{\widehat G}$ is a consistent and sparsistent estimator for $\Gamma$, that is, for all $\varepsilon > 0$,
    \begin{align*}
        \P{
            \max_{(i,j) \in V \times V}
            \abs{\widehat \Gamma^{\widehat G}_{ij} - \Gamma_{ij}}
            <
            \varepsilon
            , \;\;
            \widehat\Theta^{\widehat G}_{ij} = 0
            \;
            \forall (i,j) \notin \Ebar
        } \to 1
        , \quad
        n \to \infty.
    \end{align*}
\end{corollary}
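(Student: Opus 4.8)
The plan is to leverage the sparsistency of $\widehat G$ to reduce the statement to \cref{thm:matrixCompletionConsistent} applied on the \emph{true} graph $G$. Define the event $A_n = \set{\widehat G = G}$; by the sparsistency assumption~\eqref{sparsistency} we have $\P{A_n} \to 1$. On $A_n$ the completion is computed with respect to $G$ rather than $\widehat G$, so that $\widehat\Gamma^{\widehat G} = \comp[G]{\restrictTo{\widehat\Gamma}{G}}$ and likewise $\widehat\Theta^{\widehat G} = \widehat\Theta^G$. Hence it suffices to control the completion on the fixed graph $G$, which is exactly the setting of \cref{thm:matrixCompletionConsistent}.

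Before applying that theorem, I would first check that $\widehat\Gamma^{\widehat G}$ is well-defined for every realization of $\widehat G$ and that $\restrictTo{\widehat\Gamma}{G}$ meets the theorem's hypotheses. Since $\widehat\Gamma$ is a variogram estimator, $\widehat\Gamma \in \Dcal$ is a fully specified \CND{} matrix agreeing with $\restrictTo{\widehat\Gamma}{\widehat G}$ on the edges of $\widehat G$; by \cref{prop:completeGeneral} the restriction therefore lies in $\mathring\Dcal_{\widehat G}$ and the completion exists, so $\widehat\Gamma^{\widehat G}$ is defined for all $n$ (this is the same reasoning used around~\eqref{eq:completionFull}). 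The identical argument shows $\restrictTo{\widehat\Gamma}{G} \in \mathring\Dcal_G$. Moreover $\restrictTo{\widehat\Gamma}{G}$ inherits from $\widehat\Gamma$ a vanishing diagonal and symmetry, and the assumed consistency of $\widehat\Gamma$ for the full matrix $\Gamma$ implies in particular the edgewise consistency~\eqref{partial_consist} on $E$. Thus all hypotheses of \cref{thm:matrixCompletionConsistent} hold for the estimator $\widehat\GammaO := \restrictTo{\widehat\Gamma}{G}$.

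Applying \cref{thm:matrixCompletionConsistent} then yields, for each fixed $\varepsilon > 0$, that the event
\begin{align*}
  B_n = \set{
    \max_{(i,j) \in V\times V} \abs{\widehat\Gamma^G_{ij} - \Gamma_{ij}} < \varepsilon
    ,\;\;
    \widehat\Theta^G_{ij} = 0 \;\forall (i,j)\notin\Ebar
  }
\end{align*}
satisfies $\P{B_n} \to 1$. Finally I would combine the two events by a union bound: on $A_n \cap B_n$ the identities $\widehat\Gamma^{\widehat G} = \widehat\Gamma^G$ and $\widehat\Theta^{\widehat G} = \widehat\Theta^G$ turn $B_n$ into the target event of the corollary, whence its probability is at least $\P{A_n \cap B_n} \ge 1 - \P{A_n^c} - \P{B_n^c} \to 1$. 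The argument is essentially routine once the reduction is in place; the only point requiring care is that the completion operator depends on the graph, so that the identification $\widehat\Gamma^{\widehat G} = \widehat\Gamma^G$ holds precisely on the sparsistency event $A_n$, and that the consistency of \cref{thm:matrixCompletionConsistent} is invoked for the same fixed $\varepsilon$ appearing in the corollary's assertion.
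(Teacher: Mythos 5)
Your proof is correct and takes essentially the same route as the paper: the paper states this corollary without a separate proof, treating it as an immediate consequence of \cref{thm:matrixCompletionConsistent} combined with the sparsistency event $\set{\widehat G = G}$, which is precisely your reduction. Your conditioning on $A_n$, the verification that $\restrictTo{\widehat\Gamma}{G}$ satisfies the theorem's hypotheses (zero diagonal, symmetry, edgewise consistency, and existence of the completion since $\widehat\Gamma \in \Dcal$), and the final union bound make explicit exactly the argument the paper leaves implicit.
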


In \cref{cor:matrixCompletionSparsistent}, not only is $\widehat\Gamma^{\widehat G}$ a consistent estimator of $\Gamma$, it is also, with probability going to one, a graphical model with respect to the true graph $G$.

\section{Application}
\label{sec:applic}

Evaluation of the risk of climate extremes such as heatwaves \citep[e.g.,][]{reich2014hierarchical}
or floods \citep[e.g.,][]{asadi2015extremes, coo2018} is the bread and butter of extreme value analysis,
and our methods are perfectly suitable for such data.
In order to illustrate our methods on a widely used data set and to underline their applicability to a broad range of domains, we study the river flow data of \citet{asadi2015extremes} in the Supplementary Material (\cref{sec:danube}).
In this section, we extend the range of possible applications and
use our methodology to study 
large flight delays in the U.S.\ flight network.
Excessive delays have a variety of negative effects,
ranging from inconveniences for passengers
and congestion of critical airport infrastructure
to financial losses for involved parties.

\subsection{Data and exploratory analysis}

The United States Bureau of Transportation Statistics%
\footnote{\url{https://www.bts.dot.gov/browse-statistical-products-and-data/bts-publications/airline-service-quality-performance-234-time}}
provides records of domestic flights in the U.S.\ that are operated by major carriers (at least $1\%$ market share)
at airports accounting for at least $1\%$ of domestic enplanements.
We use this data from \valYearStart{} to \valYearEnd{},
selecting only airports from the contiguous U.S.\ with
a minimum of \valMinYearlyAirportFlights{} flights per year.
For each airport, we compute the accumulated (positive) flight delays (in minutes) on a given day.
As there are a few days for which no data are available,
this results in a data set with $n=\valNDaysFiltered{}$ observations
$x_1,\dots, x_n \in \mathbb R^d$
of daily accumulated flight delays for the $\valNYearsAll{}$ years at $d=\valNAirportsFiltered{}$ airports.
This pre-processed data set is available in the R-package \texttt{graphicalExtremes}
\citep{graphicalExtremes2022}.

The models in this paper are suitable for variables where the largest observations are asymptotically dependent. For our data, such dependence between the largest daily accumulated flight delays at different airports may result from several factors. For instance, large delays at a hub in the network may induce delays at other airports that have frequent connections to that hub.
On the other hand, independently of flight connections, meteorological events such as severe storms or heavy snowfall can cause simultaneous delays at airports in the same geographical area. 

In order to find groups of airports at risk of simultaneous large delays, we first run a $k$-medoids clustering \citep{kaufman2009finding}, where similarities are defined 
in terms of
the strength of extremal dependence. Clustering is frequently used in the extreme value literature to identify regions that are homogeneous in terms of dependence properties \citep[e.g.,][]{bernard2013, saunders2019, vignotto2021}. We follow these approaches but use a %
different dissimilarity measure, namely the empirical version $\hat\chi_{ij}(p)$ of the extremal correlation~\cref{eq:defChi} at probability level~$p=\valUsedPClustering$.
The clusters and subsequent results are stable with respect to the exact probability level
in a range of about $p \in [0.8,0.95]$, yielding both a sufficiently high threshold and enough exceedances.

The resulting clusters are shown in \cref{fig:applicationClusters}, for
an ad-hoc choice of \valNClustersWord{} clusters. Even though no information on the airport locations is used, %
the resulting groups exhibit strong geographical proximity, supporting our intuition on the importance of flight connectivity and meteorological factors.
\cref{fig:applicationChiHist} shows that the extremal dependence of large delays is much stronger within clusters than between different clusters.
In the following we focus on the southern cluster around Texas,
consisting of $\valNAirportsCluster{}$ airports,
which is particularly stable even when modifying the number
$k$ of clusters and threshold value $p$;
similar analyses can be conducted for the other clusters.
The IATA codes of this cluster are given in \cref{table:IATAsChosen}.
The existence of a (regular) flight connection between two airports defines a natural graph, denoted by $G_{\text{flight}}$
and shown in the left panel of \cref{fig:applicationGraphs}.

\begin{figure}
    \centering
    \inputTikz[0.9\textwidth]{application/clustering2}
    \setlength\abovecaptionskip{-0.5\baselineskip}
    \caption{
        Clusters of airports in the contiguous U.S.,
        using $k$-medoids clustering with
        empirical extremal correlation
        as dissimilarity.
        Edges represent airports connected by (regular) flights.
        Airport sizes are proportional to the average number of daily flights.
    }
    \label{fig:applicationClusters}
\end{figure}

\subsection{Graphical modeling}

\label{subsec:applicationGraphicalModeling}

The flight graph $G_{\text{flight}}$ is based on domain knowledge and is certainly a good first candidate for statistical modeling. It is however not necessarily the best graph in terms of conditional independence properties.
We therefore also consider extremal graph structures estimated from data.
To evaluate the fitted models out-of-sample, 
all rows with missing values were removed,
and the data was split into a training set with $\valNDaysEst{}$ observations (\valDateEstStart{} to \valDateEstEnd{}) for estimation, and a validation set with $\valNDaysVal{}$ observations (\valDateValStart{} to \valDateValEnd{}) for selection of tuning parameters and model comparisons.
As a base estimator, we use the empirical extremal variogram $\widehat \Gamma$ computed at probability threshold $p=\valUsedP$ on the training data.

The first, sparse graph is obtained as the minimum spanning tree with weight matrix~$\widehat \Gamma$ as proposed in \cite{engelke2020a}; this tree is denoted by $\widehat T$ and is shown in the center of \cref{fig:applicationGraphs}. Alternatively, the empirical extremal correlation $\widehat \chi$ as weight matrix can also recover an underlying tree~\citep{engelke2020a, hu2022}.

\begin{figure}
    \centering
    \inputTikz{application/graph_all}
    \setlength\abovecaptionskip{-1\baselineskip}
    \caption{
        The flight graph $\Gflight$ (left),
        the estimated tree graph $\widehat{T}$ (center),
        and the graph estimated using \eglearn{},
        $\widehat G_{\rho^*}$, for
        $\rho^* = \valRhoStar$ (right).
    }
    \label{fig:applicationGraphs}
\end{figure}

As a second family of estimated extremal graph structures $\widehat G_\rho$ we apply the \eglearn{} algorithm \citep{engelke2022a} to the flights data set with different regularization parameters $\rho\geq 0$. The latter governs the amount of sparsity in the estimated graph, where larger $\rho$ values correspond to sparser graphs. %
\eglearn{} produces a whole sequence of estimated graphical models, but without estimates of the corresponding parameter matrices.

We apply our methodology to fit a \HR{} multivariate generalized Pareto distribution with different extremal graphical structures. As before, the empirical variogram $\widehat \Gamma$ acts as base estimator.
Applying the completion operator $\comp[G]{}$ to the restriction of $\widehat \Gamma$ to a graph $G$ as in~\cref{eq:completionFull}, we obtain graph structured estimators $\widehat \Gamma^{\Gflight}$, $\widehat \Gamma^{\widehat T}$ and $\widehat \Gamma^{\widehat G_\rho}$ for the flight graph, the extremal tree and the $\rho$-regularized general graph, respectively.
We note that, previously, only parameter estimation on the tree was possible through the tree metric property~\cref{eq:tree_metric}, but not on the more general graphs.
Our method thus allows to estimate the \HR{} parameters of an extremal graphical model based on a graph learned from the data, the only restriction being that the graph needs to be connected.

A first benefit of our approach is that it complements structure learning methods such as \eglearn{} by allowing for a data-driven sparsity tuning.
One approach would be to compare the \HR{} log-likelihood values of the different parameter matrices $\widehat \Gamma^{\widehat G_\rho}$, for instance by AIC or BIC applied to the training data.
Instead, we directly compare the log-likelihood
values on the validation set and plot them against the tuning parameter $\rho$ in \cref{fig:applicationRhoVsAic}.
The likelihood is computed based on the \HRP{}
density, and 
since we are mainly interested in the dependencies between different airports,
the univariate marginals were normalized to the standard exponential scale
using the empirical distribution functions;
\cref{fig:applicationMarginals} shows the univariate shape parameters before normalization.
The structural estimators in this section are all part of the family of \HR{} distributions, allowing the comparison of the log-likelihood values, in the same was as normal distributions with different degrees of sparsity are compared in classical graphical modelling \citep[e.g.][]{friedmanEtAl2007}.
The best fitting model on the independent validation data set comes from \eglearn{}  at $\rho^*=\valRhoStar$, and the corresponding graph with $\valNEdgesEglearn$ edges is shown on the right-hand side of \cref{fig:applicationGraphs}. The flight graph performs significantly worse, and the tree seems to be too sparse.
The \HR{} precision matrix defined in \cref{sec:HrPrecisionMatrix} can be used to track how sparsity is induced in the \eglearn{} algorithm.
\cref{fig:applicationThetaToZero} shows the entries of the precision matrices $\widehat \Theta^{\widehat G_\rho}$ as a function of the tuning parameter $\rho$.
Similar to a usual lasso, we observe that they tend to zero in a possibly non-monotone way.

\begin{figure}
    \centering
    \begin{subfigure}[b]{.5\textwidth}
        \inputTikz{application/eglasso_rho_vs_loglik}
    \end{subfigure}
    \hspace{-0.05\textwidth}
    \begin{subfigure}[b]{.5\textwidth}
        \inputTikz{application/ThetaConvergenceToZero}
    \end{subfigure}
    \setlength\abovecaptionskip{-1\baselineskip}
    \caption{
        Left: Log-likelihood based on the validation data set
        for different regularization parameters $\rho$.
        Horizontal lines indicate the log-likelihoods
        of the complete graph ($\valNEdgesComplete$ edges, dashed line)
        and the flight graph ($\valNEdgesFlightGraph$ edges, dotted line).
        The log-likelihood of the tree graph ($\valNEdgesTree$ edges)
        is $\valTreeGraphLoglik$ and not shown.
        Right: Evolution of $\widehat \Theta^{\widehat G_\rho}$ entries.
        Only entries that vanish at $\rho = \valRhoMax$ are shown.
        Both plots show the number of edges in the corresponding graph 
        on top.
    }
    \label{fig:applicationRhoVsAic}
    \label{fig:applicationThetaToZero}
\end{figure}

The completed variogram matrices also serve to assess the fitted model and evaluate its goodness of fit. \cref{fig:applicationGamma} compares the values of the empirical extremal variogram $\widehat \Gamma$ to the variogram estimates implied by the fitted graphical model $\widehat \Gamma^G$ for graph $G$ being the flight graph, the extremal minimum spanning tree and the optimal \eglearn{} graph, respectively. The extremal correlations are $\chi = 2 - 2\Phi(\sqrt{\Gamma}/2)$ for a \HR{} distribution with variogram matrix $\Gamma$, where $\Phi$ is the standard normal distribution function and all functions are applied componentwise.
The results confirm the likelihood considerations and, in particular, convincingly show that a tree model is not flexible enough to capture all extremal dependencies in this data set.
It is worthwhile to note that this finding is in contrast to the results of the application to river flow data in \cref{sec:danube},
where the performance of the tree model is very similar to the \eglearn{} method,
outperforming both the complete graph and the physical flow graph of the river network. This stresses the usefulness and applicability of our methodology in various situations with different amounts of sparsity.

\begin{figure}
    \centering
    \inputTikz{application/empVsFitted_Gamma_all2}
    \setlength\abovecaptionskip{-1\baselineskip}
    \caption{
        Extremal correlations
        based on the empirical and fitted extremal variogram
        for
        the flight graph $\Gflight$ (left),
        the estimated tree graph $\widehat{T}$ (center),
        and the graph estimated using \eglearn{},
        $\widehat G_{\rho^*}$, for
        $\rho^* = \valRhoStar$ (right).
    }
    \label{fig:applicationGamma}
\end{figure}

\section*{Acknowledgments}

Financial support from the F.R.S-FNRS in the form of a PDR grant (T020321F, J. Segers)
and the Swiss National Science Foundation (S. Engelke)
is gratefully acknowledged.

\bibliography{literature}

\begin{thebibliography}{}

\bibitem[Asadi et~al., 2015]{asadi2015extremes}
Asadi, P., Davison, A.~C., and Engelke, S. (2015).
\newblock Extremes on river networks.
\newblock {\em Ann. Appl. Statist.}, 9(4):2023--2050.

\bibitem[Asenova et~al., 2021]{asenova2021inference}
Asenova, S., Mazo, G., and Segers, J. (2021).
\newblock Inference on extremal dependence in the domain of attraction of a
  structured {Hüsler--Reiss} distribution motivated by a {Markov} tree with
  latent variables.
\newblock {\em Extremes}, 24:461--500.

\bibitem[Asenova and Segers, 2023]{asenova2021extremes}
Asenova, S. and Segers, J. (2023).
\newblock Extremes of {Markov} random fields on block graphs: max-stable limits
  and structured {H\"usler--Reiss distributions}.
\newblock {\em Extremes}, 26:433--468.

\bibitem[Bakonyi and Johnson, 1995]{bakonyiJohnson1995}
Bakonyi, M. and Johnson, C.~R. (1995).
\newblock The euclidian distance matrix completion problem.
\newblock {\em SIAM J. Matrix Anal. Appl.}, 16(2):646--654.

\bibitem[Bakonyi and Woerdeman, 2011]{bakonyi2011}
Bakonyi, M. and Woerdeman, H.~J. (2011).
\newblock {\em Matrix Completions, Moments, and Sums of Hermitian Squares}.
\newblock Princeton University Press.

\bibitem[Baz et~al., 2022]{baz2022}
Baz, J., Díaz, I., Montes, S., and Pérez-Fernández, R. (2022).
\newblock Some results on the gaussian markov random field construction problem
  based on the use of invariant subgraphs.
\newblock {\em TEST}.

\bibitem[Bernard et~al., 2013]{bernard2013}
Bernard, E., Naveau, P., Vrac, M., and Mestre, O. (2013).
\newblock Clustering of maxima: spatial dependencies among heavy rainfall in
  france.
\newblock {\em J Climate}, 26(20):7929--7937.

\bibitem[Boyd and Vandenberghe, 2004]{boydVandenberghe2004}
Boyd, S. and Vandenberghe, L. (2004).
\newblock {\em Convex Optimization}.
\newblock Cambridge University Press.

\bibitem[Casey and Papasthatopoulos, 2023]{casey2023decomposable}
Casey, A. and Papasthatopoulos, I. (2023).
\newblock Decomposable tail graphical models.
\newblock Available from \url{https://arxiv.org/abs/2302.05182}.

\bibitem[Cooley and Thibaud, 2019]{coo2018}
Cooley, D. and Thibaud, E. (2019).
\newblock {Decompositions of dependence for high-dimensional extremes}.
\newblock {\em Biometrika}, 106(3):587--604.

\bibitem[Davison et~al., 2012]{dav2012b}
Davison, A.~C., Padoan, S.~A., and Ribatet, M. (2012).
\newblock Statistical modeling of spatial extremes.
\newblock {\em Statist. Sci.}, 27:161--186.

\bibitem[de~Fondeville and Davison, 2018]{deFondeville2018}
de~Fondeville, R. and Davison, A.~C. (2018).
\newblock High-dimensional peaks-over-threshold inference.
\newblock {\em Biometrika}, 105:575--592.

\bibitem[de~Haan and Resnick, 1977]{deHaan1977}
de~Haan, L. and Resnick, S. (1977).
\newblock Limit theory for multivariate sample extremes.
\newblock {\em Z. Wahrscheinlichkeitstheorie Verw. Gebiete}, 40:317--337.

\bibitem[Einmahl et~al., 2016]{einmahl2016}
Einmahl, J. H.~J., Kiriliouk, A., Krajina, A., and Segers, J. (2016).
\newblock An {M}-estimator of spatial tail dependence.
\newblock {\em J. R. Stat. Soc. Ser. B Stat. Methodol}, 78:275--298.

\bibitem[Einmahl et~al., 2018]{einmahl2018continuous}
Einmahl, J. H.~J., Kiriliouk, A., and Segers, J. (2018).
\newblock A continuous updating weighted least squares estimator of tail
  dependence in high dimensions.
\newblock {\em Extremes}, 21(2):205--233.

\bibitem[Einmahl et~al., 2012]{einmahl2012}
Einmahl, J. H.~J., Krajina, A., and Segers, J. (2012).
\newblock An {M}-estimator for tail dependence in arbitrary dimensions.
\newblock {\em Ann. Statist.}, 40:1764--1793.

\bibitem[Engelke and Hitz, 2020]{engelke2020}
Engelke, S. and Hitz, A.~S. (2020).
\newblock Graphical models for extremes (with discussion).
\newblock {\em J. R. Stat. Soc. Ser. B Stat. Methodol}, 82(4):871--932.

\bibitem[Engelke et~al., 2022a]{graphicalExtremes2022}
Engelke, S., Hitz, A.~S., Gnecco, N., and Hentschel, M. (2022a).
\newblock {\em graphicalExtremes: Statistical Methodology for Graphical Extreme
  Value Models}.
\newblock Available from
  \url{https://github.com/sebastian-engelke/graphicalExtremes}, R package
  version 0.1.0.9000.

\bibitem[Engelke and Ivanovs, 2021]{engelke2021a}
Engelke, S. and Ivanovs, J. (2021).
\newblock Sparse structures for multivariate extremes.
\newblock {\em Annu. Rev. Stat. Appl.}, 8:241--270.

\bibitem[Engelke et~al., 2022b]{eng_iva_kir}
Engelke, S., Ivanovs, J., and Strokorb, K. (2022b).
\newblock Graphical models for infinite measures with applications to extremes
  and {L}évy processes.
\newblock Available from \url{https://arxiv.org/abs/2211.15769}.

\bibitem[Engelke et~al., 2022c]{engelke2022a}
Engelke, S., Lalancette, M., and Volgushev, S. (2022c).
\newblock Learning extremal graphical structures in high dimensions.
\newblock Available from \url{https://arxiv.org/abs/2111.00840}.

\bibitem[Engelke et~al., 2015]{engelke2014}
Engelke, S., Malinowski, A., Kabluchko, Z., and Schlather, M. (2015).
\newblock Estimation of {H}üsler-{R}eiss distributions and {B}rown-{R}esnick
  processes.
\newblock {\em J. R. Stat. Soc. Ser. B Stat. Methodol}, 77(1):239--265.

\bibitem[Engelke and Volgushev, 2020]{engelke2020a}
Engelke, S. and Volgushev, S. (2020).
\newblock Structure learning for extremal tree models.
\newblock Available from \url{https://arxiv.org/abs/2012.06179}.

\bibitem[Fang and O'Leary, 2012]{fangEtAl2011}
Fang, H.-r. and O'Leary, D.~P. (2012).
\newblock Euclidean distance matrix completion problems.
\newblock {\em Optim. Methods Softw.}, 27(4-5):695--717.

\bibitem[Farris et~al., 1970]{farris1970}
Farris, J.~S., Kluge, A.~G., and Eckardt, M.~J. (1970).
\newblock A numerical approach to phylogenetic systematics.
\newblock {\em Syst Zool}, 19(2):172--189.

\bibitem[Friedman et~al., 2007]{friedmanEtAl2007}
Friedman, J., Hastie, T., and Tibshirani, R. (2007).
\newblock {Sparse inverse covariance estimation with the graphical lasso}.
\newblock {\em Biostatistics}, 9(3):432--441.

\bibitem[Gissibl and Kl{\"u}ppelberg, 2018]{gissibl2018}
Gissibl, N. and Kl{\"u}ppelberg, C. (2018).
\newblock Max-linear models on directed acyclic graphs.
\newblock {\em Bernoulli}, 24:2693--2720.

\bibitem[Gnecco et~al., 2021]{gneccoEtAl2021}
Gnecco, N., Meinshausen, N., Peters, J., and Engelke, S. (2021).
\newblock {Causal discovery in heavy-tailed models}.
\newblock {\em Ann. Statist.}, 49(3):1755 -- 1778.

\bibitem[Golub and Van~Loan, 1996]{golub1996}
Golub, G.~H. and Van~Loan, C.~F. (1996).
\newblock {\em Matrix Computations}.
\newblock Johns Hopkins University Press, Baltimore, Maryland, third edition.

\bibitem[Gower, 1982]{gower1982}
Gower, J. (1982).
\newblock Euclidean distance geometry.
\newblock {\em Math. Scientist}, 7:1--14.

\bibitem[Heffernan and Tawn, 2004]{heffernan2004}
Heffernan, J.~E. and Tawn, J.~A. (2004).
\newblock A conditional approach for multivariate extreme values.
\newblock {\em J. R. Stat. Soc. Ser. B Stat. Methodol}, 66(3):497--546.

\bibitem[Ho and Dombry, 2019]{hoDombry2017}
Ho, Z. W.~O. and Dombry, C. (2019).
\newblock Simple models for multivariate regular variation and the
  {H}üsler--{R}eiss {P}areto distribution.
\newblock {\em J. Multivariate Anal.}, 173:525--550.

\bibitem[Hu et~al., 2022]{hu2022}
Hu, S., Peng, Z., and Segers, J. (2022).
\newblock Modelling multivariate extreme value distributions via markov trees.
\newblock Available from \url{https://arxiv.org/abs/2208.02627}.

\bibitem[H{\"u}sler and Reiss, 1989]{hueslerReiss1989}
H{\"u}sler, J. and Reiss, R.-D. (1989).
\newblock {Maxima of normal random vectors: Between independence and complete
  dependence}.
\newblock {\em Statist. Prob. Letters}, 7(4):283--286.

\bibitem[Kabluchko et~al., 2009]{kabluchkoSchlatherDeHaan2009}
Kabluchko, Z., Schlather, M., and de~Haan, L. (2009).
\newblock {Stationary max-stable fields associated to negative definite
  functions}.
\newblock {\em Ann. Probab.}, 37(5):2042 -- 2065.

\bibitem[Kaufman and Rousseeuw, 2009]{kaufman2009finding}
Kaufman, L. and Rousseeuw, P.~J. (2009).
\newblock {\em Finding groups in data: an introduction to cluster analysis},
  volume 344.
\newblock John Wiley \& Sons.

\bibitem[Kiriliouk et~al., 2018]{kiriliouk2018POT}
Kiriliouk, A., Rootz\'en, H., Segers, J., and Wadsworth, J.~L. (2018).
\newblock Peaks over thresholds modeling with multivariate generalized {Pareto}
  distributions.
\newblock {\em Technometrics}, 61:123--135.

\bibitem[Knill, 2013]{knill2013}
Knill, O. (2013).
\newblock Cauchy--{B}inet for pseudo-determinants.
\newblock {\em Linear Algebra Appl.}, 459.

\bibitem[Kulpa, 1998]{kulpa1998}
Kulpa, W. (1998).
\newblock Poincar\'{e} and domain invariance theorem.
\newblock {\em Acta Univ. Carolin. Math. Phys.}, 39(1-2):127--136.

\bibitem[Lalancette, 2023]{lalancette2023pairwise}
Lalancette, M. (2023).
\newblock On pairwise interaction multivariate {Pareto} models.
\newblock Available from \url{https://arxiv.org/abs/2306.12074}.

\bibitem[Lalancette et~al., 2021]{lalancette2021}
Lalancette, M., Engelke, S., and Volgushev, S. (2021).
\newblock {Rank-based estimation under asymptotic dependence and independence,
  with applications to spatial extremes}.
\newblock {\em Ann. Statist.}, 49(5):2552 -- 2576.

\bibitem[Lauritzen, 1996]{lauritzen1996}
Lauritzen, S.~L. (1996).
\newblock {\em Graphical models}, volume~17 of {\em Oxford statistical science
  series}.
\newblock Clarendon Press, Oxford.

\bibitem[Lederer and Oesting, 2023]{lederer2023extremes}
Lederer, J. and Oesting, M. (2023).
\newblock Extremes in high dimensions: Methods and scalable algorithms.
\newblock Available from \url{https://arxiv.org/abs/2303.04258}.

\bibitem[Ledford and Tawn, 1996]{ledfordTawn1996}
Ledford, A.~W. and Tawn, J.~A. (1996).
\newblock Statistics for near independence in multivariate extreme values.
\newblock {\em Biometrika}, 83:169--187.

\bibitem[Lee and Joe, 2018]{lee2018}
Lee, D. and Joe, H. (2018).
\newblock Multivariate extreme value copulas with factor and tree dependence
  structures.
\newblock {\em Extremes}, 21:147--176.

\bibitem[Mhalla et~al., 2020]{MhallaEtAl2020}
Mhalla, L., Chavez-Demoulin, V., and Dupuis, D.~J. (2020).
\newblock {Causal Mechanism of Extreme River Discharges in the Upper Danube
  Basin Network}.
\newblock {\em J. R. Stat. Soc. Ser. C. Appl. Stat.}, 69(4):741--764.

\bibitem[Padoan et~al., 2010]{padoan2010}
Padoan, S.~A., Ribatet, M., and Sisson, S.~A. (2010).
\newblock Likelihood-based inference for max-stable processes.
\newblock {\em J. Amer. Statist. Assoc.}, 105:263--277.

\bibitem[Papastathopoulos et~al., 2017]{papastathopoulos2017}
Papastathopoulos, I., Strokorb, K., Tawn, J.~A., and Butler, A. (2017).
\newblock Extreme events of {Markov} chains.
\newblock {\em Adv. in Appl. Probab.}, 49(1):134--161.

\bibitem[Penrose, 1955]{penrose1955}
Penrose, R. (1955).
\newblock A generalized inverse for matrices.
\newblock {\em Math. Proc. Cambridge Philos. Soc.}, 51(3):406--413.

\bibitem[Rakočević, 1997]{rakocevic1997}
Rakočević, V. (1997).
\newblock On continuity of the {M}oore-{P}enrose and {D}razin inverses.
\newblock {\em Mat. Vesnik}, 49.

\bibitem[Reich et~al., 2014]{reich2014hierarchical}
Reich, B.~J., Shaby, B.~A., and Cooley, D. (2014).
\newblock A hierarchical model for serially-dependent extremes: A study of heat
  waves in the western {US}.
\newblock {\em J. Agric. Biol. Environ. Stat.}, 19:119--135.

\bibitem[Rootz\'en et~al., 2018]{rootzen2018}
Rootz\'en, H., Segers, J., and Wadsworth, J.~L. (2018).
\newblock Multivariate generalized {Pareto} distributions: Parametrizations,
  representations, and properties.
\newblock {\em J. Multivariate Anal.}, 165:117--131.

\bibitem[Rootz{\'e}n and Tajvidi, 2006]{rootzen2006}
Rootz{\'e}n, H. and Tajvidi, N. (2006).
\newblock Multivariate generalized {P}areto distributions.
\newblock {\em Bernoulli}, 12:917--930.

\bibitem[Rue and Held, 2005]{rue2005}
Rue, H. and Held, L. (2005).
\newblock {\em Gaussian Markov random fields: theory and applications}, volume
  104 of {\em Monographs on statistics and applied probability}.
\newblock Chapman \& Hall/CRC, Boca Raton, Fla.

\bibitem[Röttger et~al., 2023a]{roettger2023parametric}
Röttger, F., Coons, J.~I., and Grosdos, A. (2023a).
\newblock Parametric and nonparametric symmetries in graphical models for
  extremes.
\newblock Available from \url{https://arxiv.org/ags/2306.00703}.

\bibitem[Röttger et~al., 2023b]{roettger2023}
Röttger, F., Engelke, S., and Zwiernik, P. (2023b).
\newblock {Total positivity in multivariate extremes}.
\newblock {\em Ann. Statist.}, 51(3):962 -- 1004.

\bibitem[Röttger and Schmitz, 2023]{roettgerSchmitz2023}
Röttger, F. and Schmitz, Q. (2023).
\newblock On the local metric property in multivariate extremes.
\newblock Available from \url{https://arxiv.org/abs/2212.10350}.

\bibitem[Saunders et~al., 2020]{saunders2019}
Saunders, K.~R., Stephenson, A.~G., and Karoly, D.~J. (2020).
\newblock {A regionalisation approach for rainfall based on extremal
  dependence}.
\newblock {\em Extremes}.

\bibitem[Segers, 2020]{segers2020}
Segers, J. (2020).
\newblock One- versus multi-component regular variation and extremes of
  {Markov} trees.
\newblock {\em Adv. in Appl. Probab.}, 52:855--878.

\bibitem[Speed and Kiiveri, 1986]{speedKivery1986}
Speed, T.~P. and Kiiveri, H.~T. (1986).
\newblock Gaussian {Markov} distributions over finite graphs.
\newblock {\em Ann. Statist.}, 14(1):138--150.

\bibitem[Uhler, 2017]{uhler2017}
Uhler, C. (2017).
\newblock Gaussian graphical models: An algebraic and geometric perspective.
\newblock Available from \url{https://arxiv.org/abs/1707.04345}.

\bibitem[Vignotto et~al., 2021]{vignotto2021}
Vignotto, E., Engelke, S., and Zscheischler, J. (2021).
\newblock Clustering bivariate dependencies of compound precipitation and wind
  extremes over {G}reat {B}ritain and {I}reland.
\newblock {\em Weather Clim. Extremes}, 32:100318.

\bibitem[Wadsworth and Tawn, 2014]{wadsworthTawn2014}
Wadsworth, J. and Tawn, J. (2014).
\newblock Efficient inference for spatial extreme value processes associated to
  log-{Gaussian} random functions.
\newblock {\em Biometrika}, 101(1):1--15.

\bibitem[Wainwright and Jordan, 2008]{wainwright2008}
Wainwright, M. and Jordan, M. (2008).
\newblock Graphical models, exponential families, and variational inference.
\newblock {\em Found. Trends Mach. Learn.}, 1(1--2):1--305.

\bibitem[Wan and Zhou, 2023]{wan2023graphical}
Wan, P. and Zhou, C. (2023).
\newblock Graphical lasso for extremes.
\newblock Available from \url{https://arxiv.org/abs/2307.15054}.

\end{thebibliography}
\bibliographystyle{apalike}

\newpage

\begin{center}
{\large\bf SUPPLEMENTARY MATERIAL}
\end{center}

\etocdepthtag.toc{mtappendix}
\etocsettagdepth{mtchapter}{none}
\etocsettagdepth{mtappendix}{subsection}
\tableofcontents
\clearpage

\setcounter{section}{0}
\renewcommand{\thesection}{S.\arabic{section}}
\renewcommand{\theHsection}{S.\arabic{section}}
\setcounter{figure}{0}
\renewcommand{\thefigure}{S.\arabic{figure}}

\section{Background}
\subsection{\MP{} inverse}
\label{subsec:moorePenrose}

\begin{definition}
    \label{def:pseudoinverse}
    As shown for example in \citet[Theorem~1]{penrose1955},
    for any matrix $A \in \Rd[n \times m]$ there exists a unique
    matrix $B \in \Rd[m \times n]$ satisfying the equations
    \begin{align*}
        ABA
        &=
        A,
        &
		\lr{AB}\T
		&=
		AB,
        \\
        BAB
        &=
        B,
		&
        \lr{BA}\T
        &=
        BA
        .
    \end{align*}
    This solution is called the Moore--Penrose inverse
    or simply \quotes{pseudo-inverse} of $A$ and is denoted by $A^+ := B$.
\end{definition}
\begin{remark}
    The pseudo-inverse is defined in a similar way for matrices with complex-valued entries,
    using the conjugate transpose in place of the transpose.
\end{remark}

\begin{lemma}
    \label{lemma:propertiesMP}
    The pseudo-inverse has the following properties.
    \begin{enumerate}
        \item $\lr{A\pinv}\pinv = A$
        \item $\lr{AA\T}^+ = \lr{A^+}\T A^+$
            (note that in general $\lr{AB}^+ \neq B^+A^+$).
        \item $AA^+$ is the orthogonal projection onto the image of $A$.
        \item Using the singular value decomposition
            $A = U\Sigma V\T$, the pseudo-inverse can be computed as
            $A\pinv = V \Sigma\pinv U\T$, with
            \begin{align*}
                \Sigma\pinv
                =
                \diag{\Sigma_{11}\inv, \dots, \Sigma_{rr}\inv, 0, ... 0}
                , \quad
                r
                =
                \mathrm{rank}\lr{A}
                .
            \end{align*}
            \label{lemma:propertiesMP:SVD}
    \end{enumerate}
\end{lemma}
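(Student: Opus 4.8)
The plan is to prove the four claims in a slightly non-obvious order, taking the singular value decomposition characterization (item~\ref{lemma:propertiesMP:SVD}) as the foundation and deriving the remaining three from it together with the uniqueness asserted in \cref{def:pseudoinverse}. First I would establish item~\ref{lemma:propertiesMP:SVD} directly. Writing $A = U\Sigma V\T$ with $U \in \Rd[n\times n]$ and $V \in \Rd[m\times m]$ orthogonal and $\Sigma$ the rectangular diagonal matrix of singular values, I set $B = V\Sigma\pinv U\T$ with $\Sigma\pinv$ as in the statement and verify the four defining equations of \cref{def:pseudoinverse}. Using $U\T U = I$ and $V\T V = I$, the products collapse to $ABA = U\Sigma\Sigma\pinv\Sigma V\T$ and $BAB = V\Sigma\pinv\Sigma\Sigma\pinv U\T$, and the diagonal identities $\Sigma\Sigma\pinv\Sigma = \Sigma$ and $\Sigma\pinv\Sigma\Sigma\pinv = \Sigma\pinv$ hold entrywise because $\Sigma\pinv$ inverts exactly the nonzero diagonal entries of $\Sigma$. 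Likewise $AB = U\lr{\Sigma\Sigma\pinv}U\T$ and $BA = V\lr{\Sigma\pinv\Sigma}V\T$ are symmetric, since $\Sigma\Sigma\pinv$ and $\Sigma\pinv\Sigma$ are diagonal $0$–$1$ matrices. By uniqueness, $B = A\pinv$.

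Next, items~1 and~3 follow from the defining equations alone, without the SVD. For item~1, I would observe that the four equations of \cref{def:pseudoinverse} are invariant under interchanging the roles of $A$ and $B = A\pinv$: the pair $(B, A)$ satisfies exactly the same system, so $A$ is itself the pseudo-inverse of $A\pinv$, giving $\lr{A\pinv}\pinv = A$. For item~3, the matrix $P := AA\pinv$ is symmetric by the third defining equation and idempotent since $P^2 = (AA\pinv A)A\pinv = AA\pinv = P$ by the first; hence $P$ is an orthogonal projection. Its range equals $\Image(A)$, because $\Image(P) \subseteq \Image(A)$ trivially, while any $y = Ax$ satisfies $Py = AA\pinv A x = Ax = y$ by the first defining equation, so $y \in \Image(P)$ and is in fact fixed by $P$.

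Finally I would prove item~2 via the SVD, which is the step requiring the most care: the naive reverse-order law $\lr{AB}\pinv = B\pinv A\pinv$ is false in general, so the identity cannot be obtained by formal manipulation and the special structure $B = A\T$ must be used. From $A = U\Sigma V\T$ one gets $AA\T = U\lr{\Sigma\Sigma\T}U\T$, so item~\ref{lemma:propertiesMP:SVD} applied to the symmetric matrix $AA\T$ gives $\lr{AA\T}\pinv = U\lr{\Sigma\Sigma\T}\pinv U\T$, whereas $\lr{A\pinv}\T A\pinv = U\lr{\Sigma\pinv}\T\Sigma\pinv U\T$. It then remains to check the purely diagonal identity $\lr{\Sigma\Sigma\T}\pinv = \lr{\Sigma\pinv}\T\Sigma\pinv$, both sides being the diagonal matrix carrying the reciprocals of the squared nonzero singular values on the diagonal and zero elsewhere.

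The main obstacle throughout is not conceptual but bookkeeping: one must track the rectangular block structure of $\Sigma$ and $\Sigma\pinv$ carefully so that the diagonal identities $\Sigma\Sigma\pinv\Sigma = \Sigma$ and $\lr{\Sigma\Sigma\T}\pinv = \lr{\Sigma\pinv}\T\Sigma\pinv$ are read off correctly, and for item~2 one must resist the temptation to apply a reverse-order law that does not hold in general.
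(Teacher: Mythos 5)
Your proposal is correct, but it takes a more self-contained route than the paper. The paper's proof is terse: item~1 is obtained from the interchangeability of $A$ and $B$ in \cref{def:pseudoinverse} (exactly as you do), item~2 is verified by substituting both sides into the four defining equations, and items~3 and~4 are simply cited from Golub and Van Loan. You instead prove the SVD representation (item~\ref{lemma:propertiesMP:SVD}) from scratch by checking the Penrose equations for $V\Sigma\pinv U\T$ and invoking uniqueness, then leverage it to get item~2 via the diagonal identity $\lr{\Sigma\Sigma\T}\pinv = \lr{\Sigma\pinv}\T\Sigma\pinv$, and you prove item~3 directly from the defining equations (symmetry plus idempotence of $AA\pinv$, with $\Image\lr{AA\pinv} = \Image\lr{A}$ because $AA\pinv$ fixes every $Ax$). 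Both routes are sound; what yours buys is a proof with no external dependency and a clear logical ordering in which the SVD does the heavy lifting, at the cost of the rectangular-block bookkeeping you flag. What the paper's approach buys is brevity: direct substitution for item~2 avoids the SVD entirely (it needs only the defining equations of $A\pinv$ and transposition rules), and the textbook citation disposes of the two standard facts. One small point worth making explicit in your item~2 argument: when you apply item~\ref{lemma:propertiesMP:SVD} to $AA\T = U\lr{\Sigma\Sigma\T}U\T$, you are using that this eigendecomposition is itself a valid SVD (the diagonal entries $\sigma_i^2$ are nonnegative), or equivalently you can just re-run the Penrose-equation check for $U\lr{\Sigma\Sigma\T}\pinv U\T$ directly; either way the step is legitimate.
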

\begin{proof}
    The first statement follows from the interchangeability of $A$ and $B$
    in \cref{def:pseudoinverse}.
    The second statement can be proven by substituting the left- and right-hand side in the
    defining equations from \cref{def:pseudoinverse}.
    The last two statements are from \citet[Section~5.5.4]{golub1996}.
\end{proof}

\begin{lemma}
    \label{lemma:conditionsPinv}
    Let $A$ and $B$ be two symmetric matrices of equal size and
    let $P_A$ denote the orthogonal projection matrix onto the image of $A$.
    Then the following two conditions are sufficient and necessary
    for $A=B\pinv$:
    \begin{itemize}
        \item $\Image B \subseteq \Image A$
            (or equivalently $\ker A \subseteq \ker B$);
        \item $AB = P_A$.
    \end{itemize}
\end{lemma}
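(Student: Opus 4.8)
The plan is to translate both bullet conditions into the four defining Penrose equations of \cref{def:pseudoinverse} and check the equivalence in each direction. Throughout I would use that for a symmetric matrix $M$ one has $\operatorname{Im} M = (\ker M)^{\perp}$, so that the image inclusion $\operatorname{Im} B \subseteq \operatorname{Im} A$ and the kernel inclusion $\ker A \subseteq \ker B$ are equivalent by passing to orthogonal complements; this also justifies the parenthetical remark in the statement. A second preliminary observation is that, since $A$ and $B$ are symmetric, $(AB)^{\top} = BA$, so the relation $AB = P_A$ together with the symmetry of the projection $P_A$ forces $AB = BA = P_A$. In particular, once condition (ii) holds, the two symmetry equations $(AB)^{\top} = AB$ and $(BA)^{\top} = BA$ hold automatically.

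For the forward direction I would assume $A = B^{+}$. By the first property of \cref{lemma:propertiesMP}, $(B^{+})^{+} = B$, hence $B = A^{+}$. Then $AB = A A^{+}$, which by the third property of \cref{lemma:propertiesMP} is precisely the orthogonal projection onto $\operatorname{Im} A$; this gives (ii). For the image inclusion I would combine the Penrose equation $BAB = B$ (one of the four relations defining $A = B^{+}$) with $AB = P_A$: then $B = B(AB) = B P_A$, and transposing the symmetric matrices yields $B = P_A B$, so every column of $B$ lies in $\operatorname{Im} P_A = \operatorname{Im} A$, which is (i).

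For the converse I would assume (i) and (ii). As noted above, (ii) gives $AB = BA = P_A$ symmetric, so the two symmetry Penrose equations hold. The equation $ABA = A$ follows from $ABA = P_A A = A$, because $P_A$ acts as the identity on $\operatorname{Im} A$, which contains the columns of $A$. The equation $BAB = B$ follows from $BAB = P_A B = B$, where the identity $P_A B = B$ uses exactly the image inclusion (i). Having verified all four equations of \cref{def:pseudoinverse}, uniqueness of the Moore--Penrose inverse gives $A = B^{+}$.

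I do not expect a genuine obstacle here: the argument is essentially bookkeeping with the Penrose equations. The only points requiring care are the systematic use of symmetry — both to interchange the image and kernel formulations and to upgrade the single relation $AB = P_A$ to $AB = BA = P_A$ — and keeping straight the role swap $A = B^{+} \Leftrightarrow B = A^{+}$ when invoking the projection identity $A A^{+} = P_A$.
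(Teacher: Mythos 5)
Your proof is correct and follows essentially the same route as the paper: both directions are handled by checking the four Penrose equations from \cref{def:pseudoinverse}, using the symmetry of $A$, $B$, and $P_A$ to get the two symmetry equations, the projection property of $AA\pinv$ from \cref{lemma:propertiesMP} for the forward direction, and the image inclusion to obtain $P_A B = B$ in the step $BAB = B$. Your write-up is somewhat more explicit than the paper's (notably the role swap via $(B\pinv)\pinv = B$ and the derivation of the image inclusion from $BAB = B$), but the underlying argument is the same.
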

\begin{proof}
	In the first bullet point, the equivalence of $\Image B \subseteq \Image A$ and $\ker A \subseteq \ker B$ follows from the identity $(\ker M)^\perp = \Image M$ for symmetric matrices $M$.

    For $A = B\pinv$,
    the two conditions follow directly from \cref{lemma:propertiesMP}; note that by the lemma and by symmetry of $A$ and $B$, we have $AB = (AB)\T = B\T A\T = BA$.

    Conversely, using the fact that the projection matrix $P_A$ is symmetric,
    the defining equations from \cref{def:pseudoinverse}
    follow from the two conditions as follows:
    \begin{alignat*}{3}
        AB &= P_A = (P_A)\T = (AB)\T,
        \\
        BA &= B\T A\T = (AB)\T = AB = A\T B\T = (BA)\T,
        \\
        ABA &= P_A A = A,
        \\
        BAB &= (AB) B = P_A B = B,
    \end{alignat*}
    where we used $\Image B \subseteq \Image A$ in the last step.
\end{proof}

\subsection{Pseudo-determinant}
\label{subsec:pseudoDet}

\cref{def:pseudoDet} and \cref{lemma:pseudoDet1}
are from \citet[Section~2]{knill2013}
and only adapted in scope and notation for their use here.
\begin{definition}
    \label{def:pseudoDet}
    Let $A$ be a square matrix with eigenvalues $\set{\lambda_i}$.
    Then its pseudo-determinant,
    denoted by $\pdet{A}$,
    is defined as the product of its non-zero eigenvalues:
    \begin{align*}
        \pdet{A}
        =
        \prod_{\lambda_i \neq 0}
        \lambda_i
        .
    \end{align*}
    If all eigenvalues are zero, $\pdet{A} = 1$.
\end{definition}
\begin{lemma}
    \label{lemma:pseudoDet1}
    Let $A, B \in \Rdd$.
    \begin{enumerate}
        \item If $A$ is similar to $B$, then $\pdet{A} = \pdet{B}$.
        \item If $A$ is invertible then $\pdet{A} = \abs{A}$.
        \item $\pdet{A\T} = \pdet{A}$.
        \item For a normal matrix $A$, it holds that $\pdet{A\pinv} = 1/\pdet{A}$.
        \item $\pdet{A\T B} = \pdet{A B\T} = \pdet{B A\T}$.
        \item $\pdet{A} \neq 0$.
        \item If $A$ is block diagonal, i.e., if
            $A = \mathrm{Diag}\lr{A_1, \dots, A_k}$,
            then
            $\pdet{A} = \prod_i \pdet{A_i}$.
    \end{enumerate}
\end{lemma}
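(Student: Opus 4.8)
The plan is to read off every item from the defining property that $\pdet{A}$ is the product of the nonzero eigenvalues of $A$, with the empty product taken to be $1$, so that each statement reduces to an elementary fact about how the spectrum transforms. Four of the seven items are then essentially immediate. For (1), similar matrices share a characteristic polynomial and hence the same eigenvalues with multiplicities, so the products of their nonzero eigenvalues coincide. For (2), invertibility means no eigenvalue vanishes, so the product of the nonzero eigenvalues equals the product of all eigenvalues, which is $\abs{A}$. For (3), $\det(A\T - \lambda\Id) = \det((A - \lambda\Id)\T) = \det(A - \lambda\Id)$, so $A$ and $A\T$ have the same characteristic polynomial and hence the same spectrum. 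For (6), a product of nonzero scalars is nonzero and the empty product is $1$, so $\pdet{A} \neq 0$ in every case. For (7), the spectrum of a block-diagonal matrix is the union, counted with multiplicity, of the spectra of its blocks; thus the nonzero eigenvalues of $A$ are exactly the pooled nonzero eigenvalues of the $A_i$, the product factorizes, and a block whose eigenvalues all vanish simply contributes the factor $1$, in line with the convention.

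For (4), I would invoke the spectral theorem: a normal matrix is unitarily diagonalizable, $A = U \Lambda U^{*}$ with $\Lambda = \diag{\lambda_1, \dots, \lambda_d}$. As in the singular-value characterization of \cref{lemma:propertiesMP}, one checks directly from the defining equations that $A\pinv = U \Lambda\pinv U^{*}$, where $\Lambda\pinv$ replaces each nonzero $\lambda_i$ by $\lambda_i\inv$ and leaves the zeros fixed. Hence the nonzero eigenvalues of $A\pinv$ are precisely the reciprocals of those of $A$, giving $\pdet{A\pinv} = \prod_{\lambda_i \neq 0} \lambda_i\inv = 1/\pdet{A}$; the degenerate case $A = 0$ yields $1 = 1/1$ by the empty-product convention.

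The crux is (5), and the main obstacle is the single auxiliary identity $\pdet{XY} = \pdet{YX}$ for square matrices $X, Y$ of equal size. This rests on the classical fact that $XY$ and $YX$ have the same characteristic polynomial, hence the same eigenvalues counted with algebraic multiplicity, and therefore the same collection of nonzero eigenvalues and the same pseudo-determinant. I would establish it either by continuity and density — the identity is clear when $X$ is invertible, since then $YX = X\inv(XY)X$ is similar to $XY$, and both sides are continuous in the entries — or by directly citing the determinant identity for the two characteristic polynomials. Granting this, the three equalities follow quickly: taking $X = A\T$, $Y = B$ gives $\pdet{A\T B} = \pdet{B A\T}$; and combining (3) with $(A\T B)\T = B\T A$ and then the $XY \leftrightarrow YX$ swap gives $\pdet{A\T B} = \pdet{B\T A} = \pdet{A B\T}$. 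Care is needed only in tracking that the \emph{multiplicities}, not merely the sets of eigenvalues, agree, so that the products genuinely match.
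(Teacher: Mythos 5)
Your proof is correct, but there is nothing in the paper to compare it against: \cref{lemma:pseudoDet1} is stated without proof, being taken---``only adapted in scope and notation''---from \citet[Section~2]{knill2013}. Your self-contained spectral argument is the natural way to fill in that citation. All seven items indeed reduce to facts about the spectrum counted with algebraic multiplicity over $\mathbb{C}$: items (1), (2), (3), (6), (7) are immediate; item (4) follows from the unitary diagonalization $A = U \Lambda U^{*}$ once one verifies, via the four Penrose equations and uniqueness, that $A\pinv = U \Lambda\pinv U^{*}$; and item (5) rests on the classical fact that $XY$ and $YX$ share a characteristic polynomial, which you correctly combine with item (3) to get all three equalities. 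Two points you touch on deserve to stay explicit in any written version: first, eigenvalues must be taken over $\mathbb{C}$ with algebraic multiplicity---with only real eigenvalues item (2) already fails, e.g.\ for a scaled plane rotation, which has determinant larger than $1$ but no real eigenvalues; second, for item (4) one should note that the real and complex \MP{} inverses of a real matrix coincide (the real inverse satisfies the complex Penrose equations, whose solution is unique), so that passing through a possibly complex unitary $U$ is legitimate. What your route buys is self-containedness of the supplement and a clear treatment of the nilpotent edge cases via the empty-product convention; what the paper's citation buys is brevity, which is presumably why the authors chose it.
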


\begin{lemma}
    \label{lemma:pseudoDet2}
    Let $V$ be a linear subspace of $\Rd$ and put
    \begin{align*}
        \Qcal_V = \set{
            A \in \Rdd
            :
            A = A\T
            , \,
            \ker{A} = V
        }
        .
    \end{align*}
    Then for any $A_1, A_2 \in \Qcal_V$,
    \begin{align*}
        \pdet{A_1 A_2}
        =
        \pdet{A_1}
        \cdot
        \pdet{A_2}
        .
    \end{align*}
    Furthermore, for a sequence $A_i \in \Qcal_V$, $i \in \N$, with
    $\limit{i}{A_i} = A_0 \in \Qcal_V$,
    \begin{align*}
        \limit{i}{\pdet{A_i}}
        =
        \pdet{A_0}
        .
    \end{align*}
\end{lemma}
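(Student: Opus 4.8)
The plan is to observe that every matrix in $\Qcal_V$ is block-diagonalized by one and the same orthogonal change of basis, which collapses both claims to elementary statements about ordinary determinants of invertible matrices.

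First I would fix the common geometry. Each $A \in \Qcal_V$ is symmetric, so the identity $(\ker M)^\perp = \Image M$ for symmetric $M$ (used already in the proof of \cref{lemma:conditionsPinv}) gives $\Image A = V^\perp =: W$. Thus all matrices in $\Qcal_V$ share not only the kernel $V$ but also the image $W$. I would then fix once and for all an orthogonal matrix $U = [U_1 \mid U_2]$ whose first $\dim V$ columns form an orthonormal basis of $V$ and whose remaining $r := \dim W$ columns form an orthonormal basis of $W$, and verify that
\[
  U\T A U
  =
  \begin{pmatrix} 0 & 0 \\ 0 & \tilde A \end{pmatrix},
  \qquad
  \tilde A := U_2\T A U_2 \in \R^{r \times r},
\]
for every $A \in \Qcal_V$: the columns of $U_1$ lie in $\ker A$, which kills the first block column, while $A U_2 \in W \perp V$ kills the off-diagonal block $U_1\T A U_2$. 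The block $\tilde A$ is symmetric and invertible, since $A$ restricts to a bijection of $W$. Combining invariance of the pseudo-determinant under similarity with the block-diagonal rule (\cref{lemma:pseudoDet1}), together with $\pdet{0} = 1$ and $\pdet{\tilde A} = \det \tilde A$ for invertible $\tilde A$, gives the key identity $\pdet{A} = \det \tilde A$ valid for all $A \in \Qcal_V$.

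With this identity the multiplicativity is immediate. For $A_1, A_2 \in \Qcal_V$ I would write
\[
  U\T A_1 A_2 U
  =
  (U\T A_1 U)(U\T A_2 U)
  =
  \begin{pmatrix} 0 & 0 \\ 0 & \tilde A_1 \tilde A_2 \end{pmatrix},
\]
so $A_1 A_2$ is orthogonally similar to a block-diagonal matrix whose only nontrivial block $\tilde A_1 \tilde A_2$ is invertible. Applying \cref{lemma:pseudoDet1} once more yields $\pdet{A_1 A_2} = \det(\tilde A_1 \tilde A_2) = \det \tilde A_1 \cdot \det \tilde A_2 = \pdet{A_1} \pdet{A_2}$. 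The continuity claim follows along the same line: if $A_i \to A_0$ within $\Qcal_V$, then $\tilde A_i = U_2\T A_i U_2 \to U_2\T A_0 U_2 = \tilde A_0$ by continuity of matrix multiplication, and since $\det$ is continuous on $\R^{r \times r}$ we obtain $\pdet{A_i} = \det \tilde A_i \to \det \tilde A_0 = \pdet{A_0}$.

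The one point deserving care — and the main (if mild) obstacle — is that $A_1 A_2$ is generally not symmetric, so its pseudo-determinant cannot be read off from a spectral decomposition of its own; the argument instead exploits orthogonal similarity to a block-diagonal form with a single invertible block, which is exactly what the shared basis $U$ furnishes. Everything else rests on the fact that on the common image $W$ each $A \in \Qcal_V$ is genuinely invertible, so pseudo-determinants reduce to ordinary determinants and both multiplicativity and continuity are inherited from the corresponding properties of $\det$.
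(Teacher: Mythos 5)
Your proof is correct and follows essentially the same route as the paper's: both fix a single basis change adapted to the decomposition $V \oplus V^\perp$, block-diagonalize every matrix in $\Qcal_V$ simultaneously, identify $\pdet{A}$ with the ordinary determinant of the invertible block, and then inherit multiplicativity and continuity from $\det$. If anything, your explicit use of an orthonormal basis of $V^\perp$ and the observation $\Image A = V^\perp$ makes rigorous the vanishing of the off-diagonal blocks, which the paper's choice of an arbitrary completing basis leaves implicit.
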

\begin{proof}
    Let $k = \dim V$, $l = d-k$, and let $\set{b_1, \dots, b_d}$ be a basis of $\Rd$ with
    $V = \mathrm{span}\lr{b_1, \dots, b_k}$.
    Let $M$ denote the corresponding basis change matrix and 
    $B_i = M\inv A_i M$ for $i \geq 0$.
    By construction, $B_i$ is of the form
    \begin{align*}
        \lr{
            \begin{matrix}
                \zeroVec_{k \times k} & \zeroVec_{k \times l}
                \\
                \zeroVec_{l \times k} & B_i'
            \end{matrix}
        }
    \end{align*}
    with $B_i' \in \Rdd[l]$ invertible
    and $\zeroVec_{a \times b} \in \Rd[a \times b]$ equal to zero in all entries.
    Using the properties from \cref{lemma:pseudoDet1} it follows that
    \begin{align*}
        \pdet{A_i}
        &=
        \pdet{B_i}
        =
        \pdet{\zeroVec_{k \times k}}
        \cdot
        \pdet{B_i'}
        =
        \abs{B_i'}
        .
    \end{align*}
    Since $M$ is chosen independently of $i$,
    \begin{align*}
        \pdet{A_1} \cdot \pdet{A_2}
        &=
        \abs{B_1'} \cdot \abs{B_2'}
        \\ &=
        \abs{B_1' B_2'}
        \\ &=
        \pdet{B_1 B_2}
        \\
        &=
        \pdet{M B_1 M\inv M B_2 M\inv}
        \\
        &=
        \pdet{A_1 A_2}
        .
    \end{align*}
    Furthermore,
    continuity of matrix multiplication implies that
    $B_0 = \limit{i}{B_i}$
	and hence
    $\limit{i}{B_i'} = B_0'$.
    Using the continuity of the regular determinant,
    it follows that
    \begin{align*}
        \limit{i}{\pdet{A_i}}
        &=
        \limit{i}{\abs{B_i'}}
        =
        \abs{B_0'}
        =
        \pdet{A_0}
        . \qedhere
    \end{align*}
\end{proof}

\subsection{Graph theory}
\label{sec:graphTheory}

All graphs considered in this paper are undirected, simple graphs without loops.
A graph $G=(V,E)$ is defined by a set of vertices
$V = \{\oneToX{d}\}$ and a set of edges
$E \subseteq V \times V$. %
Since we only consider undirected graphs without loops,
the set of all possible edges is $\allEdges := \tlr{V \times V} \setminus \setM{(i,i)}{i \in V}$,
and it must always hold that $(i,j) \in E \Rightarrow (j,i) \in E$.
The set of loops from each node to itself is denoted as
$\diagV = \setM{(i,i)}{i \in V}$.
These loops are no edges in the sense defined above,
but can be useful when identifying edges and subgraphs
with matrix entries and submatrices, respectively.
For a set of edges $E \subseteq \allEdges$,
the inclusion of the loops from each node to itself is denoted as
$\Ebar = E \cup \diagV$.

A graph is called complete if $E = \allEdges$.
A subgraph $G' = (V', E')$ of $G$ is a graph consisting of a subset of vertices
$V' \subseteq V$ and a subset of $E$ such that all endpoints lie in $V'$, i.e.,
$E' \subseteq E \cap \allEdges[V']$.
If $E'$ is maximal (i.e., the latter set inclusion is an equality),
$G'$ is called the subgraph induced by $V'$.
A subset of vertices $C \subseteq V$ is called complete if its induced subgraph is a complete graph.
A subset of vertices is called a clique if it is complete
and not a strict subset of another complete subset.

The neighborhood of a vertex $i$ is defined as
$\delta(i) = \set{j \in V: (i,j) \in E}$.
The neighborhood of a vertex including the vertex itself is denoted as
$\bar\delta(i) := \delta(i) \cup \set{i}$.
A path of length $m$ between vertices $i$ and $j$ is a sequence of $m+1$ distinct vertices
$p_0, p_1, \dots, p_m$ such that $p_0 = i$, $p_m = j$, and $(p_{i-1}, p_{i}) \in E$ for all $i=\oneToX{m}$.
If there exists a path between two vertices, they are said to be connected.
A graph is connected if any two of its vertices are connected.
A cycle of length $m$ is a sequence of $m$ distinct vertices
$p_1, \dots, p_m$ such that $(p_m, p_1) \in E$ and $(p_{i-1}, p_{i}) \in E$ for all $i=2,\ldots,m$.
A chord is an edge between two vertices of a cycle that is not itself part of the cycle.

\begin{definition}[Decomposable Graph]
    A decomposable graph is a graph in which all cycles of four or more vertices have a chord.
\end{definition}

A useful property of decomposable graphs is the following 
running intersection property \citep[see e.g.,][Proposition~2.17]{lauritzen1996}.

\begin{lemma}[Running intersection property]
    \label{def:runningIntersection}
    A graph $G$ is decomposable,
    if and only if
    its set of cliques $\Cli = \set{C_1, \dots, C_N}$
    can be
    ordered such that the running intersection property is fulfilled, that is,
    for all $i = 2,\ldots,N$ there exists $k(i) \in \{1,\ldots,i-1\}$ such that
    \[
    	D_i := C_i \cap (C_1 \cup \ldots \cup C_{i-1}) \subseteq C_{k(i)}.
    \]
    The multiset $\Sep = \set{D_2, \dots, D_N}$ 
    is independent of the chosen ordering of $\Cli$
    and its elements are called separators.
    For connected graphs,
    all separators are non-empty.
\end{lemma}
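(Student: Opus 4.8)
The plan is to route all three assertions through the classical notion of a \emph{clique tree} (junction tree) of the decomposable graph $G$, which gives a uniform handle on existence, invariance, and non-emptiness. Recall that a clique tree is a tree $T$ whose nodes are the maximal cliques $\Cli = \set{C_1,\dots,C_N}$ and which satisfies the \emph{junction property}: for every vertex $v \in V$ the cliques containing $v$ induce a connected subtree of $T$, or equivalently, for any two cliques their intersection is contained in every clique on the unique $T$-path joining them. The starting point I would take is the standard equivalence that a graph is decomposable if and only if it admits a clique tree \citep{lauritzen1996}, together with the junction property.

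For the existence of an ordering satisfying the running intersection property, I would root $T$ at an arbitrary clique and enumerate $C_1,\dots,C_N$ so that every clique precedes its $T$-children, i.e.\ in a topological order of the rooted tree. Fix $i \ge 2$ and let $C_{k(i)}$ be the $T$-parent of $C_i$, so that $k(i) < i$. To see that $D_i := C_i \cap (C_1 \cup \dots \cup C_{i-1}) \subseteq C_{k(i)}$, take $v \in C_i \cap C_j$ for some $j < i$. Since the ordering respects ancestry, $C_j$ is not a descendant of $C_i$, so the $T$-path from $C_i$ to $C_j$ leaves $C_i$ through its parent $C_{k(i)}$; the junction property then forces $v \in C_{k(i)}$. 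As $C_{k(i)}$ is itself among the earlier cliques, this moreover identifies the separator as the clique-tree edge label $D_i = C_i \cap C_{k(i)}$.

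Non-emptiness for connected $G$ falls out of the same picture. Suppose some edge of $T$, say the one joining a clique $C_a$ to its parent, carried the label $\emptyset$. Deleting this edge splits $T$ into two subtrees, and writing $U_1, U_2$ for the unions of the cliques in each, the junction property gives $U_1 \cap U_2 = \emptyset$ --- a common vertex would lie in cliques on both sides and hence, by connectedness of its subtree, in $C_a \cap \mathrm{parent}(C_a) = \emptyset$. Moreover $U_1 \cup U_2 = V$, and no edge of $G$ can cross, since every edge lies in a single maximal clique and thus in one of the two subtrees. This would disconnect $G$, a contradiction; hence every $T$-edge label, and therefore every separator $D_i$, is non-empty.

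The remaining claim --- that the multiset $\Sep = \set{D_2,\dots,D_N}$ is independent of the ordering --- is the crux and the main obstacle, since different orderings may arise from different clique trees. My strategy is to characterize the separators intrinsically. First I would invoke the theorem that the clique trees of $G$ are exactly the maximum-weight spanning trees of the clique intersection graph, whose edge joining $C_a$ and $C_b$ carries weight $\abs{C_a \cap C_b}$; the matroid structure of spanning trees then guarantees that all maximum-weight spanning trees share the same multiset of edge \emph{weights}, which already pins down the multiset of separator \emph{sizes}. To upgrade from sizes to the actual sets, I would show that for each candidate minimal separator $S$ the number of $T$-edges labeled exactly $S$ has an ordering-free description in terms of $G$: writing $N(A)$ for the set of vertices adjacent to a component $A$, the multiplicity of $S$ equals one less than the number of connected components $A$ of $G[V \setminus S]$ with $N(A) = S$. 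Proving that every clique tree realizes precisely this count --- most cleanly through an exchange argument transforming one maximum-weight spanning tree into another while preserving edge labels --- is where the real work lies; once it is established, $\Sep$ is manifestly an invariant of $G$ and the proof is complete.
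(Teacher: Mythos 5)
The paper itself offers no proof of this lemma: it is stated as a known fact with a pointer to Lauritzen (1996), so your attempt has to be judged on its own completeness. Two of your three parts are correct and complete. The existence argument (root a clique tree, order cliques so that parents precede children, and use the junction property to get $D_i \subseteq C_{k(i)}$, indeed $D_i = C_i \cap C_{k(i)}$) is the standard one and is sound, granting the equivalence between decomposability and the existence of a clique tree, which is fair to cite. The non-emptiness argument (an empty edge label would split $V$ into $U_1 \cup U_2$ with no crossing edges, since every edge of $G$ lies in a maximal clique, contradicting connectedness) is also correct.

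The genuine gap is in the invariance of the multiset $\Sep$, which you yourself call the crux. Two things are missing. First, a reduction step you do not state: the lemma quantifies over \emph{all} orderings satisfying the running intersection property, not just those produced by your rooted-tree construction, so you must check that an arbitrary RIP ordering induces a clique tree --- take the edges $(C_i, C_{k(i)})$ for $i \ge 2$ and verify the junction property --- whose edge-label multiset is exactly $\set{D_2, \dots, D_N}$; only then is ``invariance over orderings'' the same as ``invariance over clique trees.'' Second, and more importantly, the clique-tree invariance itself is never established: the maximum-weight spanning tree argument pins down only the multiset of separator \emph{sizes}, and the proposed upgrade to the actual sets --- showing that every clique tree contains each minimal separator $S$ with multiplicity equal to one less than the number of full components of the graph induced on $V \setminus S$ --- is precisely the statement that needs proof, and you explicitly defer it (``where the real work lies''). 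As written, the central claim of the lemma is asserted rather than proved. The gap is closable: this is the Blair--Peyton characterization of clique-tree separators, and it can also be proved directly by induction on the number of cliques by peeling off a leaf clique of a clique tree; but some such argument must actually be supplied.
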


\begin{definition}[Block graph]
    A block graph is a decomposable graph in which all non-empty separators consist of single vertices:
    \begin{align*}
        \abs{D} \in \set{0,1}
        \quad \forall 
        D \in \Sep.
    \end{align*}
\end{definition}

\begin{definition}[Tree Graph]
    A tree graph or a tree is a connected graph that does not contain any cycle.
\end{definition}
\begin{remark}
    For $d \geq 2$, the set of trees is identical to the set of connected block graphs in which all cliques consist of exactly two vertices.
\end{remark}

\begin{definition}[Graph Laplacian]
    \label{def:graphLaplacian}
    For an undirected graph $G=(V,E)$
    the graph Laplacian matrix $L \in \Rdd$ is defined by
    \begin{alignat*}{3}
        L_{ij}
        =
        \myCases{
            \myCase{\deg(i)}{i=j,}
            \\
            \myCase{-1}{(i,j) \in E,}
            \\
            \myCase{0}{\text{otherwise,}}
        }
    \end{alignat*}
    where the degree $\deg(i)$ of a vertex $i$
    is defined as $\abs{\delta(i)}$.
\end{definition}

\section{Details on matrix completion}
\label{sec:combiningCliques}

\subsection{Matrix completion as likelihood optimization}

\label{subsec:completionAsOpt}

For a \HR{} distribution $Y$ with parameter matrix $\Gamma$, the random vectors $Y^{(k)}$, $k \in V$, defined in \cref{subsec:extremalCondInd} satisfy
\citep{engelke2014}
\begin{align}\label{datak}
  \tlr{Y^{(k)}_i - Y^{(k)}_k}_{i\neq k}   \sim
  \normal{-\half\operatorname{diag}(\Sigma^{(k)}), \Sigma^{(k)}}
  .
\end{align}
For an independent sample of size $n$ from $Y$, one can obtain samples from this $(d-1)$-dimensional normal distribution  by selecting only data with $Y_k > 1$ and following~\cref{datak}. We denote the corresponding empirical covariance matrix by $\widehat \Sigma\k$, augmented by a $k$th row and column of zeros to make a $d\times d$ matrix.
Ignoring the information on the parameter matrix $\Gamma$ in the mean vector, the (surrogate) log-likelihood of this model can be written in terms of our \HR{} precision matrix $\Theta$ as
\begin{align}\label{llhk}
  L(\Theta; \widehat{\Sigma}^{(k)})
  \propto
  \log \pdet{\Theta} - \trace{\widehat{\Sigma}^{(k)}\Theta},
\end{align}
where $\pdet{\,\cdot\,}$ is the pseudo-determinant.
Setting $\widehat \Gamma\k = \gamma(\widehat \Sigma\k)$ gives a nonparametric estimator of~$\Gamma$. By \cref{prop:ThetaReps} it holds that $\widehat{\Sigma}\k = \ID\tlr{-\half\widehat{\Gamma}\k}\ID$ and therefore the cyclic permutation property of the trace operator together with the fact that $\Theta \ID = \ID \Theta = \Theta$ for any $\Theta \in \Pcal_d^{\oneVec}$ shows that the right-hand side of~\cref{llhk} is equal to 
\begin{align*}
  \log \pdet{\Theta} + \half \trace{\overline{\Gamma} \Theta},
  \end{align*}
with $\overline{\Gamma} = \widehat \Gamma\k$. In order to use all data in the sample, we can consider this likelihood with $\overline{\Gamma}$ equal to a combined version of the variogram estimators over all $k$, defined as $\widehat \Gamma := d^{-1} \sum_{k=1}^d \widehat\Gamma\k$ and called the empirical variogram \citep{engelke2020a}.

A natural way to obtain a graphical model is therefore to maximize this surrogate \HR{} likelihood over $\Theta \in \Pcal_d^{\oneVec}$ under the constraint of a graph-structured precision matrix.
As shown in \cref{prop:completionOptProblem},
solving this optimization problem corresponds to our completion operator $\comp[G]{}$ from \cref{def:notationComp}.

\subsection{Estimation strategy for sparse graphs}
\label{subsec:estimationOnSparseGraphs}

Let $ \Cli$ be the collection of all cliques $C \subseteq V$ of a connected graph $G = (V, E)$ and suppose that for every $C\in \Cli$, $\widehat \Gamma_C$ is a consistent estimator of the entries $\Gamma_{ij}$ for $i,j \in C$. Note that $\widehat \Gamma_C$ can be computed using information only from the components in $C$, which reduces the computational cost drastically if the cliques are small, especially if censoring is used.
Since the cliques are overlapping on the separator sets, we need to combine different estimators to obtain a partial variogram matrix $\widehat\GammaO \in \ROdd$ on the whole graph. We do this by averaging the estimators on the intersections. Let $\Cli_{(i,j)} = \setM{C \in \Cli}{i,j \in C}$ denote the set of all cliques containing the edge $(i,j) \in E$ and put
\begin{align*}
    \widehat\GammaO_{ij}
    =
    \myCases{
        \myCase{
            \undefined
        }{
            (i,j) \notin \Ebar,
        }
        \\
        \myCase{
            \frac{1}{\abs{\Cli_{(i,j)}}}
            \sum_{C \in \Cli_{(i,j)}}
            \lr{
                \widehat{\Gamma}_C
            }_{ij}
        }{
            (i,j) \in \Ebar.
        }
    }
\end{align*}
This partial matrix is clearly a consistent estimator for
$\restrictTo{\Gamma}{G}$ as in~\cref{partial_consist},
since it is an average of consistent estimators.
For fixed sample size $n$, the estimator $\widehat\GammaO$ is not guaranteed to be a valid (partial) variogram matrix; this is the price to pay for the more efficient estimation using only information within each clique,
and by \cref{thm:matrixCompletionConsistent}, the probability of  $\widehat\GammaO$ being invalid converges to zero for $n \rightarrow \infty$.
If it exists, the completion
$\widehat{\Gamma}^G = \comp[G]{\widehat\GammaO}$
can be computed using \cref{prop:completeDecomposable}
or \cref{cor:completeGeneral},
and by \cref{thm:matrixCompletionConsistent}
it is a consistent estimator of $\Gamma$ with correct graph structure $G$.

A natural question is whether it is possible
to replace the arithmetic mean by another function
that guarantees a valid partial variogram.
The following example shows that this is not possible,
as long as entries that belong to only a single clique are not altered, as well.

\begin{example}
    Let
    \begin{align*}
        \GammaO
        =
        \begin{pmatrix}
            0 & 1 & 16 & \undefinedNoQuotes \\
            1 & 0 & x & 1 \\
            16 & x & 0 & 1 \\
            \undefinedNoQuotes & 1 & 1 & 0
        \end{pmatrix}
        .
    \end{align*}
    Then there are $x_1, x_2 \in \Rd[]$ such that with $x=x_1$
    the principal submatrix $\restrictTo{\GammaO}{\set{1,2,3}}$ is \CND{},
    and with $x=x_2$
    the principal submatrix $\restrictTo{\GammaO}{\set{2,3,4}}$ is \CND{},
    but there exists no $x$ such that both submatrices are \CND{} at the same time.
\end{example}
\begin{proof}
    \cite{gower1982} shows that $\Gamma$ can be interpreted as 
    a Euclidean distance matrix with $\Gamma_{ij} = d_{ij}^2$,
    where $d_{ij}$, for $i,j \in \set{\oneToX{d}}$, is the pairwise distance between 
    the points generating $\Gamma$ (see also \cref{subsubsec:proofNonCompletableGraphs}).
    The triangle inequality then requires
    $9 \leq x_1 \leq 25$ and $0 \leq x_2 \leq 4$,
    which can be satisfied for each submatrix but not simultaneously for a single value of $x$.
\end{proof}

\subsection{Simulation study}
\label{sec:simulation}

We perform a simulation study to
illustrate \cref{thm:matrixCompletionConsistent} and
the estimation strategies from \cref{subsec:estimationOnSparseGraphs}.
Throughout, the underlying graph $G$ is assumed to be known.

\subsubsection{Setup}
We randomly generate two decomposable graphs of sizes $d=6$ and $d=10$,
shown in \cref{fig:simulationGraph},
and two \HRP{} graphical models, parametrized by the following matrices
(rounded to two decimals):
\begin{gather*}
    \Theta_{d=6} = \lr{
        \begin{matrix}
     \phantom{-}10.06 & -4.09 & -1.65 & -5.51 & \phantom{-}2.37 & -1.17 \\
     -4.09 & \phantom{-}4.09 & 0 & 0 & 0 & 0 \\
     -1.65 & 0 & \phantom{-}11.78 & -9.10 & -1.69 & \phantom{-}0.65 \\
     -5.51 & 0 & -9.10 & \phantom{-}32.21 & -4.94 & -12.66 \\
     \phantom{-}2.37 & 0 & -1.69 & -4.94 & \phantom{-}4.27 & 0 \\
     -1.17 & 0 & \phantom{-}0.65 & -12.66 & 0 & \phantom{-}13.19
\end{matrix}
    }
    , \\
    \Theta_{d=10} = \lr{
        \tiny
        \begin{matrix}
     \phantom{-}24.15 & -3.95 & \phantom{-}2.45 & -10.61 & -1.42 & -4.85 & -4.51 & -1.25 & 0 & 0 \\
     -3.95 & \phantom{-}77.85 & 0 & 0 & 0 & 0 & 0 & 0 & \phantom{-}7.13 & -81.03 \\
     \phantom{-}2.45 & 0 & \phantom{-}22.58 & -15.78 & -8.16 & -1.09 & 0 & 0 & 0 & 0 \\
     -10.61 & 0 & -15.78 & \phantom{-}27.10 & -2.18 & \phantom{-}1.69 & \phantom{-}4.32 & -4.53 & 0 & 0 \\
     -1.42 & 0 & -8.16 & -2.18 & \phantom{-}11.76 & 0 & 0 & 0 & 0 & 0 \\
     -4.85 & 0 & -1.09 & \phantom{-}1.69 & 0 & \phantom{-}4.25 & 0 & 0 & 0 & 0 \\
     -4.51 & 0 & 0 & \phantom{-}4.32 & 0 & 0 & \phantom{-}3.97 & -3.77 & 0 & 0 \\
     -1.25 & 0 & 0 & -4.53 & 0 & 0 & -3.77 & \phantom{-}9.55 & 0 & 0 \\
     0 & \phantom{-}7.13 & 0 & 0 & 0 & 0 & 0 & 0 & \phantom{-}40.63 & -47.76 \\
     0 & -81.03 & 0 & 0 & 0 & 0 & 0 & 0 & -47.76 & \phantom{-}128.79
\end{matrix}
    }
    .
\end{gather*}

\begin{figure}
    \centering
    \inputTikz[0.4\textwidth]{simulation/graph_d6}
    \inputTikz[0.4\textwidth]{simulation/graph_d10}
    \caption{
        The graphs $G$ of each of the \HRP{} distributions that are sampled from in the simulation study (\cref{sec:simulation}).
    }
    \label{fig:simulationGraph}
\end{figure}

From each of these \HRP{} distributions we generate $n$ samples,
and then estimate the parameter matrix $\Gamma$,
using the following methods.
First, we use the empirical variogram \citep{engelke2020a}
to directly estimate all entries of $\Gamma$ (``Full variogram'').
This method can be modified as described in \cref{subsec:estimationOnSparseGraphs}
by applying the empirical variogram to each clique of $G$ and combining
the results through the matrix completions from \cref{sec:matcomprob} (``Clique-wise variogram'').

Second, we use maximum likelihood estimation
with respect to the probability density function
$f(y) = \lambda(y) / \LambdaVec(\zeroVec)$,
where $\lambda(y) = \lambda(y; \Theta)$ from \cref{prop:HrDensitySimple}.
Here, we consider three different approaches:
estimating all entries of $\Gamma$ directly (``Full MLE'');
estimating only the entries in $\Theta$ corresponding to edges of $G$,
while setting all others to zero (``Graphical MLE'');
and estimating the entries of $\restrictTo{\Gamma}{C}$ for each clique $C$ separately,
with respect to the ``marginals''
$f\I(y) = \lambda_I(y) / \LambdaVec_I(\zeroVec)$,
combining them as described in \cref{subsec:estimationOnSparseGraphs} (``Clique-wise MLE'').

\subsubsection{Results}
We compare the methods in terms of their computational cost,
and the quality of their estimate $\widehat\Gamma$.
The computational cost is expressed by the time taken to compute each estimate,
using our implementation in R.
The quality of the estimate
is measured by the mean squared error
\begin{align*}
    \mathrm{MSE}
    =
    \frac{2}{d(d-1)}
    \sum_{i<j}
    \lr{\Gamma_{ij} - \widehat\Gamma_{ij}}^2
    .
\end{align*}
This expression considers only the entries in the upper triangular part,
since any valid variogram matrix is symmetric with zero diagonal.

The results are illustrated in \cref{fig:simulationTimeVsErr},
\cref{table:simulation20} for $n=20$ and $d=6$, and \cref{table:simulation200} for $n=200$ and $d=10$;
other combinations of $n$ and $d$ exhibit the same general trends.
Throughout, the two methods based on the empirical variogram are very fast,
requiring less than a second.
The next fastest method is the clique-wise MLE,
followed by the graphical MLE,
with the full MLE being the slowest.
Notably, for $d=6$ the clique-wise MLE is faster than the full MLE by a factor of $10$,
whereas for $d=10$ it is faster by a factor of $200$.

In terms of MSE,
the two variogram based methods perform similarly,
with the clique-wise variogram slightly outperforming the full variogram.
In line with the significantly higher computational cost,
the MLE based methods perform better than the variogram based methods,
and quite similar to each other.

\begin{figure}
    \centering
    \inputTikz[0.8\textwidth]{simulation/timeVsErrScatter_n20_d6}

    \inputTikz[0.8\textwidth]{simulation/timeVsErrScatter_n200_d10}

    \caption{
        Time and MSE of different estimation methods
        and combinations of dimension ($d$) and sample size ($n$).
    }
    \label{fig:simulationTimeVsErr}
\end{figure}

\begin{table}
    \centering
    \begin{tabular}{lrr}
  \hline
Method & Time & MSE \\ 
  \hline
Full variogram & 2.00E-03 & 2.13E-02 \\ 
  Clique-wise variogram & 2.00E-03 & 1.50E-02 \\ 
  Clique-wise MLE & 2.04E+00 & 1.10E-02 \\ 
  Graphical MLE & 1.06E+01 & 1.09E-02 \\ 
  Full MLE & 1.61E+01 & 1.22E-02 \\ 
   \hline
\end{tabular}

    \caption{
        Average computation time and MSE for $n=20$, $d=6$.
    }
    \label{table:simulation20}
\end{table}

\begin{table}
    \centering
    \begin{tabular}{lrr}
  \hline
Method & Time & MSE \\ 
  \hline
Full variogram & 1.30E-03 & 4.87E-03 \\ 
  Clique-wise variogram & 9.40E-02 & 4.51E-03 \\ 
  Clique-wise MLE & 4.97E+00 & 2.07E-03 \\ 
  Graphical MLE & 3.57E+02 & 2.19E-03 \\ 
  Full MLE & 1.24E+03 & 2.09E-03 \\ 
   \hline
\end{tabular}

    \caption{
        Average computation time and MSE for $n=200$, $d=10$.
    }
    \label{table:simulation200}
\end{table}

\subsubsection{Discussion}

The simulation clearly illustrates advantages of sparse models
in terms of computational performance.
The estimation accuracy is highly comparable for all three MLE based methods,
while the graphical methods are significantly faster.
In fact, for larger dimensions $d$,
full MLE quickly becomes infeasible,
as the number of parameters that need to be optimized simultaneously is $d(d-1)/2$.
For the graphical MLE,
the number of parameters is identical to the number of edges $\abs{E}$,
which can be significantly smaller,
bounded below by $d-1$ for connected graphs.

In the clique-wise MLE method,
a separate optimization problem with
$\abs{C}(\abs{C}-1)/2$
parameters is solved for each clique $C \in \Cli$.
The resulting total number of parameters is
\begin{align*}
    \sum_{C \in \Cli(G)}
    \abs{C}(\abs{C}-1)/2
    ,
\end{align*}
which is generally larger than the number of edges $\abs{E}$.
However,
since the optimization problem for each clique is solved separately,
this can be faster to compute.
Furthermore, on suitable hardware significant speedups can be achieved by
solving the problems in parallel
(to allow a better comparison, we did not perform any parallel computations here).

In this simulation study we considered the underlying graph $G$
as known during the estimation of $\Gamma$.
In practice, this graph might need to be estimated in a separate step,
possibly leading to worse performance of the graphical methods.
Due to the modeling choices and hyperparameter-tuning involved,
we did not include structure estimation in the context of this simulation study
but refer to \cref{sec:applic} and \cref{sec:danube}.

Furthermore, we only considered decomposable graphs $G$,
for which the matrix completion step can be performed very quickly using
\cref{prop:completeDecomposable}.
A graph estimated from data is likely not decomposable,
implying that the iterative method in \cref{cor:completeGeneral} has to be invoked,
which is slower and has to be truncated at some point.

\section{Additional figures: Application to flight data}

\begin{figure}[H]
    \centering
    \inputTikz[0.9\textwidth]{application/chiEmp_hist}
    \caption{
        Histogram of empirical $\widehat\chi(p)$ for $p=\valUsedPClustering$.
        Based on all pairs of airports.
        Values corresponding to two airports within the same
        cluster are shown in white,
        and values corresponding to two airports in two
        distinct clusters in gray.
    }
    \label{fig:applicationChiHist}
\end{figure}

\begin{figure}[H]
    \centering
    \inputTikz[0.9\textwidth]{application/marginals}
    \caption{
        Airports in the data set.
        Color indicates the shape parameter of
        univariate generalized Pareto distributions,
        fitted for each airport using maximum likelihood estimation
        to the observations above the $p=\valUsedPMarginals$ quantile
        (black at zero, shade of cyan for negative, and orange for positive values).
        The size of each circle is proportional to the average number of daily flights at that airport.
    }
    \label{fig:applicationMarginals}
\end{figure}

\begin{table}
    \begin{tabular}{lllllllllllll}
    \texttt{ABE} & \texttt{ABQ} & \texttt{ACV} & \texttt{AEX} & \texttt{AGS} & \texttt{ALB} & \texttt{AMA} & \texttt{ATL} & \texttt{ATW} & \texttt{AUS} & \texttt{AVL} & \texttt{AVP} & \texttt{BDL} \\
    \texttt{BFL} & \texttt{BHM} & \texttt{BIL} & \texttt{BIS} & \texttt{BMI} & \texttt{BNA} & \texttt{BOI} & \texttt{BOS} & \texttt{BTR} & \texttt{BTV} & \texttt{BUF} & \texttt{BUR} & \texttt{BWI} \\
    \texttt{BZN} & \texttt{CAE} & \texttt{CAK} & \texttt{CHA} & \texttt{CHS} & \texttt{CID} & \texttt{CLE} & \texttt{CLT} & \texttt{CMH} & \texttt{COS} & \texttt{CRP} & \texttt{CRW} & \texttt{CVG} \\
    \texttt{DAB} & \texttt{DAL} & \texttt{DAY} & \texttt{DCA} & \texttt{DEN} & \texttt{DFW} & \texttt{DRO} & \texttt{DSM} & \texttt{DTW} & \texttt{ELP} & \texttt{EUG} & \texttt{EVV} & \texttt{EWR} \\
    \texttt{FAR} & \texttt{FAT} & \texttt{FAY} & \texttt{FLL} & \texttt{FNT} & \texttt{FSD} & \texttt{FWA} & \texttt{GEG} & \texttt{GJT} & \texttt{GNV} & \texttt{GPT} & \texttt{GRB} & \texttt{GRK} \\
    \texttt{GRR} & \texttt{GSO} & \texttt{GSP} & \texttt{HOU} & \texttt{HPN} & \texttt{HRL} & \texttt{HSV} & \texttt{IAD} & \texttt{IAH} & \texttt{ICT} & \texttt{IDA} & \texttt{ILM} & \texttt{IND} \\
    \texttt{ISP} & \texttt{JAC} & \texttt{JAN} & \texttt{JAX} & \texttt{JFK} & \texttt{LAN} & \texttt{LAS} & \texttt{LAX} & \texttt{LBB} & \texttt{LEX} & \texttt{LFT} & \texttt{LGA} & \texttt{LGB} \\
    \texttt{LIT} & \texttt{LNK} & \texttt{LRD} & \texttt{MAF} & \texttt{MCI} & \texttt{MCO} & \texttt{MDT} & \texttt{MDW} & \texttt{MEM} & \texttt{MFE} & \texttt{MFR} & \texttt{MGM} & \texttt{MHT} \\
    \texttt{MIA} & \texttt{MKE} & \texttt{MLI} & \texttt{MLU} & \texttt{MOB} & \texttt{MRY} & \texttt{MSN} & \texttt{MSO} & \texttt{MSP} & \texttt{MSY} & \texttt{MYR} & \texttt{OAK} & \texttt{OKC} \\
    \texttt{OMA} & \texttt{ONT} & \texttt{ORD} & \texttt{ORF} & \texttt{PBI} & \texttt{PDX} & \texttt{PHF} & \texttt{PHL} & \texttt{PHX} & \texttt{PIA} & \texttt{PIT} & \texttt{PNS} & \texttt{PSC} \\
    \texttt{PSP} & \texttt{PVD} & \texttt{PWM} & \texttt{RAP} & \texttt{RDM} & \texttt{RDU} & \texttt{RIC} & \texttt{RNO} & \texttt{ROA} & \texttt{ROC} & \texttt{RSW} & \texttt{SAN} & \texttt{SAT} \\
    \texttt{SAV} & \texttt{SBA} & \texttt{SBN} & \texttt{SBP} & \texttt{SDF} & \texttt{SEA} & \texttt{SFO} & \texttt{SGF} & \texttt{SGU} & \texttt{SHV} & \texttt{SJC} & \texttt{SLC} & \texttt{SMF} \\
    \texttt{SNA} & \texttt{SRQ} & \texttt{STL} & \texttt{SYR} & \texttt{TLH} & \texttt{TPA} & \texttt{TRI} & \texttt{TUL} & \texttt{TUS} & \texttt{TVC} & \texttt{TYS} & \texttt{VPS} & \texttt{XNA}
\end{tabular}

    \caption{IATA codes of the airports considered for clustering}
    \label{table:IATAsAll}
\end{table}

\begin{table}
    \begin{tabular}{lllllllllllll}
    \texttt{ABQ} & \texttt{AEX} & \texttt{AMA} & \texttt{AUS} & \texttt{BTR} & \texttt{COS} & \texttt{CRP} & \texttt{DAL} & \texttt{DFW} & \texttt{ELP} & \texttt{GRK} & \texttt{HOU} & \texttt{HRL} \\
    \texttt{IAH} & \texttt{ICT} & \texttt{LBB} & \texttt{LFT} & \texttt{LIT} & \texttt{LRD} & \texttt{MAF} & \texttt{MCI} & \texttt{MEM} & \texttt{MFE} & \texttt{MSY} & \texttt{OKC} & \texttt{SAT} \\
    \texttt{SHV} & \texttt{TUL} & \texttt{XNA}
\end{tabular}

    \caption{IATA codes of the airports in the analyzed cluster}
    \label{table:IATAsChosen}
\end{table}

\clearpage
\section{Application to Danube data}
\label{sec:danube}

We perform a statistical analysis of the Danube river flow data from \cite{asadi2015extremes},
which has also been investigated for example in
\cite{engelke2020}, \cite{roettger2023}, \cite{gneccoEtAl2021}, \cite{MhallaEtAl2020} and \cite{hu2022}.
We follow the same procedure as in \cref{sec:applic},
but do not perform the initial clustering step
since there is significant extremal correlation
between all pairs of measuring stations over a large range of thresholds;
see \cref{fig:danubePairwiseChi}.

\begin{figure}
    \centering
    \inputTikz[0.6\textwidth]{danube/pairwiseChi}
    \caption{
        Empirical extremal correlation between pairs of measuring stations as function of the threshold $p$.
    }
    \label{fig:danubePairwiseChi}
\end{figure}

The data consist of clustered measurements from \danYearAllStart{} to \danYearAllEnd{},
representing flow volumes measured at \danNStations{} different stations
situated on tributaries of the Danube river,
yielding a total of \danNObsAll{} observations.
For details on the preprocessing, we refer to \citet{asadi2015extremes}.

\begin{figure}
    \centering
    \inputTikz[0.6\textwidth]{danube/flowGraph}
    \caption{
        The physical flow graph.
        Circles correspond to measuring stations,
        edges correspond to stations connected by a river,
        width of circles and edges corresponds to the average flow volume.
        Flow direction is implied by increasing flow volumes.
        The outline is the country border of Germany.
    }
    \label{fig:danubeFlowGraph}
\end{figure}

The flow graph $\Gflow$, shown in the left panel of \cref{fig:danubeGraphs}, is based on domain knowledge and constitutes a good first candidate for the statistical graph structure of the data.
Furthermore, we consider extremal graph structures estimated in a data-driven way,
using the minimum spanning tree method suggested in \cite{engelke2020a}
and the \eglearn{} method from \cite{engelke2022a}.

In order to evaluate the fitted models out-of-sample, 
the data set was split into a training set with $\danNObsEst{}$ observations (\danYearEstStart{} to \danYearEstEnd{}) used for estimation,
and a validation set with $\danNObsVal{}$ observations (\danYearValStart{} to \danYearValEnd{}) used for tuning parameter selection and model comparisons.
The base estimator is the empirical extremal variogram $\widehat \Gamma$ computed at probability threshold $p=\danUsedP$ on the training data.
This choice of threshold value
follows \cite{engelke2020} and \cite{roettger2023}
and
is slightly lower than in \cref{sec:applic},
since the data set at hand is smaller and already consists of clustered maxima.
The \eglearn{} tuning parameter $\rho^*$ is chosen on the basis of the log-likelihood of the estimated models on the test set; see \cref{sec:applic} for further details on the applied methods.

\cref{fig:danubeGraphs} shows the estimated graph structures.
The minimum spanning tree $\widehat T$ is rather similar to the flow graph,
whereas \eglearn{} estimates a significantly denser graph structure $\widehat G_{\rho^*}$
with $\danNEdgesEglearn$ edges.
\cref{fig:danubeRhoVsAic} shows the log-likelihoods and edge counts
of the models estimated by \eglearn{} for different values of the tuning parameter $\rho$,
as well as the log-likelihoods of the models on $\widehat T$, $\widehat G_{\rho^*}$,
and the complete graph.
Notably, the log-likelihood corresponding to $\widehat T$ is quite close to that of \eglearn{},
outperforming both the flow graph and the complete graph.
This is in stark contrast to the performance of the minimum spanning tree in \cref{sec:applic},
indicating that this data set can be better described by a sparse model.
However, the denser graph estimated by \eglearn{} still performs best in terms of log-likelihood.
This trend is further supported by the comparison of
empirical and fitted extremal variogram entries,
shown in \cref{fig:danubeGamma}.

\begin{figure}
    \centering
    \inputTikz{danube/graph_all}
    \setlength\abovecaptionskip{-1\baselineskip}
    \caption{
        The flow graph $\Gflow$ (left),
        the estimated tree graph $\widehat{T}$ (center),
        and the graph estimated using \eglearn{},
        $\widehat G_{\rho^*}$, for
        $\rho^* = \danRhoStar$ (right).
    }
    \label{fig:danubeGraphs}
\end{figure}

\begin{figure}
    \centering
    \begin{subfigure}[b]{.5\textwidth}
        \inputTikz{danube/eglasso_rho_vs_loglik}
    \end{subfigure}
    \hspace{-0.05\textwidth}
    \begin{subfigure}[b]{.5\textwidth}
        \inputTikz{danube/ThetaConvergenceToZero}
    \end{subfigure}
    \setlength\abovecaptionskip{-1\baselineskip}
    \caption{
        Left: Log-likelihood based on the validation data set
        for different regularization parameters~$\rho$.
        Horizontal lines indicate the log-likelihoods of
        the flow graph ($\danFlowGraphLoglik$; $\danNEdgesFlowGraph$ edges, dotted line),
        and the tree graph ($\danTreeGraphLoglik$; $\danNEdgesTree$ edges, dash-dotted line).
        The log-likelihood of the complete graph ($\danNEdgesComplete$ edges)
        is~$\danCompleteGraphLoglik$ and not shown.
        Right: Evolution of $\widehat \Theta^{\widehat G_\rho}$ entries.
        Only entries that vanish at $\rho = \danRhoMax$ are shown.
        Both plots show the number of edges in the corresponding graph 
        on top.
    }
    \label{fig:danubeRhoVsAic}
    \label{fig:danubeThetaToZero}
\end{figure}

\begin{figure}
    \centering
    \inputTikz{danube/empVsFitted_Gamma_all}
    \setlength\abovecaptionskip{-1\baselineskip}
    \caption{
        Extremal correlations
        based on the empirical and fitted extremal variogram
        for
        the flow graph $\Gflow$ (left),
        the estimated tree graph $\widehat{T}$ (center),
        and the graph estimated using \eglearn{},
        $\widehat G_{\rho^*}$, for
        $\rho^* = \danRhoStar$ (right).
    }
    \label{fig:danubeGamma}
\end{figure}

\clearpage
\section{Proofs}
\label{sec:Proofs}

\subsection{Proof of \cref{prop:ThetaReps}}
\label{subsec:proofThetaReps}

Before proving the \namecref{prop:ThetaReps} we establish some auxiliary results.
The actual proof of \cref{prop:ThetaReps} is
a rather short combination of these results and is
given at the end of the \namecref{subsec:proofThetaReps}.
The auxiliary results in this \namecref{subsec:proofThetaReps}
are given in a sequential order,
in the sense that each result uses only previously established results.

In the following, for a given $\Gamma \in \Dcal_d$ we write
\begin{align*}
    \Sigma = \ID \lr{-\half \Gamma} \ID
    .
\end{align*}
We assume that the matrix $S \in \Rdd$ satisfies $\ID S \ID = \Sigma$,
the matrix $S$ not necessarily being symmetric.
\cref{lemma:farrisSeqG} shows that this is in fact a slightly weaker assumption than $\farris{S} = \Gamma$,
which is used in the \namecref{prop:ThetaReps}.
Furthermore, for any $t \in \Rd[]$ we introduce the abbreviation
\begin{align}
    \label{eq:definitionSigmat}
    \Sigma\t
    =
    t \oneVecTT
    +
    S
    .
\end{align}
The choices $S=-\half\Gamma$ or $S=\Sigma$ simplify many of the proofs significantly.

\begin{lemma}
    \label{lemma:farrisSeqG}
    Let $\Gamma \in \Dcal_d$ and recall the map $\farris{}$ in \cref{eq:FarrisTransform}.
    Then
    \begin{align*}
        \farris{S} = \Gamma
        \quad \Longleftrightarrow \quad
        S = S\T
        \, \land \,
        \ID S \ID = \ID \lr{-\half\Gamma} \ID
        .
    \end{align*}
\end{lemma}
\begin{proof}
    Recall that
	\begin{align*}
	\farris{S}
	=
	\oneVec \diag{S}\T
	+
	\diag{S} \oneVecT
	-
	2S.
	\end{align*}

    Since this map preserves (a)symmetry,
    and all $\Gamma \in \Dcal_d$ are symmetric,
    it follows that $S=S\T$ is a necessary condition for $\gamma\lr{S} = \Gamma$.
	Furthermore,
    on the one hand, if $\farris{S} = \Gamma$, then, since $\ID \oneVec = \zeroVec$, we get
	\[
        \Sigma
		=
		\ID \lr{-\half \Gamma} \ID
		=
		\ID \lr{-\half \farris{S}} \ID
		=
		\ID S \ID.
	\]
	On the other hand, for symmetric $S$ and writing $v:=S\eV$ with $\eV = d^{-1} \oneVec$, we have, by $\Proj = I_d - \eV \oneVec\T = I_d - \oneVec \eV\T$ and the linearity of $\farris{}$,
    \begin{align*}
        \farris{\Proj S \Proj}
        &=
        \farris{\Proj \lr{S - v\oneVec\T}} \\
        &=
        \farris{S - \oneVec v\T - v \oneVecT + \oneVec \eV\T v \oneVec\T}
        \\ &=
        \farris{S} - \farris{\oneVec v\T + v \oneVecT} + \farris{\oneVec \eV\T v \oneVec\T}
        \\ &=
        \farris{S}
        ,
    \end{align*}
    as $\diag{\oneVec v\T} = v = \diag{v \oneVec\T}$ and thus $\farris{\oneVec v\T + v \oneVec\T} = 0 \in \Rdd$.
    It follows that $\ID S \ID = \Sigma = \ID\lr{-\half\Gamma}\ID$ implies
    \begin{equation*}
        \farris{S}
        =
        \farris{\Proj S \Proj}
        =
        \farris{\Proj \lr{-\half \Gamma} \Proj}
        =
        \farris{-\half \Gamma}
        =
        \Gamma
        ,
    \end{equation*}
    since $\diag{\Gamma} = \zeroVec$.
\end{proof}

\begin{lemma}
    \label{lemma:SigmaPosSemiDef}
    The matrix $\Sigma$ is symmetric and \PSD{}, and its kernel is $\laSpan{\set{\oneVec}}$.
\end{lemma}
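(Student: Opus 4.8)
The plan is to reduce all three claims to a single computation of the quadratic form $v\T \Sigma v$, exploiting the defining inequality of $\Dcal_d$ together with the elementary facts that $\ID = \Id - \oneVec\eV\T$ is symmetric, idempotent, and projects orthogonally onto $\set{\oneVec}^\perp$, so that $\ID v = \zeroVec$ if and only if $v \in \laSpan{\set{\oneVec}}$.

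First I would dispatch symmetry: transposing $\Sigma = \ID\lr{-\half\Gamma}\ID$ and using $\ID\T = \ID$ and $\Gamma\T = \Gamma$ gives $\Sigma\T = \Sigma$ immediately. Next, for arbitrary $v \in \Rd$ I would write $w = \ID v$ and compute $v\T \Sigma v = \lr{\ID v}\T\lr{-\half\Gamma}\lr{\ID v} = -\half\, w\T \Gamma w$, again using $\ID\T = \ID$. Since $\ID$ projects onto $\set{\oneVec}^\perp$, the vector $w$ is orthogonal to $\oneVec$, so the defining property $w\T \Gamma w < 0$ for all $\zeroVec \neq w \perp \oneVec$ yields $v\T \Sigma v = -\half\, w\T \Gamma w \ge 0$, with equality precisely when $w = \ID v = \zeroVec$. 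This establishes that $\Sigma$ is \PSD{}.

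Finally, for the kernel I would use that a symmetric \PSD{} matrix satisfies $\Sigma v = \zeroVec \iff v\T \Sigma v = 0$ (for instance via a factorization $\Sigma = B\T B$, so that $v\T\Sigma v = \norm{Bv}^2$). By the equality case above this holds if and only if $\ID v = \zeroVec$, i.e.\ $v \in \laSpan{\set{\oneVec}}$; hence $\ker \Sigma = \ker \ID = \laSpan{\set{\oneVec}}$. As a consistency check, $\ID\oneVec = \oneVec - \oneVec\eV\T\oneVec = \oneVec - \oneVec = \zeroVec$ confirms $\oneVec \in \ker\Sigma$ directly. I do not expect a genuine obstacle here; the only point needing care is that the sharp kernel identity, rather than a mere inclusion, relies on the inequality in the definition of $\Dcal_d$ being \emph{strict} for every nonzero vector orthogonal to $\oneVec$, which is exactly what forces $\ID v = \zeroVec$ in the equality case.
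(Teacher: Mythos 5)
Your proof is correct and takes essentially the same route as the paper's: both hinge on the identity $v\T \Sigma v = -\half \lr{\ID v}\T \Gamma \lr{\ID v}$ together with the strict conditional negative definiteness of $\Gamma$ on $\set{\oneVec}^\perp$ and the fact that $\ker \ID = \laSpan{\set{\oneVec}}$. The only difference is cosmetic: at the kernel step the paper concludes directly from strict positivity of the quadratic form for $v \notin \laSpan{\set{\oneVec}}$ (so such $v$ cannot lie in $\ker\Sigma$), whereas you invoke the standard fact that a symmetric \PSD{} matrix annihilates exactly the vectors on which its quadratic form vanishes.
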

\begin{proof}
    Since $\ID = I_d - d^{-1} \oneVec \oneVec\T$ is the projection matrix onto the subspace orthogonal to $\oneVec$,
    it follows that $\Sigma \oneVec = \ID \lr{-\half\Gamma} \lr{\ID \oneVec} = \zeroVec$.
    Hence, $\laSpan{\set{\oneVec}} \subseteq \ker \Sigma$.
    Next, consider any $v \notin \laSpan{\set{\oneVec}}$,
    and let $w := \ID v$.
    This $w$ satisfies $w \neq \zeroVec$ and $w \perp \oneVec$,
    hence
    \begin{align*}
        v\T \Sigma v
        &=
        v\T \lr{\ID \lr{-\half \Gamma} \ID} v
        \\ &=
        -\half w\T \Gamma w
        \\
        &>
        0
        ,
    \end{align*}
    because of the conditional negative definiteness in the definition of a variogram matrix in~\cref{eq:defVariogramSet}.
    Symmetry follows from the symmetry of $\Gamma$ and $\ID$.
\end{proof}

\begin{lemma}
    \label{lemma:SigmatInvertible}
    There exists $t_0 \in \Rd[]$
    such that $\Sigma\t$ is invertible for all $t \neq t_0$.
    Specifically, $t_0 = -\tlr{\oneVec\T S^{-1} \oneVec}^{-1}$ if $S$ is invertible and $t_0 = 0$ otherwise.
\end{lemma}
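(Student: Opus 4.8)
The plan is to change to an orthonormal basis of $\Rd$ whose last vector is $\oneVec/\sqrt{d}$, adapted to the splitting $\Rd = \set{\oneVec}^\perp \oplus \laSpan{\set{\oneVec}}$. In this basis the centering matrix becomes $\ID = \diag{I_{d-1}, 0}$, the rank-one term $t\,\oneVecTT$ adds $td$ to the $(d,d)$ entry and nothing else (since $\oneVec$ has coordinate vector $\sqrt{d}$ times the last standard basis vector), and by \cref{lemma:SigmaPosSemiDef} the matrix $\Sigma$ becomes $\diag{\Sigma_{11}, 0}$ with $\Sigma_{11} \in \Rdd[(d-1)]$ \emph{positive definite}, because $\Sigma$ is \PSD{} with kernel exactly $\laSpan{\set{\oneVec}}$. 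The whole point of this normalization is to reduce an assertion about a $d \times d$ determinant to a one-dimensional computation.

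Next I would feed in the hypothesis $\ID S \ID = \Sigma$. Writing $S$ in block form with leading $(d-1)\times(d-1)$ block $S_{11}$, off-diagonal blocks $s_{12}, s_{21} \in \Rd[d-1]$, and bottom-right scalar $s_{22}$, the identity $\ID S \ID = \diag{S_{11}, 0}$ forces $S_{11} = \Sigma_{11}$. This is the crucial structural point: the hypothesis pins down exactly the block that is guaranteed invertible, \emph{irrespective} of whether $S$ itself is invertible. Since $\Sigma\t$ then differs from $S$ only by $+td$ in its $(d,d)$ entry, expanding the determinant by the Schur complement of the invertible block $S_{11}$ gives
\begin{align*}
  \det \Sigma\t
  =
  \det(S_{11})\,\lr{td + s_{22} - s_{21}\T S_{11}\inv s_{12}}
  .
\end{align*}
This is affine in $t$ with nonzero slope $d\,\det(S_{11})$, so it has a unique root $t_0 = -d\inv\lr{s_{22} - s_{21}\T S_{11}\inv s_{12}}$, and $\Sigma\t$ is invertible for every $t \neq t_0$.

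Finally I would identify $t_0$ with the two closed forms in the statement. If $S$ is invertible, the Schur complement $s_{22} - s_{21}\T S_{11}\inv s_{12}$ is nonzero, and comparing the display above with the matrix determinant lemma $\det \Sigma\t = \det(S)\lr{1 + t\,\oneVec\T S\inv \oneVec}$ yields $\oneVec\T S\inv \oneVec = d\,\lr{s_{22} - s_{21}\T S_{11}\inv s_{12}}\inv \neq 0$, whence $t_0 = -\lr{\oneVec\T S\inv \oneVec}\inv$. If instead $S$ is singular, then $\det(S) = \det(S_{11})\lr{s_{22} - s_{21}\T S_{11}\inv s_{12}} = 0$ together with $\det(S_{11}) \neq 0$ forces the Schur complement to vanish, giving $t_0 = 0$. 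I do not anticipate a genuine obstacle here: the entire argument collapses to a one-dimensional determinant once the basis is fixed, and the only point demanding care is the bookkeeping that identifies $S_{11}$ with the positive-definite block $\Sigma_{11}$ and thereby guarantees the nonzero slope in $t$ that is responsible for the uniqueness of $t_0$.
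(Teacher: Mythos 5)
Your proof is correct, but it takes a genuinely different route from the paper's. The paper argues directly in the original coordinates and splits into cases from the outset: when $S$ is invertible it sets $v_0 = S\inv\oneVec$, shows $\oneVecT v_0 \neq 0$, and derives a contradiction from a putative null vector of $\Sigma\t$ for $t \neq t_0$; when $S$ is singular it shows $\operatorname{rank} S = d-1$, that $\ker S$ is spanned by a vector not orthogonal to $\oneVec$, and that $\oneVec \notin \Image S$, then verifies surjectivity of $\Sigma\t$ by computing its image --- no determinant is ever computed. You instead pass to an orthonormal basis adapted to $\Rd = \set{\oneVec}^\perp \oplus \laSpan{\set{\oneVec}}$, use the hypothesis $\ID S \ID = \Sigma$ to pin down the leading block $S_{11} = \Sigma_{11}$ as \PD{} (this is the same structural fact the paper exploits, but packaged differently), and then read everything off the Schur-complement identity
\begin{align*}
  \det \Sigma\t = \det(S_{11})\lr{td + s_{22} - s_{21}\T S_{11}\inv s_{12}},
\end{align*}
which is affine in $t$ with nonzero slope; existence and uniqueness of $t_0$, and singularity of $\Sigma\t[t_0]$, are then immediate, with the case distinction on $S$ entering only in the final identification of $t_0$ via the matrix determinant lemma. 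What your approach buys is uniformity and brevity: one explicit formula handles both cases and makes the uniqueness of the exceptional value transparent. What the paper's approach buys is reusable by-products: the vector $v_0$ and the rank-$(d-1)$/null-vector structure of $\Sigma\t[t_0]$ constructed there are cited verbatim in the proofs of the subsequent lemmas (\cref{lemma:SigmaTPosDef} and \cref{lemma:SigmaTLimit}), so the paper's coordinate-free case analysis feeds the rest of the argument chain in a way your determinant computation would only do after extracting the null vector from the block form.
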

\begin{proof}
    To prove the invertibility of $\Sigma\t$
    for $t \neq t_0$ 
    consider the following two cases.

    \underline{Case 1:}
    $S$ is invertible.
    Let $v_0 = S\inv \oneVec$.
    Then $\oneVecT v_0 \neq 0$, because otherwise, if $\oneVecT v_0 = 0$,
    then $v_0$ would be orthogonal to the kernel of $\Sigma$,
    and we would have $\ID v_0 = v_0$, and therefore
    \begin{align*}
        \zeroVec
        \neq
        \Sigma v_0
        =
        \ID S \ID v_0
        =
        \ID S v_0
        =
        \ID \oneVec
        =
        \zeroVec
        ,
    \end{align*}
    a contradiction.
    Next,
    let $t_0 = - \tlr{\oneVecT v_0}\inv$.
    Consider $t \neq t_0$ and suppose that $\Sigma\t$ is singular,
    i.e., there exists $v \in \Rd\without{\zeroVec}$ such that $\zeroVec\T
    = v\T \lr{t \oneVecTT + S}$. Then
    \begin{align*}
        0
        &=
        v\T
        \lr{
            t \oneVecTT
            +
            S
        }
        v_0
        \\
        &=
        t v\T \oneVecTT v_0
        +
        v\T S v_0
        \\
        &=
        t
        \lr{v\T \oneVec}
        \lr{\oneVecT v_0}
        +
        \lr{v\T \oneVec}
        \\
        &=
        \lr{v\T \oneVec}
        \lr{-t \cdot t_0\inv + 1}
        .
    \end{align*}
    For $t\neq t_0$, the second factor cannot be zero,
    so it must hold that $\oneVecT v = 0$.
    However, this leads to the contradiction
    \begin{align*}
        \zeroVec
        =
        \lr{
            t \oneVecTT
            +
            S
        }
        v
        =
        Sv
        \neq
        \zeroVec
        .
    \end{align*}
    Furthermore, it holds that
    \begin{align*}
        \Sigma\t[t_0]v_0
        =
        \lr{
            S - \frac{\oneVecTT}{\oneVecT S\inv \oneVec}
        } S\inv \oneVec
        =
        \oneVec
        -
        \oneVec
        =
        \zeroVec
        .
    \end{align*}

    \underline{Case 2:}
    $S$ is singular.
    Then $S$ must be of rank $d-1$,
    since the rank of $\Sigma = \ID S \ID$ is $d-1$ by \cref{lemma:SigmaPosSemiDef}.
    Let $v_0 \neq \zeroVec$ be such that
    $\ker S = \mathrm{span}\lr{\set{v_0}}$.
    Then, $v_0 \notin \set{\oneVec}^\perp$,
    because otherwise
    \begin{align*}
        \zeroVec
        \neq
        \Sigma v_0
        =
        \ID S \ID v_0
        =
        \ID S v_0
        =
        \zeroVec
        .
    \end{align*}
    Furthermore, $\oneVec \notin \Image S$,
    because otherwise
    there must be a vector $u \notin \laSpan{\set{v_0}}$,
    such that $Su = \oneVec$.
    Then,
    $\lr{\ID S} v_0 = \lr{\ID S} u = \zeroVec$,
    implying that the rank of $\ID S \ID$ is at most $d-2$, which is a contradiction.

    Setting $t_0 := 0$,
    and using
    $\Image S = S \Rd = S \tlr{\set{\oneVec}^\perp \oplus v_0\Rd[]} = S \set{\oneVec}^\perp$,
    the image of $\Sigma\t$ for
    $t \neq t_0$
    can then be checked to be
    \begin{align*}
        \Image \Sigma\t
        &=
        \Sigma\t \Rd
        \\
        &=
        \Sigma\t \lr{
            \set{\oneVec}^\perp
            \oplus
            v_0 \Rd[]
        }
        \\
        &=
        \lr{\Sigma\t \set{\oneVec}^\perp}
        \oplus
        \lr{\Sigma\t v_0 \Rd[]}
        \\
        &=
        \lr{S \set{\oneVec}^\perp}
        \oplus
        \lr{t \oneVec \oneVecT v_0 \Rd[]}
        \\
        &=
        \Image S
        \oplus
        \oneVec \Rd[]
        \\
        &=
        \Rd
        ,
    \end{align*}
    implying that $\Sigma\t$ is invertible.
    Again, it holds that $\Sigma\t[t_0]v_0 = S v_0 = \zeroVec$.
\end{proof}
\begin{lemma}
    \label{lemma:SigmaTPosDef}
    $\Sigma\t$ is \PD{}
    (not necessarily symmetric)
    for all $t > t_0$.
\end{lemma}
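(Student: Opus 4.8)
The plan is to reduce the statement to the signs of the (real) eigenvalues of a symmetric matrix and then argue by continuity of the spectrum on the interval $(t_0,\infty)$. Since for every $v \in \Rd$ one has $v\T \Sigma\t v = t\,(\oneVec\T v)^2 + v\T S v$, and the quadratic form $v \mapsto v\T S v$ depends on $S$ only through its symmetric part $S_{\mathrm{sym}} = \tfrac12(S + S\T)$, which satisfies $\ID S_{\mathrm{sym}} \ID = \Sigma$ as well, I may assume $S$ symmetric. Then $\Sigma\t$ is itself symmetric, and I take $t_0$ to be the value provided by \cref{lemma:SigmatInvertible}, so that $\Sigma\t$ is invertible precisely for $t \neq t_0$.

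First I would record the two base ingredients. By \cref{lemma:SigmaPosSemiDef} the matrix $\Sigma$ is \PSD{} with kernel $\laSpan{\set{\oneVec}}$, so its restriction to $\set{\oneVec}^\perp$ is strictly positive definite. Hence for every $w \perp \oneVec$ with $w \neq \zeroVec$ one has $w\T \Sigma\t w = w\T \Sigma w > 0$, independently of $t$; thus $\Sigma\t$ is positive definite on the codimension-one subspace $\set{\oneVec}^\perp$ and therefore has at least $d-1$ strictly positive eigenvalues for every $t$, so at most one eigenvalue can be non-positive. Restricting now to the connected interval $(t_0,\infty)$, the real eigenvalues of the symmetric matrix $\Sigma\t$ vary continuously in $t$ and, by \cref{lemma:SigmatInvertible}, none of them vanishes there; consequently the number of negative eigenvalues is constant on $(t_0,\infty)$.

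It then remains to evaluate this constant at a single point, for which I would use large $t$. Writing $\Sigma\t = \Sigma\t[t_0] + (t - t_0)\oneVecTT$ shows $\Sigma\t \succeq \Sigma\t[t_0]$ for $t \ge t_0$, so each eigenvalue is non-decreasing in $t$; combined with the fact that the $d-1$ eigenvalues associated with $\set{\oneVec}^\perp$ stay bounded and converge to the positive eigenvalues of $\Sigma$ while the remaining eigenvalue tends to $+\infty$, this gives $\Sigma\t \succ 0$ for $t$ large. Since the number of negative eigenvalues is constant on $(t_0,\infty)$ and equals $0$ for large $t$, it is $0$ throughout, i.e.\ $\Sigma\t \succ 0$ for all $t > t_0$.

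The delicate step, and the main obstacle, is pinning the loss of definiteness to the single value $t_0$, and this is exactly where symmetry is used: for a symmetric matrix, definiteness is governed solely by the signs of the real eigenvalues, so ``no zero eigenvalue on the connected set $(t_0,\infty)$'' freezes the inertia. An equivalent and perhaps cleaner route is to track the unique possibly-negative eigenvalue $\lambda_{\min}(t)$, which is non-decreasing by the Loewner monotonicity above; it passes through $0$ at $t_0$, and its null vector $v_0$ must satisfy $\oneVec\T v_0 \neq 0$ (otherwise $\Sigma\t v_0 = \Sigma\t[t_0] v_0 = \zeroVec$ for all $t$, contradicting invertibility for $t \neq t_0$, cf.\ the vector $v_0 = S\inv\oneVec$ in \cref{lemma:SigmatInvertible}), so the rank-one perturbation strictly raises $\lambda_{\min}$ above $0$ for $t > t_0$. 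Verifying the large-$t$ behaviour and this null-vector nondegeneracy, together with the reduction to the symmetric part, are the only non-routine points.
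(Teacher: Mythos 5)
Your argument is correct, but it follows a genuinely different route from the paper's. Both proofs begin with the same reduction (positive definiteness of $\Sigma\t$ is equivalent to that of its symmetric part, which also satisfies $\ID \check{S} \ID = \Sigma$), and both then implicitly re-read $t_0$ as the threshold of \cref{lemma:SigmatInvertible} applied to the symmetrized matrix. From there the paper works pointwise at the critical value: it proves that $\Sigma\t[t_0]$ is \PSD{} of rank $d-1$ by taking any eigenvector $v_1$ with eigenvalue $\alpha_1 \neq 0$, forming the auxiliary vector $u = (\oneVecT v_0)v_1 - (\oneVecT v_1)v_0 \perp \oneVec$, and using conditional negative definiteness of $\Gamma$ together with $u\T \Sigma\t[t_0] u = (\oneVecT v_0)^2 \norm{v_1}^2 \alpha_1$ to force $\alpha_1 > 0$; positive definiteness for $t > t_0$ then follows because $\Sigma\t = \Sigma\t[t_0] + (t-t_0)\oneVecTT$ is a sum of \PSD{} matrices that is invertible. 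You instead freeze the inertia along the whole parameter interval: at least $d-1$ eigenvalues are positive for every $t$ (your observation that $w\T \Sigma\t w = w\T \Sigma w > 0$ for $\zeroVec \neq w \perp \oneVec$), the remaining eigenvalue is continuous and nonvanishing on $(t_0,\infty)$ by \cref{lemma:SigmatInvertible}, and it is positive for large $t$, hence positive throughout. Both routes rest on exactly the same two ingredients, \cref{lemma:SigmaPosSemiDef,lemma:SigmatInvertible}, so neither is circular. The trade-off: the paper's computation additionally delivers the fact that $\Sigma\t[t_0]$ is \PSD{} of rank $d-1$, which is explicitly recalled and reused in the proof of \cref{lemma:SigmaTLimit}, whereas your main route bypasses the point $t_0$ entirely (your alternative monotonicity route, via $\lambda_{\min}(t_0)=0$, does recover it); conversely, your argument avoids the eigenvector construction and reduces everything to standard spectral continuity.

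The one step you leave unverified is positivity for large $t$, and it does need an argument, since your appeal to eigenvalues "converging to those of $\Sigma$" is itself a perturbation statement. It is, however, routine: write a unit vector as $v = w + c\oneVec/\sqrt{d}$ with $w \perp \oneVec$, let $\lambda > 0$ be the smallest nonzero eigenvalue of $\Sigma$ and $\norm{S}$ the operator norm; then $v\T \Sigma\t v = tc^2 d + v\T S v \geq \lr{td - \lambda - \norm{S}}c^2 - 2\norm{S}\abs{c} + \lambda$, using $w\T S w = w\T \Sigma w \geq \lambda\lr{1-c^2}$ and bounding the cross terms. Once $td > \lambda + \norm{S} + \norm{S}^2/\lambda$, this quadratic in $\abs{c}$ has negative discriminant, so $\Sigma\t$ is \PD{}. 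With that inserted, your proof is complete.
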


\begin{proof}
    A matrix $A \in \Rdd$ is positive (semi-)definite if and only if its symmetric part $\check{A} = \half\tlr{A+A\T}$ is so; indeed, $x\T A x = x\T A\T x = x\T \check{A} x$ for any $x \in \Rd$. The symmetric part of $\Sigma\t$ is $\check\Sigma\t = \half\tlr{\Sigma\t + \tlr{\Sigma\t}\T} = t \oneVec \oneVec\T + \check{S}$ where $\check{S} = \half\tlr{S + S\T}$ is the symmetric part of $S$. Further, since $\Sigma = \ID \tlr{-\half\Gamma} \ID$ is symmetric, the matrix $S \in \Rdd$ satisfies $\ID S \ID = \Sigma$ if and only if $\ID S\T \ID = \Sigma$ and thus if and only if $\ID \check{S} \ID = \Sigma$. Hence, to show that $\Sigma\t$ is positive definite for $t > t_0$, we can without loss of generality assume $S$ is symmetric (or more precisely, replace $S$ by its symmetric part).

	So assume $S$ is symmetric and let $v_0$ be the vector from the proof of \cref{lemma:SigmatInvertible}.
    First, we show that $\Sigma\t[t_0]$ is \PSD; to do so, it is sufficient to show that its non-zero eigenvalues are positive.
    To this end, recall that $\Sigma\t[t_0] v_0 = \zeroVec$
    and consider an eigenvector $v_1 \not\in \laSpan{\{v_0\}}$ of $\Sigma\t[t_0]$
    with eigenvalue $\alpha_1 \ne 0$.
    The vector
    \begin{align*}
        u
        =
        \lr{\oneVecT v_0}
        v_1
        -
        \lr{\oneVecT v_1}
        v_0
    \end{align*}
    satisfies $u \perp \oneVec$ and thus $\ID u = u$. Moreover, since $\oneVecT v_0 \neq 0$,
    it follows that $u \neq \zeroVec$.
    Hence, since $\Gamma$ is \CND{} and since $v_0$ and $v_1$ are orthogonal (as eigenvectors of the symmetric matrix $\Sigma\t[t_0]$ associated to distinct eigenvalues), we have
    \begin{align*}
        0
        &<
        u\T
        \lr{
            - \half \Gamma
        }
        u
        = u\T \ID \lr{-\half\Gamma} \ID u = u\T S u
        =
        u\T
        \Sigma\t[t_0]
        u \\
        &= \lr{\oneVec\T v_0}^2 \norm{v_1}^2 \alpha_1
    \end{align*}
    implying $\alpha_1 > 0$, as required.
    A similar argument shows that $\Sigma\t[t_0]$ has rank $d-1$: otherwise, we could find an eigenvector $v_1 \not\in \laSpan{\{v_0\}}$ of $\Sigma\t[t_0]$ orthogonal to $v_0$ with eigenvalue $0$ as well, and this would lead to a contradiction by the same calculation as above.
    For $t>t_0$, the matrix $\Sigma\t$ is invertible by \cref{lemma:SigmatInvertible},
    and since it is the sum of the two \PSD{} matrices
    $\Sigma\t[t_0]$ and $\lr{t - t_0}\oneVecTT$,
    it is also \PSD{} and hence \PD{}.
\end{proof}

\begin{lemma}
    \label{lemma:SigmaTLimit}
    The limit
    $\limit{t} \tlr{\Sigma\t}\inv$
    exists,
    is symmetric,
    and its kernel contains $\oneVec$.
\end{lemma}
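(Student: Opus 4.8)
The plan is to rotate coordinates so that the singular $\oneVec$-direction becomes a single coordinate axis, and then read off the limit from a $2\times 2$ block inversion (Schur complement) argument in which the diverging parameter $t$ only affects one scalar entry. Let $U \in \Rd[d \times (d-1)]$ have orthonormal columns spanning $\set{\oneVec}^\perp$, set $u = \oneVec/\sqrt{d}$, and let $Q = \lr{U \; u} \in \Rdd$, which is orthogonal. Since $\oneVecTT = d\,uu\T$ and $Q\T u = (0,\dots,0,1)\T =: f$, conjugation gives $Q\T\lr{t\oneVecTT}Q = td\begin{pmatrix} 0 & 0 \\ 0 & 1 \end{pmatrix}$, where blocks have sizes $d-1$ and $1$.

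By \cref{lemma:SigmaPosSemiDef}, $\Sigma$ is symmetric \PSD{} with $\Sigma\oneVec = \zeroVec$; hence $Q\T\Sigma Q = \begin{pmatrix} A & 0 \\ 0 & 0 \end{pmatrix}$ with $A := U\T\Sigma U$ symmetric \PD{} (for $x\neq\zeroVec$, $x\T A x = \lr{Ux}\T\Sigma\lr{Ux}>0$ as $Ux \perp \oneVec$). Because the centering matrix equals $\ID = UU\T = Q\begin{pmatrix} I_{d-1} & 0 \\ 0 & 0 \end{pmatrix}Q\T$, the standing hypothesis $\ID S\ID = \Sigma$ becomes, after conjugation, $P\hat S P = \begin{pmatrix} A & 0 \\ 0 & 0 \end{pmatrix}$ where $\hat S := Q\T S Q$ and $P = \begin{pmatrix} I_{d-1} & 0 \\ 0 & 0 \end{pmatrix}$; that is, the top-left block of $\hat S$ equals $A$, while the blocks $b \in \Rd[d-1]$, $c \in \Rd[d-1]$, $e \in \Rd[]$ are unconstrained (and need not satisfy $b=c$, reflecting that $S$ may be asymmetric). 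Writing $\delta_t = e + td$, I therefore obtain
\[ Q\T\Sigma\t Q = \begin{pmatrix} A & b \\ c\T & \delta_t \end{pmatrix}. \]

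Next I would invert this block matrix. For $t$ large enough $\Sigma\t$ is invertible by \cref{lemma:SigmatInvertible} (indeed \PD{} by \cref{lemma:SigmaTPosDef}), so the Schur complement $s_t = \delta_t - c\T A\inv b$ is a nonzero scalar, and $s_t \to \infty$ as $t \to \infty$. The block-inverse formula then gives
\[ \lr{Q\T\Sigma\t Q}\inv = \begin{pmatrix} A\inv + A\inv b\, s_t\inv c\T A\inv & -A\inv b\, s_t\inv \\ -s_t\inv c\T A\inv & s_t\inv \end{pmatrix}. \]
Every entry except the top-left block carries a factor $s_t\inv \to 0$, so this converges, and undoing the rotation yields $\limit{t}\tlr{\Sigma\t}\inv = Q\begin{pmatrix} A\inv & 0 \\ 0 & 0 \end{pmatrix}Q\T$.

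Finally, the three claims follow at once from this expression: the limit exists by the above; it is symmetric because $A\inv$ is symmetric and conjugation by the orthogonal $Q$ preserves symmetry; and since $Q\T\oneVec = \sqrt{d}\,f$ is annihilated by $\begin{pmatrix} A\inv & 0 \\ 0 & 0 \end{pmatrix}$, the limit sends $\oneVec$ to $\zeroVec$, so $\oneVec$ lies in its kernel (in fact the kernel equals $\laSpan{\set{\oneVec}}$). The one point requiring care — the main obstacle — is showing that the limiting matrix is genuinely block-diagonal with vanishing $\oneVec$-block; this is precisely where the divergence $s_t \to \infty$ does the work, forcing the (possibly asymmetric) data $b, c, e$ to disappear in the limit and leaving a matrix determined only by $A = U\T\Sigma U$.
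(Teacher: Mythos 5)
Your proof is correct, and it takes a genuinely different route from the paper's. The paper constructs two non-orthogonal, $S$-dependent bases $W = (v_0, v_1, \dots, v_{d-1})$ and $\tilde W = (\tilde v_0, v_1, \dots, v_{d-1})$ from the right and left null vectors of $\Sigma^{[t_0]}$ (obtained in the proofs of \cref{lemma:SigmatInvertible,lemma:SigmaTPosDef}), so that $\tilde W^\top \Sigma^{[t]} W$ is \emph{exactly} block diagonal with all $t$-dependence confined to a single scalar entry; inversion is then immediate, but one must track the bi-basis structure and invoke the rank-$(d-1)$ property of $\Sigma^{[t_0]}$. You instead conjugate by a universal orthogonal matrix $Q$ adapted only to $\oneVec$, accept that the off-diagonal blocks $b$, $c$ and the scalar $e$ of $\hat S = Q^\top S Q$ are nonzero and unconstrained (correctly reflecting the possible asymmetry of $S$), and let the Schur complement $s_t = e + td - c^\top A^{-1} b \to \infty$ annihilate everything except the top-left block in the standard $2\times 2$ block-inversion formula. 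Your route rests only on \cref{lemma:SigmaPosSemiDef} (positive definiteness of $A = U^\top \Sigma U$) plus elementary block inversion, and as a bonus it exhibits the limit in closed form as $Q\,\mathrm{diag}(A^{-1},0)\,Q^\top = U A^{-1} U^\top$, from which symmetry and the kernel claim (in fact $\ker = \laSpan{\set{\oneVec}}$ exactly, slightly stronger than stated) are immediate; the paper's symmetry argument, by contrast, needs the observation that $W$ and $\tilde W$ differ only in their first columns. One small streamlining is available to you: the citations of \cref{lemma:SigmatInvertible,lemma:SigmaTPosDef} are not actually needed, since $s_t \to \infty$ already forces $s_t \neq 0$ for all large $t$, and with $A$ positive definite this gives invertibility of $\Sigma^{[t]}$ (as $\det \Sigma^{[t]} = \det(A)\, s_t \neq 0$), which is all that the limit statement requires.
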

\begin{proof}
    To prove the existence and claimed properties of
    $\limit{t} \tlr{\Sigma\t}\inv$,
    let $\set{v_1, \dots, v_{d-1}}$
    be a basis of $\set{\oneVec}^\perp$ in $\Rd$.
    Recall from the proof of \cref{lemma:SigmaTPosDef} that $\Sigma\t[t_0]$ has rank $d-1$,
    and let $\tilde v_0$ be its left null vector,
    i.e., $\tilde v_0\T \Sigma\t[t_0] = \zeroVec\T$.
    Repeating the arguments in the proof of \cref{lemma:SigmatInvertible} for $\tilde S = S\T$
    shows that $\tilde v_0\T \oneVec \neq 0$.
    Define the matrices $W, \tilde W \in \Rdd$ and $V \in \Rd[d \times (d-1)]$ by
    \begin{align*}
        V
        &=
        \lr{
            v_1, \dots, v_{d-1}
        }
        \\
        W
        &=
        \lr{
            v_0, v_1, \dots, v_{d-1}
        }
        \\
        \tilde W
        &=
        \lr{
            \tilde v_0, v_1, \dots, v_{d-1}
        }
    \end{align*}
    Note that $\ID V = V$.
    Then
    \begin{align*}
        \tilde W\T \oneVecTT W
        &=
        \begin{pmatrix}
            c & \zeroVec\T \\
            \zeroVec & \zeroVec \zeroVec\T
        \end{pmatrix}
        , \\
        \tilde W\T \Sigma\t[t_0] W
        &=
        \begin{pmatrix}
            0 & \zeroVec\T \\
            \zeroVec & C
        \end{pmatrix}
        ,
    \end{align*}
    for some constant $c \neq 0$ and $C \in \Rdd[(d-1)]$ satisfying
    \begin{align*}
        C
        &=
        V\T
        \Sigma\t[t_0]
        V
        \\ &=
        V\T \ID
        \Sigma\t[t_0]
        \ID V
        \\ &=
        V\T
        \Sigma
        V
        ,
    \end{align*}
    which is symmetric \PD{}, since $\Sigma$ is symmetric \PSD{} with kernel $\laSpan{\set{\oneVec}}$.
    Hence, using $\Sigma\t = (t-t_0) \oneVec \oneVec\T + \Sigma\t[t_0]$, we have
    \begin{align*}
        \Sigma\t
        &=
        (\tilde W\T)\inv
        \begin{pmatrix}
            c \lr{t-t_0} & \zeroVec\T \\
            \zeroVec & C
        \end{pmatrix}
        W\inv
        , \\
        \Longrightarrow \quad
        \lr{\Sigma\t}\inv
        &=
        W
        \begin{pmatrix}
            c\inv \lr{t-t_0}\inv & \zeroVec\T \\
            \zeroVec & C\inv
        \end{pmatrix}
        \tilde W\T
        , \\
        \Longrightarrow \quad
        \limit{t}
        \lr{\Sigma\t}\inv
        &=
        W
        \begin{pmatrix}
            0 & \zeroVec\T \\
            \zeroVec & C\inv
        \end{pmatrix}
        \tilde W\T
        .
    \end{align*}
    The latter matrix is symmetric since $W$ and $\tilde W$ differ only in the first column,
    and since $v_j \perp \oneVec$ for all $j = \oneToX{d-1}$,
    its kernel contains $\oneVec$.
\end{proof}

\begin{lemma}
    \label{lemma:PSPeqLim}
    For $t>t_0$, let
    $\Theta\t := \tlr{\Sigma\t}\inv$.
    Then
    \begin{align*}
        \Sigma\pinv
        =
        \limit{t}{\Theta\t}
        .
    \end{align*}
\end{lemma}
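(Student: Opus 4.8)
The plan is to show that the limit $\Theta_\infty := \limit{t}\Theta\t$, which exists and is symmetric with $\oneVec \in \ker\Theta_\infty$ by \cref{lemma:SigmaTLimit}, coincides with the \MP{} inverse of $\Sigma$. I would verify this through \cref{lemma:conditionsPinv} applied to the symmetric matrices $A = \Theta_\infty$ and $B = \Sigma$ (the latter symmetric by \cref{lemma:SigmaPosSemiDef}), for which it suffices to check that (i) $\ker\Theta_\infty \subseteq \ker\Sigma$ and (ii) $\Theta_\infty\Sigma = \ID$, noting that $\ID$ is precisely the orthogonal projection onto $\set{\oneVec}^\perp = \Image\Theta_\infty$.

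The essential trick, which sidesteps the fact that $\Sigma\t = t\oneVecTT + S$ itself does not converge, is to restrict attention to $\set{\oneVec}^\perp$. For any $v \perp \oneVec$ we have $\oneVecT v = 0$ and therefore $\Sigma\t v = Sv$, independently of $t$. Since $\Theta\t = \tlr{\Sigma\t}\inv$ for $t > t_0$, this gives $\Theta\t\lr{Sv} = \Theta\t\Sigma\t v = v$ for every such $t$, and letting $t \to \infty$ yields $\Theta_\infty\lr{Sv} = v$.

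Next I would convert this into a statement about $\Sigma v$. For $v \perp \oneVec$ one has $\ID v = v$, so $\Sigma v = \ID S \ID v = \ID S v$, whence $Sv - \Sigma v = \lr{\Id - \ID}Sv = d\inv\oneVec\lr{\oneVecT Sv}$ is a multiple of $\oneVec$. As $\oneVec \in \ker\Theta_\infty$, this difference is annihilated and $\Theta_\infty\Sigma v = \Theta_\infty\lr{Sv} = v$ for all $v \perp \oneVec$. Since also $\Theta_\infty\Sigma\oneVec = \zeroVec = \ID\oneVec$ (because $\Sigma\oneVec = \zeroVec$), and since $\set{\oneVec}^\perp$ and $\laSpan{\set{\oneVec}}$ together span $\Rd$, the maps $\Theta_\infty\Sigma$ and $\ID$ agree everywhere, giving~(ii).

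Condition~(i) then follows for free: from $\Theta_\infty\Sigma = \ID$ we read off $\operatorname{rank}\Theta_\infty \ge d-1$, while $\oneVec \in \ker\Theta_\infty$ forces $\operatorname{rank}\Theta_\infty \le d-1$, so $\ker\Theta_\infty = \laSpan{\set{\oneVec}} = \ker\Sigma$. \cref{lemma:conditionsPinv} then delivers $\Theta_\infty = \Sigma\pinv$. The only delicate step is the limit passage in the second paragraph; everything else is linear algebra, and the restriction to $\set{\oneVec}^\perp$, where $\Sigma\t$ does not depend on $t$, is exactly what makes that passage legitimate.
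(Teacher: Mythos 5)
Your proof is correct, and it rests on the same key lemma as the paper's proof, namely \cref{lemma:conditionsPinv}, but you apply it with the roles of the two matrices swapped and you verify its hypotheses by a different computation. The paper takes $A = \Sigma$ and $B = \limit{t}\Theta\t$: with that orientation the kernel condition $\ker\Sigma \subseteq \ker \limit{t}\Theta\t$ is immediate from \cref{lemma:SigmaPosSemiDef,lemma:SigmaTLimit}, and the product condition $\Sigma\,\limit{t}\Theta\t = \ID$ is obtained by the matrix identity $S\Theta\t = \Id - t\oneVecTT\Theta\t$ together with $\ID\oneVec = \zeroVec$; the lemma then gives $\Sigma = \lr{\limit{t}\Theta\t}\pinv$, and the stated equality follows via the involution $\lr{B\pinv}\pinv = B$. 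You instead take $A = \Theta_\infty$ and $B = \Sigma$, which yields $\Theta_\infty = \Sigma\pinv$ directly, and your verification of $\Theta_\infty\Sigma = \ID$ is vectorial rather than matricial: on $\set{\oneVec}^\perp$ every $\Sigma\t$ coincides with $S$, so $\Theta\t\lr{Sv} = v$ for all $t > t_0$, the limit passes, and the discrepancy $Sv - \Sigma v \in \laSpan{\set{\oneVec}}$ is annihilated by $\Theta_\infty$. The cost of your orientation is that you need the reverse kernel inclusion $\ker\Theta_\infty \subseteq \ker\Sigma$, which \cref{lemma:SigmaTLimit} does not supply; your rank argument (rank at least $d-1$ from $\Theta_\infty\Sigma = \ID$, at most $d-1$ from $\oneVec \in \ker\Theta_\infty$) closes this correctly and, as a by-product, pins down $\ker\Theta_\infty = \laSpan{\set{\oneVec}}$ exactly---a fact the paper's orientation never requires. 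Both routes are sound; yours makes the limit passage more transparent by restricting to the subspace where $\Sigma\t$ is constant in $t$, while the paper's is slightly shorter.
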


\begin{proof}
	We will check the two criteria in \cref{lemma:conditionsPinv}.
	Since $\ID = I_d - d^{-1} \oneVec \oneVec\T$ and since the vector $\oneVec$ belongs to the kernel of the symmetric matrix $\limit{t}{\Theta\t}$, we have $\ID \limit{t}{\Theta\t} = \limit{t}{\Theta\t}$ and thus
    \begin{align*}
        \Sigma
        \limit{t} \Theta\t
        &=
        \ID S \ID
        \limit{t} \Theta\t
        \\
        &=
        \ID
        \limit{t}
        S \Theta\t
        \\
        &=
        \ID
        \limit{t}
        \lr{
            S + \oneVecTT t
            - \oneVecTT t
        }
        \lr{
            S + \oneVecTT t
        }\inv
        \\
        &=
        \ID
        -
        \limit{t}
        \ID \oneVecTT t
        \lr{
            S + \oneVecTT t
        }\inv
        \\
        &=
        \ID
        .
    \end{align*}
    Since $\Sigma = \ID S \ID$, it follows that $\Image \ID = \Image \Sigma$ and thus that $\ID$ is the projection matrix onto the image of $\Sigma$.
    Further, the results from \cref{lemma:SigmaPosSemiDef}
    and \cref{lemma:SigmaTLimit} yield
    \begin{align*}
        \ker \Sigma
        =
        \laSpan{\set{\oneVec}}
        \subseteq
        \ker \limit{t}{\Theta\t}
        .
    \end{align*}
    Hence, the claimed equality follows from \cref{lemma:conditionsPinv} with $A = \Sigma$ and $B = \limit{t}{\Theta\t}$.
\end{proof}

\begin{lemma}
    \label{lemma:GammaLimTheta}
    For $\Theta$ from \cref{def:Theta}
    it holds that
    \begin{align*}
        \limit{t} \lr{t \oneVecTT - \half \Gamma}\inv
        =
        \Theta
        .
    \end{align*}
\end{lemma}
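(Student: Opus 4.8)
The plan is to reduce the claim to the single pseudo-inverse identity $\Sigma\pinv = \Theta$, where throughout $\Sigma = \ID\lr{-\half\Gamma}\ID$ as in the standing assumption of this subsection. First I would observe that $S = -\half\Gamma$ is itself an admissible choice of $S$, since $\ID\lr{-\half\Gamma}\ID = \Sigma$ holds trivially. With this $S$ one has $\Sigma\t = t\oneVecTT - \half\Gamma$, so \cref{lemma:PSPeqLim} gives at once
\[
  \limit{t}\lr{t\oneVecTT - \half\Gamma}\inv = \Sigma\pinv .
\]
It therefore only remains to identify $\Sigma\pinv$ with the matrix $\Theta$ of \cref{def:Theta}.

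For this identification I would feed a different admissible matrix into \cref{lemma:PSPeqLim}, namely the augmented matrix $\Skt$, that is, $\Sk = \phik{\Gamma}$ with a $k$th zero row and column appended. A direct computation (also recorded before \cref{lemma:completeTwoCliquesGamma}) gives $\farris{\Skt} = \Gamma$, and since $\Skt$ is symmetric, \cref{lemma:farrisSeqG} yields $\ID\Skt\ID = \Sigma$; hence $\Skt$ satisfies the standing assumption as well. The point of this choice is the block structure: ordering the indices so that $k$ comes last,
\[
  t\oneVecTT + \Skt
  =
  \begin{pmatrix}
    \Sk + t\oneVec\oneVecT & t\oneVec \\
    t\oneVecT & t
  \end{pmatrix},
\]
where $\oneVec$ now denotes the all-ones vector in $\Rd[d-1]$. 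For $t \neq 0$ the Schur complement of the bottom-right entry is $\Sk + t\oneVec\oneVecT - \lr{t\oneVec}t\inv\lr{t\oneVecT} = \Sk$, so by the block-inversion formula the principal block of $\lr{t\oneVecTT + \Skt}\inv$ indexed by $V\setminus\set{k}$ equals $\lr{\Sk}\inv = \Tk$, and this holds independently of $t$. Letting $t\to\infty$ and using \cref{lemma:PSPeqLim} once more (now with $S = \Skt$, which again has $\ID\Skt\ID = \Sigma$ and hence the same limit $\Sigma\pinv$), I conclude $\lr{\Sigma\pinv}_{ij} = \Tk_{ij}$ for all $i,j \neq k$.

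Since $k$ was arbitrary, this holds for every $k \in V$. Comparing with \cref{def:Theta}, which sets $\Theta_{ij} = \Tk_{ij}$ for some $k \notin \set{i,j}$ (and $k \neq i$ on the diagonal), gives $\Theta_{ij} = \lr{\Sigma\pinv}_{ij}$ for all $i,j$, that is, $\Sigma\pinv = \Theta$, and combining with the first paragraph finishes the proof for $d \geq 3$. The degenerate case $d = 2$ I would treat directly: the $2\times 2$ matrix $t\oneVecTT - \half\Gamma$ has inverse tending to the matrix with entries $\pm 1/\Gamma_{12}$, which is exactly the value prescribed in \cref{def:Theta}.

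The main obstacle is the middle step, i.e.\ recognizing that the right matrix to pass through \cref{lemma:PSPeqLim} is the augmented $\Skt$ rather than $-\half\Gamma$: only for $\Skt$ does the Schur complement collapse exactly to $\Sk$, because the rank-one terms $t\oneVec\oneVecT$ cancel, so that the relevant block of the inverse is already equal to $\Tk$ for finite $t$ and a fortiori in the limit. Everything else — the admissibility of both choices of $S$ via \cref{lemma:farrisSeqG}, and the bookkeeping in the block-inversion formula — is routine.
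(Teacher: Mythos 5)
Your proof is correct, and it takes a genuinely different route from the paper's. The paper proves \cref{lemma:GammaLimTheta} directly on the matrix $t\oneVecTT - \half\Gamma$ appearing in the statement: it interprets this matrix as a Gaussian covariance, conditions on the $k$th coordinate, and shows via an asymptotic Schur-complement expansion that the conditional covariance equals $\Sigma\k + \littleO{1}$ as $t \to \infty$, so that the $(V\without{k})\times(V\without{k})$ block of the inverse converges to $\Theta\k$; notably, it makes no use of \cref{lemma:PSPeqLim} inside this proof, and only combines the two lemmas afterwards in the proof of \cref{prop:ThetaReps}. You instead run the Schur-complement computation on the augmented matrix $\SigmaT\k$, where the rank-one terms cancel exactly, so the relevant block of $\lr{t\oneVecTT + \SigmaT\k}\inv$ equals $\Theta\k$ identically in $t$ rather than only in the limit; you then use \cref{lemma:PSPeqLim} twice (with admissibility of both $-\half\Gamma$ and $\SigmaT\k$ supplied by \cref{lemma:farrisSeqG}) to identify both limits with the common value $\Sigma\pinv$, concluding $\Sigma\pinv = \Theta$ and hence the lemma. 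This reverses the paper's logical order—you essentially establish the pseudo-inverse representation \cref{eq:ThetaRepPinv} first and deduce the limit statement from it—but there is no circularity, since \cref{lemma:PSPeqLim} and \cref{lemma:farrisSeqG} are proven independently of \cref{def:Theta} and of this lemma. What your route buys is the replacement of the paper's asymptotic expansion by exact algebra, plus two byproducts: the identity $\Theta = \Sigma\pinv$ and a self-contained re-derivation of the $k$-independence in \cref{def:Theta} (since every choice of $k$ is shown to agree with $\Sigma\pinv$). What the paper's route buys is self-containedness of the lemma (no reliance on the pseudo-inverse machinery) and a probabilistic interpretation of the limit via Gaussian conditioning; your handling of the $d=2$ case coincides with the paper's explicit computation.
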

\begin{proof}
    For $d=2$, all valid variogram matrices are of the form
    $\Gamma_{11} = \Gamma_{22} = 0$ and
    $\Gamma_{12} = \Gamma_{21} =: \eta > 0$,
    and the expressions can directly be checked to be equal: we have $\Sigma^{(1)} = \Sigma^{(2)} = \gamma$ and
    \begin{align*}
    	\lr{t \oneVecTT - \half \Gamma}\inv
    	&= \begin{bmatrix} t & t -\half \eta \\ t-\half \eta & t \end{bmatrix}\inv \\
    	&= \frac{1}{t^2 - (t-\half\eta)^2}
    	\begin{bmatrix} t & -t + \half \eta \\ -t + \half\eta & t \end{bmatrix} \\
    	&= \frac{1}{\half\eta (2t - \half\eta)}     	\begin{bmatrix} t & -t + \half \eta \\ -t + \half\eta & t \end{bmatrix} \\
    	&\to \frac{1}{\half\eta} \begin{bmatrix} \phantom{-}\half & -\half \\[1ex] -\half & \phantom{-}\half \end{bmatrix}
    	= \begin{bmatrix} \phantom{-}1/\eta & -1/\eta \\ -1/\eta & \phantom{-}1/\eta \end{bmatrix} = \Theta, \qquad t \to \infty.
    \end{align*}

    For $d \ge 3$,
    write $V = \set{\oneToX{d}}$,
    consider $\Sigma\t = t \oneVecTT - \half \Gamma$ (which is \PD{} for large enough $t$),
    let ${Y}\t \sim \normal{0, \Sigma\t}$,
    and let ${Y}\tk$ be the random vector
    ${Y}\t$ conditioned on the event $\tset{{Y}\t_k = y_k}$
    for some $y_k \in \mathbb{R}$.
    Then the $(d-1)$-dimensional random vector $Y_{V\without{k}}\tk = (Y_i\tk)_{i \in V\without{k}}$ is multivariate normal with covariance matrix $\Sigma\tk$ given by
    \begin{align*}
        \Sigma\tk
        &=
        \Sigma\t_{\backslash k}
        -
        \Sigma\t_{\cdot, k}
        \lr{
            \Sigma\t_{k,k}
        }\inv
        \Sigma\t_{k, \cdot}
        \\
        &=
        \lr{
            \oneVecTT t
            -
            \half \Gamma_{\backslash k}
        }
        -
        \lr{
            \oneVec t
            -
            \half \Gamma_{\cdot, k}
        }
        \frac{1}{t}
        \lr{
            \oneVecT t
            -
            \half \Gamma_{k, \cdot}
        }
        \\
        &=
        \oneVecTT t
        -
        \half \Gamma_{\backslash k}
        - \oneVecTT t
        +
        \half \oneVec \Gamma_{k, \cdot}
        +
        \half \Gamma_{\cdot, k} \oneVecT
        +
        \frac{1}{4t} \Gamma_{\cdot, k} \Gamma_{k, \cdot}
        \\
        &=
        \half \lr{
            \oneVec \Gamma_{k, \cdot}
            +
            \Gamma_{\cdot, k} \oneVecT
            -
            \Gamma_{\backslash k}
        }
        +
        \littleO{1}
        \\
        &=
        \Sigma\k + \littleO{1}
        , \qquad t \to \infty,
    \end{align*}
    with $\Sigma\k$ as in \cref{def:HRdensity}.
    Since $\Sigma\k$ is \PD{}, so is $\Sigma\tk$ for sufficiently large $t$, and the respective precision matrices satisfy
    \begin{align*}
        \Theta\tk
        &:=
        \tlr{\Sigma\tk}\inv
        \longrightarrow
        \tlr{\Sigma^{(k)}}\inv
        =:
        \Theta^{(k)}
        , \quad
        t
        \rightarrow
        \infty
        .
    \end{align*}
    Recall $\Theta\t = \tlr{\Sigma\t}\inv$. Classical properties of the Schur complement of a block matrix \citep[see, e.g.][Theorem~2.5]{rue2005} imply
    \begin{align*}
        \Theta\tk = \tlr{\Theta\t}_{\lr{V\without{k}} \times \lr{V\without{k}}}
    \end{align*}
    and hence
    \begin{align*}
        \Theta\t_{ij}
        \rightarrow
        \Theta\k_{ij}
        ,\qquad
        i,j \neq k
        , \quad
        t \rightarrow \infty
        .
    \end{align*}
    Since $\Theta$ is defined by $\Theta_{ij} = \Theta\k_{ij}$,
    this implies
    \begin{equation*}
        \limit{t} \lr{t \oneVecTT - \half \Gamma}\inv
        =
        \Theta
        .
        \qedhere
    \end{equation*}
\end{proof}

\begin{proofOf}[prop:ThetaReps]
    The implication
    $\gamma\lr{S} = \Gamma \Rightarrow \Proj S \Proj = \Proj \lr{-\half \Gamma} \Proj$
    follows directly from \cref{lemma:farrisSeqG}.
    Combining the results above yields
    \begin{alignat*}{4}
        \Theta
        &=
        \limit{t} \lr{t \oneVecTT - \half \Gamma}\inv
        && \qquad \text{(\cref{lemma:GammaLimTheta})}
        \\ &=
        \lr{\Proj \lr{-\half \Gamma} \Proj}\pinv
        && \qquad \text{(\cref{lemma:PSPeqLim})}
        \\ &=
        \lr{\Proj S \Proj}\pinv
        && \qquad \text{(\cref{lemma:farrisSeqG})}
        \\ &=
        \limit{t} \lr{t \oneVecTT + S}\inv
        .
        && \qquad \text{(\cref{lemma:PSPeqLim})}
      \end{alignat*}
      Using the invertibility of the map that sends a matrix to its \MP{} inverse,
      the equality $\Theta = \lr{\Proj \lr{-\half \Gamma} \Proj}\pinv$  immediately implies that
      $\ID S \ID = \ID \lr{-\half \Gamma} \ID$
      is a necessary condition for $\lr{\ID S \ID}\pinv = \Theta$.
\end{proofOf}

\subsection{\texorpdfstring{$S$ as generalized inverse of $\Theta$}{S as generalized inverse of Theta}}

The following result yields an interesting alternative characterization of
the matrices $S$ that can be used in \cref{prop:ThetaReps},
in terms of generalized inverses,
of which the \MP{} inverse is a specific instance.

\begin{lemma}
    \label{lemma:SisGeneralizedInverse}
    Let $\Gamma \in \Dcal_d$, $\Sigma = \ID (-\half\Gamma) \ID$,
    and $\Theta = \Sigma\pinv$.
    For $S \in \Rdd$,
    we have $\ID S \ID = \Sigma$ if and only if
    $S$ is a generalized inverse of $\Theta$, i.e.,
    \begin{align}
        \label{eq:lemmaGenInv2}
        \Theta S \Theta &= \Theta
        .
    \end{align}
\end{lemma}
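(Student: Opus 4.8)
The plan is to reduce both implications to the defining Moore--Penrose equations for $\Theta = \Sigma\pinv$, after first recording two pairs of algebraic identities linking $\Sigma$, $\Theta$, and the centering matrix $\ID$. First I would note, using \cref{lemma:SigmaPosSemiDef}, that $\Sigma$ is symmetric \PSD{} with kernel $\laSpan{\set{\oneVec}}$, so that its image is exactly $\set{\oneVec}^\perp$. Since $\ID$ is the orthogonal projection onto $\set{\oneVec}^\perp$ and since $\Theta = \Sigma\pinv$ shares both the image and the kernel of the symmetric matrix $\Sigma$ (by the singular value description in \cref{lemma:propertiesMP}), \cref{lemma:propertiesMP}(3) yields $\Sigma\Theta = \Theta\Sigma = \ID$, while the relation $\Theta\oneVec = \zeroVec$ together with $\ID = \Id - \oneVec\eV\T$ yields $\ID\Theta = \Theta\ID = \Theta$.

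With these identities in hand, the forward direction is immediate. Assuming $\ID S \ID = \Sigma$ and using $\Theta = \Theta\ID = \ID\Theta$, I would compute
\begin{align*}
  \Theta S \Theta
  = \lr{\Theta\ID} S \lr{\ID\Theta}
  = \Theta \lr{\ID S \ID} \Theta
  = \Theta\Sigma\Theta
  = \Theta,
\end{align*}
where the last step is the Moore--Penrose equation $\Sigma\pinv\Sigma\Sigma\pinv = \Sigma\pinv$ from \cref{def:pseudoinverse}.

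For the converse, I would sandwich the hypothesis $\Theta S \Theta = \Theta$ between two copies of $\Sigma$ and exploit $\Sigma\Theta = \Theta\Sigma = \ID$:
\begin{align*}
  \ID S \ID
  = \lr{\Sigma\Theta} S \lr{\Theta\Sigma}
  = \Sigma \lr{\Theta S \Theta} \Sigma
  = \Sigma\Theta\Sigma
  = \Sigma,
\end{align*}
where the final equality is the companion Moore--Penrose equation $\Sigma\Sigma\pinv\Sigma = \Sigma$. This establishes both implications.

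I do not anticipate a genuine obstacle here; the entire argument after the preliminary identities is a two-line manipulation driven by the Moore--Penrose relations. The only point requiring care is the clean derivation of the auxiliary identities $\Sigma\Theta = \Theta\Sigma = \ID$ and $\ID\Theta = \Theta\ID = \Theta$, which is precisely where the special structure of $\Sigma$ (symmetric, \PSD{}, with kernel spanned by $\oneVec$) enters and where $\ID$ is recognized as the orthogonal projection onto the common image of $\Sigma$ and $\Theta$.
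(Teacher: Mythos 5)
Your proposal is correct and follows essentially the same route as the paper: both directions are the two-line sandwich arguments $\Theta S \Theta = \Theta(\ID S \ID)\Theta = \Theta\Sigma\Theta = \Theta$ and $\ID S \ID = \Sigma(\Theta S \Theta)\Sigma = \Sigma\Theta\Sigma = \Sigma$, driven by the Moore--Penrose equations together with the identities $\ID\Theta = \Theta\ID = \Theta$ and $\Sigma\Theta = \Theta\Sigma = \ID$. The paper justifies these auxiliary identities by citing the proof of \cref{lemma:PSPeqLim} (where $\ID$ is identified as the projection onto the common image of $\Sigma$ and $\Theta$), while you derive them directly from \cref{lemma:SigmaPosSemiDef} and \cref{lemma:propertiesMP}; this is an immaterial difference.
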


\begin{proof}
    First, suppose $\ID S \ID = \Sigma = \Theta\pinv$.
    Since $\ID$ is the projection matrix onto the image of $\Sigma$ and thus also onto the image of $\Theta = \Sigma\pinv$ (see proof of \cref{lemma:PSPeqLim}),
    \begin{align*}
        \Theta S \Theta
        =
        \Theta \ID S \ID \Theta
        =
        \Theta \Theta\pinv \Theta
        =
        \Theta
        .
    \end{align*}

    Second, let \cref{eq:lemmaGenInv2} be satisfied. Since $\Theta\pinv \Theta = \Theta \Theta\pinv$ is the projection matrix onto the image of $\Theta$ and thus equal to $\ID$, we find
    \[
    	\ID S \ID
    	= \Theta\pinv \Theta S \Theta \Theta\pinv
    	= \Theta\pinv \Theta \Theta\pinv
    	= \Theta\pinv
    	= \Sigma.
    	\qedhere
    \]
\end{proof}

\subsection{Proof of \cref{prop:thetaHomeo}}

\begin{definition}
    \label{def:gamma}
    The covariance transform $\gamma$ is defined for
    arbitrary square matrices as
    \begin{alignat*}{4}
        \myFunctionDefinition{\gamma}{\Rdd}{\Rdd}{S}{
            \oneVec \diag{S}\T
            +
            \diag{S} \oneVec\T
            -
            2 S
        }
        .
    \end{alignat*}
\end{definition}
This definition does not require $S$ to have any special properties
but also does not guarantee many useful properties of $\gamma\lr{S}$.
A more useful mapping can be obtained by restricting the domain as follows.
\begin{lemma}
    \label{lemma:gammaMapsToCND}
    Recall $\Dcal_d$ from \cref{eq:defVariogramSet}.
    Then $\gamma\lr{\Sigma} \in \Dcal_d$ for any
    symmetric, \PSD{}
    $\Sigma \in \Rdd$ satisfying
    \begin{align*}
        \ker\Sigma \cap \set{\oneVec}^\perp = \set{\zeroVec}
        .
    \end{align*}
\end{lemma}
\begin{proof}
    For $\Sigma$ satisfying the conditions above
    let $\Gamma = \gamma\lr{\Sigma}$.
    Then clearly
    $\Gamma = \Gamma\T$ and $\diag{\Gamma} = \zeroVec$.
    Moreover, for $\zeroVec \neq v \in \set{\oneVec}^\perp$,
    we have
    \begin{align*}
    	v\T \Gamma v
    	= v\T \farris{\Sigma} v
    	= v\T \tlr{\oneVec \diag{\Sigma}\T + \diag{\Sigma} \oneVec\T - 2 \Sigma} v
    	= -2 v\T \Sigma v < 0
        ,
    \end{align*}
    implying that $\Gamma \in \Dcal_d$.
\end{proof}

\begin{proofOf}[prop:thetaHomeo]
    \cref{lemma:SigmaPosSemiDef,lemma:gammaMapsToCND}
    show that $\sigma$ and $\gamma$
    do indeed map between $\Dcal_d$ and $\Pcal_d^\oneVec$.
    Since both maps consist only of elementary matrix operations
    and the continuous map $\diag{\cdot}$,
    they are also continuous.
    Using the intermediate results from the proof of \cref{lemma:farrisSeqG},
    it can be seen that
    \begin{align*}
        \farris{\sigma\lr{\Gamma}}
        &=
        \farris{\ID \lr{-\half \Gamma} \ID}
        =
        \Gamma
        , \qquad
        \forall \Gamma \in \Dcal_d
        , \\
        \sigma\lr{\farris{\Sigma}}
        &=
        \ID \lr{-\half \farris{\Sigma}} \ID
        =
        \Sigma
        , \qquad
        \forall \Sigma \in \Pcal_d^\oneVec
        ,
    \end{align*}
    implying that $\sigma$ is bijective between $\Dcal_d$ and $\Pcal_d$ with continuous inverse $\sigma\inv = \farris{}$.

    By \cref{lemma:propertiesMP}, \cref{lemma:propertiesMP:SVD},
    the \MP{} inverse of a symmetric \PSD{} matrix is again symmetric \PSD{}.
    Hence,
    $\theta$ also maps $\Dcal_d$ into $\Pcal_d^\oneVec$ and
    can be written as the composition $\varphi \circ \sigma$,
    with
    \begin{alignat*}{3}
        \myFunctionDefinition{\varphi}{\Pcal_d^\oneVec}{\Pcal_d^\oneVec}{\Sigma}{\Sigma\pinv}
        .
    \end{alignat*}
    As shown in \cite{rakocevic1997},
    the map $\varphi$ is continuous, and hence is a continuous, idempotent bijection; note that it is surjective since every $\Sigma \in \Pcal_d^\oneVec$ can be written as $\Sigma = \varphi(\varphi(\Sigma))$.
    Therefore,
    $\theta$ is also continuous with continuous inverse $\theta\inv = \farris{} \circ \varphi$.
\end{proofOf}

\subsection{Proof of \cref{prop:HrDensitySimple}}

\label{subsec:proofDensity}

Before showing \cref{prop:HrDensitySimple}, we establish a formula that allows to simplify the proportionality constant occurring in the density expression.

\begin{lemma}
	\label{lem:constantdiagonal}
	Let $\Gamma \in \mathcal{D}_d$ and put $\Theta = \lr{\ID \lr{-\half\Gamma} \ID}\pinv$. Then
	\begin{align*}
		\lr{-\half\Gamma} \Theta \lr{-\half\Gamma}
		&= -\half\Gamma + c(\Gamma) \oneVec \oneVec\T \qquad \text{where} \\
		c(\Gamma) &= \eV\T \lr{-\half\Gamma} \Theta \lr{-\half\Gamma} \eV - \eV\T \lr{-\half\Gamma} \eV.
	\end{align*}
	Since $\Gamma$ has a zero diagonal, this implies that $\Gamma \Theta \Gamma$ has a constant diagonal.
\end{lemma}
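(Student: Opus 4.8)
The plan is to abbreviate $B := -\half\Gamma$, so that $\Sigma = \ID B \ID$, $\Theta = \Sigma\pinv$, and the assertion becomes $B\Theta B = B + c(\Gamma)\,\oneVecTT$. Throughout I would rely on the projection identities already established in the excerpt: since $\ID$ is the orthogonal projection onto $\Image\Sigma = \Image\Theta = \set{\oneVec}^\perp$ (see the proofs of \cref{lemma:PSPeqLim,lemma:SisGeneralizedInverse}), we have $\ID\Theta = \Theta\ID = \Theta$, $\Theta\oneVec = \zeroVec$, and $\Theta\Sigma = \Sigma\Theta = \ID$.

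First I would reduce $\Theta B$ to a rank-one correction of $\ID$. From $\Theta = \Theta\ID$ and $\Sigma = \ID B\ID$ one obtains $\Theta B\ID = \Theta\ID B\ID = \Theta\Sigma = \ID$. Writing $\Id = \ID + \eV\oneVecT$ (recall $\eV\oneVecT = d\inv\oneVecTT$) then gives
\begin{align*}
  \Theta B = \Theta B\lr{\ID + \eV\oneVecT} = \ID + \lr{\Theta B\eV}\oneVecT.
\end{align*}
Multiplying on the left by $B$ and using $B\ID = B - \lr{B\eV}\oneVecT$ yields
\begin{align*}
  B\Theta B = B + \lr{B\Theta B\eV - B\eV}\oneVecT =: B + w\oneVecT.
\end{align*}

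The crucial step is then purely structural rather than computational. The matrix $B\Theta B$ is symmetric (as $B$ and $\Theta$ are), and $B$ is symmetric, so $w\oneVecT$ must be symmetric, i.e.\ $w\oneVecT = \oneVec w\T$. Applying this identity to $\oneVec$ forces $w = d\inv\lr{\oneVecT w}\oneVec$, so $w$ is a multiple of $\oneVec$; setting $c(\Gamma) := d\inv\oneVecT w$ gives $B\Theta B = B + c(\Gamma)\,\oneVecTT$, the first display. To put $c(\Gamma)$ into the stated form I would left- and right-multiply this identity by $\eV\T$ and $\eV$ and use $\eV\T\oneVec = 1$, so that $\eV\T B\Theta B\eV = \eV\T B\eV + c(\Gamma)$; substituting $B = -\half\Gamma$ recovers the displayed formula for $c(\Gamma)$.

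For the final assertion, $\Gamma = -2B$ gives $\Gamma\Theta\Gamma = 4\,B\Theta B = -2\Gamma + 4c(\Gamma)\,\oneVecTT$, and since $\diag{\Gamma} = \zeroVec$ every diagonal entry of $\Gamma\Theta\Gamma$ equals the constant $4c(\Gamma)$. I do not anticipate a real obstacle; the only points demanding care are the bookkeeping among $\oneVec$, $\eV = d\inv\oneVec$, and $\ID = \Id - \eV\oneVecT$, and the observation that it is the symmetry of $B\Theta B$ — not any explicit evaluation of $w$ — that collapses the rank-one correction onto $\oneVecTT$.
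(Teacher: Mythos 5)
Your proof is correct, and the facts you import ($\ID\Theta=\Theta\ID=\Theta$, $\Theta\oneVec=\zeroVec$, $\Sigma\Theta=\Theta\Sigma=\ID$) are all established in the paper (proofs of \cref{lemma:PSPeqLim,lemma:SisGeneralizedInverse}). The computational skeleton coincides with the paper's own proof: writing $B=-\half\Gamma$, both arguments expand $\Id=\ID+\eV\oneVecT$ and use $\Theta\Sigma=\ID$ to reach $B\Theta B = B + (\text{rank-one correction})$; the paper peels off the projection on the left, you on the right, which is immaterial. The genuine difference is how the correction is collapsed onto $\oneVecTT$. The paper performs a \emph{second} expansion of the identity inside the term $\oneVec\eV\T B\Theta B$ and lets the cross terms $\pm\,\oneVec\eV\T B$ cancel explicitly, so that the constant $c(\Gamma)$ emerges directly from the algebra. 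You instead observe that $w\oneVecT = B\Theta B - B$ is symmetric because $B$ and $\Theta$ are, which forces $w\in\laSpan{\set{\oneVec}}$, and you then identify the constant by sandwiching the resulting identity between $\eV\T$ and $\eV$. Your symmetry argument is a clean structural shortcut: it explains \emph{why} the correction must be a multiple of $\oneVecTT$ with no bookkeeping of cancelling terms, at the cost of one extra trivial step to pin down $c(\Gamma)$; the paper's double expansion buys a single self-contained calculation whose output lands already in the stated form. Both routes are fully rigorous.
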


\begin{proof}
    Recall the definitions $\Sigma = \ID \lr{-\half \Gamma} \ID$, $\eV = d^{-1} \oneVec$ and $\ID = I_d - \oneVec \eV\T = I_d - \eV \oneVec\T$. Since $\ID \Theta = \Theta \ID = \Theta$, we have	
	\begin{align*}
		\lr{-\half\Gamma} \Theta \lr{-\half\Gamma}
		&= \lr{\ID + \oneVec \eV\T} \lr{-\half\Gamma} \ID \Theta \lr{-\half\Gamma} \\
		&= \ID \lr{-\half\Gamma} \ID \Theta \lr{-\half\Gamma}
		+ \oneVec \eV\T \lr{-\half\Gamma} \ID \Theta \lr{-\half\Gamma} \\
		&= \ID \lr{-\half\Gamma} + 
		\oneVec \eV\T \lr{-\half\Gamma} \Theta \lr{-\half\Gamma} \\
		&= -\half\Gamma - \oneVec \eV\T \lr{-\half\Gamma}
		+ \oneVec \eV\T \lr{-\half\Gamma} \Theta \lr{-\half\Gamma}.
	\end{align*}
	Since $\Theta \ID \lr{-\half\Gamma} \ID = \Theta \Sigma = \ID$, we have
	\begin{align*}
		\oneVec \eV\T \lr{-\half\Gamma} \Theta \lr{-\half\Gamma}
		&= \oneVec \eV\T \lr{-\half\Gamma} \Theta \ID \lr{-\half\Gamma} \lr{\ID + \eV \oneVec\T} \\
		&= \oneVec \eV\T \lr{-\half\Gamma} \Theta \ID \lr{-\half\Gamma} \ID +
		\oneVec \eV\T \lr{-\half\Gamma} \Theta \lr{-\half\Gamma} \eV \oneVec\T \\
		&= \oneVec \eV\T \lr{-\half\Gamma} \ID
		+ \underbrace{\lr{\eV\T \lr{-\half\Gamma} \Theta \lr{-\half\Gamma} \eV}}_{\text{scalar}} \oneVec \oneVec\T
	\end{align*}
	with
	\begin{align*}
		\oneVec \eV\T \lr{-\half\Gamma} \ID
		&= \oneVec \eV\T \lr{-\half\Gamma} - \oneVec \eV\T \lr{-\half\Gamma} \eV \oneVec\T \\
		&= \oneVec \eV\T \lr{-\half\Gamma} - \underbrace{\lr{\eV\T \lr{-\half\Gamma} \eV}}_{\text{scalar}} \oneVec \oneVec\T.
	\end{align*}
	Adding up, we get the stated identity.
\end{proof}

\begin{proofOf}[prop:HrDensitySimple]
    For $k \in \set{1,\ldots,d}$, let $\lambda\k$ denote the density expression
    from \cref{def:HRdensity} with that specific choice of $k$,
    and recall the notation $\norm[M]{v}^2 = v\T M v$.
    Extending the notation used there,
    let
    \begin{align*}
        \tilde\mu\k
        =
        \lr{
            -\half
            \Gamma_{ik}
        }_{i = \oneToX{d}}
        \in
        \Rd
        .
    \end{align*}
    Note that
    \begin{align*}
        \tlr{
            y - \oneVec y_k - \tilde\mu\k
        }_k
        = 0
        ,
    \end{align*}
    so that $\lambda\k$ can be rewritten as
    \begin{align*}
        \lambda\k(y; \Gamma)
        &=
        \frac{
            \expF{-y_k}
        }{
            \sqrt{\lr{2\pi}^{d-1}\abs{\Sigma\k}}
        }
        \expF{
            -\half
            \tnorm[\Theta]{
                y - \oneVec y_k - \tilde\mu\k
            }^2
        }
        .
    \end{align*}
    From \citet[][Proof of Lemma~5.1]{roettger2023}
    and \cref{lemma:pseudoDet1}
    it follows that
    $\abs{\Sigma\k} = \abs{\Theta\k}\inv = \lr{d\inv \pdet{\Theta}}\inv = d \pdet{\Sigma}$
    for all $k \in V$.
    Using $\Theta\oneVec = \zeroVec$
    and applying logarithms, the above equation becomes
    \begin{align}
        \log\lambda\k\lr{y; \Gamma}
        &=
        \log c_1
        -y_k
        -\half
        \tnorm{
            y - \tilde\mu\k
        }^2,
    	\quad \text{where}
        \nonumber
        \\
    	c_1
        &=
        \lr{2\pi}^{-(d-1)/2}\lr{d\inv\pdet{\Theta}}^{1/2}
        .
        \label{eq:c1}
    \end{align}
    Then, since the value of $\lambda\k$ is the same for all $k \in \set{1,\ldots,d}$, we find
    \begin{align}
    \nonumber
        \log\lambda\lr{y; \Gamma}
        &=
        \sum_{k=1}^d d\inv \log\lambda\k\lr{y; \Gamma}
        \\ &=
        \sum_{k=1}^d d\inv \lr{
            \log c_1 - y_k - \half
            \tnorm[\Theta]{
                y - \tilde\mu\k
            }^2
        }
   \nonumber
        \\ &=
        \log c_1
        -
        \eV\T y
        - \half
        \sum_{k=1}^d
        d\inv
        \tnorm[\Theta]{
            y - \tilde\mu\k
        }^2
        ,
        \label{eq:auxlambda}
    \end{align}
    where $\eV = d\inv \oneVec$.
    Using the notation $\bilin[\Theta]{v,w} = v\T\Theta w$,
    the last sum can be rewritten as
    \begin{align*}
    \nonumber
        \sum_{k=1}^d d\inv
        \tnorm[\Theta]{
            y - \tilde\mu\k
        }^2
        &=
        \norm[\Theta]{y}^2
        +
        \sum_{k=1}^d d\inv
        \bilin[\Theta]{y, \Gamma_{\cdot k}}
        +
        \sum_{k=1}^d d\inv
        \norm[\Theta]{\half \Gamma_{\cdot k}}^2
        \\ &=
        \norm[\Theta]{y}^2
        +
        \bilin[\Theta]{y, \Gamma \eV}
        +
        \frac{1}{d}
        \tr\lr{
            \lr{\half\Gamma}
            \Theta
            \lr{\half\Gamma}
        }
        .
    \end{align*}
    Using \cref{lem:constantdiagonal},
    the last term can be expressed as
    \begin{align*}
        \frac{1}{d}
        \tr\lr{
            \lr{-\half\Gamma}
            \Theta
            \lr{-\half\Gamma}
        }
        &=
        \frac{
            \tr\lr{\oneVecTT}
        }{
            d
        }
        \lr{
            \eV\T \lr{\half\Gamma} \Theta \lr{\half\Gamma} \eV + \eV\T \lr{\half\Gamma} \eV
        }
        \\
        &=
        \eV\T \lr{
            \lr{\half\Gamma} \Theta \lr{\half\Gamma}
            +
            \half\Gamma
        } \eV
    	.
    \end{align*}

    Plugging back into \cref{eq:auxlambda} yields
    \begin{align*}
        \log \lambda \lr{y; \Gamma}
        &=
        \log c_1
        - \eV\T y
        - \half \norm[\Theta]{y}
        - \half \bilin[\Theta]{y, \Gamma\eV}
        - \half
        \eV\T \lr{
            \lr{\half\Gamma} \Theta \lr{\half\Gamma}
            +
            \half\Gamma
        } \eV
        \\
        \lambda \lr{y; \Gamma}
        &=
        c_1
        \cdot
        c_2
        \cdot
        \expF{
            -\half
            y\T \Theta y
            -
            y\T\eV
            +
            y\T \Theta\lr{-\half\Gamma}\eV
        }
    \end{align*}
    with $c_1$ as in \cref{eq:c1} and
    \begin{align*}
        c_2
        &=
        \expF{
            -\half
            \eV\T \lr{
                \lr{\half\Gamma} \Theta \lr{\half\Gamma}
                +
                \half\Gamma
            } \eV
        }
        .
        \qedhere
    \end{align*}
\end{proofOf}

\begin{remark}
    \label{remark:densitiesForDifferentMargins}
    We consider multivariate generalized Pareto distributions $Y$ with standard exponential univariate ``marginals'' $Y\I[\set{i}]$, $i \in V$.
    These are supported on $\L_Y = \{x \in \Rd : x \not\leq \zeroVec \}$ with exponent measure $\Lambda_Y$.
    Another common choice for the univariate ``marginals'' is the standard Pareto distribution \citep[e.g.,][]{engelke2020},
    in which case the multivariate Pareto distribution is that of $Z = \exp(Y)$,
    supported on $\L_Z = \{ x \in [0, \infty)^d :  x \not\leq \oneVec\}$ with exponent measure $\LambdaVec_Z(x) = \LambdaVec_Y( \log x)$ and density
    \begin{align*}
        \lambda_Z\lr{x}
        =
        \lr{\prod_{i = 1}^d x_i\inv}
        \lambda_Y\lr{\log x}, \quad x \in \L_Z
        .
    \end{align*}
    All results in this paper are still applicable with these (or other) marginals since they only concern the dependence structure.
\end{remark}

\subsection{Proofs of matrix completion results}

\subsubsection{Decomposable graphs}
\label{subsubsec:proofsDecomposableGraphs}

\begin{lemma}[\cite{bakonyi2011}, Theorem~2.2.3]
    \label{lemma:completeTwoCliques}
    The \PD{} block matrix completion problem
    \begin{equation}
    \label{eq:SigmaABC}
        \Sigma
        =
        \lr{
            \begin{matrix}
                \Sigma_{AA} & \Sigma_{AB} & \undefinedMatrixNoQuotes \\
                \Sigma_{BA} & \Sigma_{BB} & \Sigma_{BC} \\
                \undefinedMatrixNoQuotes & \Sigma_{CB} & \Sigma_{CC}
            \end{matrix}
        }
        , \qquad
        \lr{\Sigma\inv}_{AC}
        =
        0
    \end{equation}
    has the unique solution $\Sigma_{AC}=\Sigma_{AB}\Sigma_{BB}\inv \Sigma_{BC}$,
    and $\Sigma_{AC}=0$ in the case of an empty separator $B=\emptyset$.
    Here, $\undefinedMatrix$ denotes a matrix of adequate size with all entries $\undefined$.
\end{lemma}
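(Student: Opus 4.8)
The plan is to reduce the problem to a single block-inversion computation that simultaneously pins down $\Sigma_{AC}$ and certifies that the resulting matrix can be completed to a positive definite one. Write $\overline{A} = B \cup C$ and recall the standard block-matrix inversion identity for a symmetric positive definite $\Sigma$ partitioned along $A$ and $\overline{A}$:
\[
  \lr{\Sigma\inv}_{A\overline{A}}
  =
  -\lr{\Sigma\inv}_{AA}\,\Sigma_{A\overline{A}}\,\Sigma_{\overline{A}\overline{A}}\inv ,
\]
where $\lr{\Sigma\inv}_{AA} = \lr{\Sigma_{AA} - \Sigma_{A\overline{A}}\Sigma_{\overline{A}\overline{A}}\inv\Sigma_{\overline{A}A}}\inv$. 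Here $\Sigma_{\overline{A}\overline{A}}$ is exactly the specified clique block on $B \cup C$, hence positive definite and invertible, and $\lr{\Sigma\inv}_{AA}$ is a principal block of a positive definite matrix, hence invertible as well. Consequently $\lr{\Sigma\inv}_{AC} = 0$ is equivalent to the vanishing of the $C$-columns of $\Sigma_{A\overline{A}}\Sigma_{\overline{A}\overline{A}}\inv$.

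First I would compute those $C$-columns explicitly. Expanding the block inverse of $\Sigma_{\overline{A}\overline{A}}$ in terms of the Schur complement $S_C = \Sigma_{CC} - \Sigma_{CB}\Sigma_{BB}\inv\Sigma_{BC}$, the $C$-columns of $\Sigma_{A\overline{A}}\Sigma_{\overline{A}\overline{A}}\inv$ simplify to $\lr{\Sigma_{AC} - \Sigma_{AB}\Sigma_{BB}\inv\Sigma_{BC}}S_C\inv$. Since $\Sigma_{\overline{A}\overline{A}}$ is positive definite, $S_C$ is invertible, so this block vanishes if and only if $\Sigma_{AC} = \Sigma_{AB}\Sigma_{BB}\inv\Sigma_{BC}$. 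This single computation shows that the precision constraint forces the stated formula, which yields uniqueness immediately; in the degenerate case $B = \emptyset$ the middle term disappears and the constraint forces $\Sigma_{AC} = 0$.

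It remains to establish existence, i.e.\ that plugging $\Sigma_{AC} = \Sigma_{AB}\Sigma_{BB}\inv\Sigma_{BC}$ into $\Sigma$ produces a genuinely positive definite completion. For this I would exhibit a Gaussian vector whose covariance is the completed matrix: let $(X_A, X_B)$ be centered Gaussian with the specified positive definite clique covariance on $A \cup B$, and set $X_C = \Sigma_{CB}\Sigma_{BB}\inv X_B + \varepsilon$ with $\varepsilon \sim \normal{\zeroVec, S_C}$ independent of $(X_A, X_B)$. A direct computation gives $\Cov{X_B, X_C} = \Sigma_{BC}$, $\Var\lr{X_C} = \Sigma_{CC}$, and $\Cov{X_A, X_C} = \Sigma_{AB}\Sigma_{BB}\inv\Sigma_{BC}$, so $(X_A, X_B, X_C)$ has precisely the completed covariance. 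Positive definiteness is then automatic, since $(X_A, X_B, X_C)$ is an invertible linear image of $(X_A, X_B, \varepsilon)$, whose covariance is block diagonal with positive definite blocks $\Sigma_{(A\cup B)(A\cup B)}$ and $S_C$. The empty-separator case collapses to $\diag{\Sigma_{AA}, \Sigma_{CC}}$, which is trivially positive definite with zero $AC$-block in its inverse.

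The two block-inverse expansions are routine bookkeeping. The only step that requires genuine care is the existence argument, where positive definiteness of the completion must be proved rather than assumed; the construction above sidesteps a direct Schur-complement positivity check by reading the completed matrix as the covariance of an explicitly built Gaussian vector obeying the conditional independence $X_A \perp X_C \mid X_B$, which is the structural reason the off-diagonal precision block vanishes.
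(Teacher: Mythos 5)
Your proof is correct. One thing to note at the outset: the paper itself gives no proof of this lemma at all --- it is imported verbatim from Bakonyi and Woerdeman (2011, Theorem~2.2.3) and used as a black box in the proof of \cref{lemma:completeTwoCliquesGamma}. So there is no ``paper proof'' to compare against; what you have written is a self-contained derivation of the cited result, and it holds up. The uniqueness half is the standard Schur-complement argument: from $\lr{\Sigma\inv}_{A\overline{A}} = -\lr{\Sigma\inv}_{AA}\Sigma_{A\overline{A}}\Sigma_{\overline{A}\overline{A}}\inv$ with $\lr{\Sigma\inv}_{AA}$ invertible, and from your explicit computation of the $C$-columns as $\lr{\Sigma_{AC} - \Sigma_{AB}\Sigma_{BB}\inv\Sigma_{BC}}S_C\inv$ with $S_C$ invertible, you get the equivalence $\lr{\Sigma\inv}_{AC}=0 \iff \Sigma_{AC}=\Sigma_{AB}\Sigma_{BB}\inv\Sigma_{BC}$ for any positive definite completion, which pins down the only candidate. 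The existence half is where your argument departs from the purely matrix-analytic route one finds in the literature (a direct Schur-complement positivity verification, or an appeal to the maximum-determinant completion): you instead realize the candidate completion as the covariance of the Gaussian vector $(X_A, X_B, X_C)$ with $X_C = \Sigma_{CB}\Sigma_{BB}\inv X_B + \varepsilon$, $\varepsilon \sim \mathcal{N}(\zeroVec, S_C)$ independent of $(X_A, X_B)$, so that positive definiteness follows from writing the covariance as $MDM\T$ with $M$ unit triangular and $D$ block diagonal positive definite. That construction is both valid and conceptually apt here, since it makes visible the conditional independence $X_A \perp X_C \mid X_B$ that is the structural content of the zero precision block --- exactly the mechanism the paper exploits at the level of \HR{} models. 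The one step you leave implicit, namely that the constructed completion really satisfies $\lr{\Sigma\inv}_{AC}=0$, is covered because your uniqueness computation established an equivalence rather than a one-way implication; it would be worth making that closing of the loop explicit, but it is not a gap.
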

\begin{proof}
    In the case $B = \emptyset$, the result follows immediately from the block-wise inversion of block diagonal matrices.
    For $B \neq \emptyset$, consider a normal random vector $X$ with covariance matrix $\Sigma$.
    The condition $\tlr{\Sigma\inv}_{AC}=0$ implies the conditional independence $X_A \perp X_C \vert X_B$,
    and hence $\Cov{X_A, X_C \mid X_B} = 0$.
    Using e.g. \citet[Section~2.1.7]{rue2005} to compute this conditional covariance yields
    \begin{align*}
        \Sigma_{AC} - \Sigma_{AB}\Sigma\inv_{BB}\Sigma_{BC} = 0,
    \end{align*}
    from which the \namecref{lemma:completeTwoCliques} immediately follows.
\end{proof}

\begin{lemma}
    Let $G=(V,E)$, for $V = \set{\oneToX{d}}$, be an undirected graph,
    with some node $k \in V$ being connected to all other nodes.
    Let $\GammaO$ be a matrix specified on $G$
    that is \CND{} on all fully specified principal submatrices.
    Let $\SigmaO\k = \varphi_k(\GammaO)$ for $\varphi_k$ in \cref{def:HRdensity},
    for ease of notation indexed by $V\without{k}$.

    Then $\SigmaO\k$ is \PD{} on all fully specified principal submatrices
    and preserves specified entries in the sense that
    \begin{align*}
        \GammaO_{ij} \neq \undefined
        \,\Longleftrightarrow\,
        \SigmaO\k_{ij} \neq \undefined
        , \qquad \forall
        i,j \neq k
        .
    \end{align*}
\end{lemma}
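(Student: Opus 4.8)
The plan is to prove the two assertions separately, reducing the definiteness statement to a single quadratic-form identity. I would first dispose of the equivalence of specified entries, which is essentially immediate. Recall that $\SigmaO\k_{ij} = \half\lr{\GammaO_{ik} + \GammaO_{jk} - \GammaO_{ij}}$ for $i, j \neq k$. Since $k$ is adjacent to every other vertex, $(i,k)$ and $(j,k)$ are edges of $G$, so $\GammaO_{ik}$ and $\GammaO_{jk}$ are always specified, as is $\GammaO_{kk} = 0$. Hence the only entry of $\GammaO$ whose status can vary is $\GammaO_{ij}$, and $\SigmaO\k_{ij}$ is specified if and only if $\GammaO_{ij}$ is. The diagonal case $i = j$ is trivial, both entries being specified, which gives the claimed equivalence for all $i, j \neq k$.

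Next I would reduce the definiteness claim to the covariance-mapping fact for fully specified blocks. The map $\varphi_k$ is local: the entry $\SigmaO\k_{ij}$ depends only on $\GammaO_{ik}$, $\GammaO_{jk}$, and $\GammaO_{ij}$, all indexed within $I \cup \set{k}$. Thus for any $I \subseteq V \without{k}$ the principal submatrix $\restrictTo{\SigmaO\k}{I}$ is exactly $\varphi_k$ applied to the principal submatrix $\restrictTo{\GammaO}{I \cup \set{k}}$, with $k$ as base point. By the equivalence just proved, $\restrictTo{\SigmaO\k}{I}$ is fully specified precisely when every $\GammaO_{ij}$ with $i,j \in I$ is specified; combined with the always-specified entries $\GammaO_{ik}$ and $\GammaO_{kk}$, this holds if and only if $\restrictTo{\GammaO}{I \cup \set{k}}$ is fully specified, and then by hypothesis it is \CND{}.

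It remains to show that $\varphi_k$ turns a \CND{} matrix into a \PD{} one, and here lies the only real computation. For a vector $w = \lr{w_i}_{i \in I}$, I would introduce $v = \lr{v_i}_{i \in I \cup \set{k}}$ with $v_i = w_i$ for $i \in I$ and $v_k = -\sum_{i \in I} w_i$, so that $\oneVec\T v = 0$. Expanding the quadratic form and using $\GammaO_{kk} = 0$ together with the symmetry of $\GammaO$, the cross terms collapse into the product $\lr{\sum_i w_i}\lr{\sum_i w_i \GammaO_{ik}}$ on both sides, yielding
\[
    w\T \lr{\restrictTo{\SigmaO\k}{I}} w
    =
    -\half\, v\T \lr{\restrictTo{\GammaO}{I \cup \set{k}}} v.
\]
If $w \neq \zeroVec$ then $v \neq \zeroVec$ and $v \perp \oneVec$, so the conditional negative definiteness of $\restrictTo{\GammaO}{I \cup \set{k}}$ makes the right-hand side strictly positive; hence $\restrictTo{\SigmaO\k}{I}$ is \PD{}. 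The main obstacle is simply carrying out this expansion correctly and tracking which summation indices run over $I$ versus $I \cup \set{k}$; everything else is bookkeeping about specified entries, and the identity itself is the finite-dimensional restriction of the standard covariance transform sending variograms to positive definite matrices.
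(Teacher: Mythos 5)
Your proposal is correct and follows essentially the same route as the paper: the same immediate argument for preservation of specified entries (using that $\GammaO_{ik}$ is always specified since $k$ is adjacent to every node), and the same locality observation that $\restrictTo{\SigmaO\k}{I} = \varphi_k\lr{\restrictTo{\GammaO}{I \cup \set{k}}}$. The only difference is that the paper simply asserts that $\varphi_k$ of a \CND{} matrix is \PD{}, whereas you verify it explicitly via the substitution $v_k = -\sum_{i \in I} w_i$ and the identity $w\T \lr{\restrictTo{\SigmaO\k}{I}} w = -\half\, v\T \lr{\restrictTo{\GammaO}{I \cup \set{k}}} v$, which is a correct and welcome filling-in of that implicit step.
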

\begin{proof}
    To show that specified entries are preserved,
    recall that
    \begin{align*}
        \SigmaO\k_{ij}
        =
        \half \lr{
            \GammaO_{ik} + \GammaO_{jk} - \GammaO_{ij}
        }
        , \qquad
        i,j \neq k
        .
    \end{align*}
    Since $k$ is connected to all other nodes in $G$,
    $\GammaO_{ik}$ is specified for all $i \in V\without{k}$,
    yielding the claimed equivalence.

    To show positive definiteness of fully specified principal submatrices,
    let $M$ be the index set of such a submatrix and observe
    \begin{align*}
        \SigmaO\k_{M,M}
        =
        \lr{\varphi_k\tlr{\GammaO}}_{M,M}
        =
        \varphi_k\lr{\GammaO_{M \cup \set{k},M \cup \set{k}}}
        ,
    \end{align*}
    which is positive definite.
\end{proof}

\begin{proofOf}[lemma:completeTwoCliquesGamma]
    First, note that $\varphi_k$ is in fact bijective with inverse
    \begin{align*}
        \varphi_k\inv:
        \Sigma\k
        \mapsto
        \lr{
            \SigmaT\k_{ii} + \SigmaT\k_{jj} - 2 \SigmaT\k_{ij}
        }_{i,j = \oneToX{d}}
        =
        \gamma\lr{\SigmaT\k}
        ,
    \end{align*}
    where $\SigmaT\k \in \Rdd$ is defined as $\Sigma\k$ with a zero-valued $k$th row and column added,
    and $\gamma$ is from \cref{def:gamma}.
    Upon a permutation of the indices, the $(d-1) \times (d-1)$ matrix $\SigmaO\k$ takes the form~\cref{eq:SigmaABC} with blocks of indices $A = C_1 \without{k}$, $B = D_2 \without{k}$ (possibly empty), and $C = C_2 \without{k}$.
    \cref{lemma:completeTwoCliques} permits to find the unique \PD{} completion of $\SigmaO\k$, say $\Sigma\k$.
    The matrix
    \begin{align*}
        \Gamma := \varphi_k\inv \lr{\Sigma\k} = \gamma\lr{\SigmaT\k}  
    \end{align*}
    is thus well-defined,
    and since $\SigmaT\k$ is symmetric \PSD{}, with $\ker \SigmaT\k = \laSpan{\e_k}$,
    it follows from \cref{lemma:gammaMapsToCND} that $\Gamma$ is \CND{}.

    Since $\varphi_k$ and $\varphi_k\inv$ preserve specified entries,
    the condition
    \begin{align*}
        \Gamma_{ij} = \GammaO_{ij}
        , \qquad \forall\,
        (i,j) \in \Ebar,
    \end{align*}
    is satisfied by construction.
    Furthermore, from \cref{def:Theta} and \cref{lemma:completeTwoCliques} it follows that
    \begin{align*}
        \Theta_{ij} = \Theta\k_{ij} = 0
        , \qquad \forall\,
        (i,j) \notin \Ebar
        .
    \end{align*}
    The uniqueness of this completion follows from \cref{cor:completionUnique} below.
\end{proofOf}

\begin{lemma}
    \label{lemma:reverseMarginalPreservesZeros}
    Let $Y$ be a random variable following a \multGP{} distribution
    with positive, continuous exponent measure density $\lambda$.
    Let $G=(V,E)$ be a connected, decomposable graph $G=(V,E)$,
    consisting of two cliques $C_1$, $C_2$ separated by $D_2=C_1 \cap C_2$.
    Let $G'=(V',E')$ be a connected, decomposable graph
    with $V'=C_1$ and $E' \supseteq \restrictTo{E}{D_2}$.
    Suppose $Y$ satisfies the (extremal) pairwise Markov property relative to $G$,
    and the marginal $Y_{C_1}$
    conditionally on $\max_{i \in C_1} Y_i > 0$
    satisfies the (extremal) pairwise
    Markov property relative to $G'$.

    Then $Y$ satisfies the (extremal) pairwise Markov property relative to the graph
    $G'' = (V, E'')$ with $E'' = \restrictTo{E}{C_2} \cup E'$.
\end{lemma}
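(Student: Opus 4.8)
The plan is to show that the exponent measure density $\lambda$ of $Y$ factorizes into factors each supported on a complete subset of $G''$, and then to invoke the extremal Hammersley--Clifford theorem of \cite{engelke2020}, which for a positive, continuous density makes such a factorization equivalent to the (global, and in particular pairwise) Markov property relative to $G''$. Throughout I would suppress the arguments $y_I$ and write $\lambda_I$ for the $I$-th marginal exponent measure density, using repeatedly that extremal marginalization is transitive: for $I \subseteq C_1$ the $I$-th marginal of $\lambda_{C_1}$ is obtained from $\lambda$ by integrating out $V \setminus C_1$ and then $C_1 \setminus I$, hence equals $\lambda_I$, so the marginals coming from $Y$ and from its marginal $Y_{C_1}$ agree and may be combined.

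First I would record the two factorizations supplied by the hypotheses. Since $G$ is the decomposable graph obtained by glueing $C_1$ and $C_2$ along $D_2 = C_1 \cap C_2$, the pairwise Markov property of $Y$ relative to $G$ is equivalent to the global one \citep{engelke2020}, which separates $C_1 \setminus D_2$ from $C_2 \setminus D_2$ by $D_2$, i.e. $\exInd{Y_{C_1 \setminus D_2}}{Y_{C_2 \setminus D_2}}{Y_{D_2}}$; in terms of exponent measure densities this reads
\[
  \lambda \, \lambda_{D_2} = \lambda_{C_1} \, \lambda_{C_2}.
\]
Likewise, $Y_{C_1}$ is an extremal graphical model on the decomposable graph $G'$, so its exponent measure density admits the clique--separator factorization
\[
  \lambda_{C_1} \prod_{S \in \Sep'} \lambda_S = \prod_{Q \in \Cli'} \lambda_Q,
\]
where $\Cli'$ and $\Sep'$ denote the cliques and separators of $G'$.

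Substituting the second display into the first yields
\[
  \lambda
  = \frac{\lambda_{C_1}\,\lambda_{C_2}}{\lambda_{D_2}}
  = \frac{\lambda_{C_2}\prod_{Q \in \Cli'} \lambda_Q}{\lambda_{D_2}\prod_{S \in \Sep'} \lambda_S}.
\]
Every index set occurring here is a complete subset of $G''$: the sets $Q \in \Cli'$ and $S \in \Sep'$ are complete in $G'$, and $\restrictTo{G''}{C_1} = G'$ because $E'' = \restrictTo{E}{C_2} \cup E'$ restricts to $E'$ on $C_1$; the set $C_2$ is complete in $G''$ since $\restrictTo{E}{C_2}$ is complete ($C_2$ being a clique of $G$); and $D_2$ is complete because $E' \supseteq \restrictTo{E}{D_2}$ forces $D_2$ to be a complete subset of $G'$. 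Absorbing each separator factor $\lambda_S$, and the factor $\lambda_{D_2}$, into a clique of $G''$ containing it therefore exhibits $\lambda$ as a product over the cliques of $G''$, i.e.\ establishes the factorization property relative to $G''$.

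Finally I would apply the extremal Hammersley--Clifford theorem once more, now in the direction factorization $\Rightarrow$ Markov: as $\lambda$ is positive and continuous and factorizes over the complete subsets of $G''$, the vector $Y$ satisfies the pairwise Markov property relative to $G''$, which is the claim. The main obstacle is not the final implication but the bookkeeping in the combination step, namely justifying that the clique and separator densities of $G'$ coincide with the corresponding marginals of $\lambda$ (transitivity of marginalization), that $D_2$ is indeed complete in $G'$, and handling the degenerate case in which $D_2$ is itself a clique of $G'$: there $\lambda_{D_2}$ appears in the numerator product and cancels against the denominator while $D_2$ is absorbed into $C_2$, so that the resulting expression is still a genuine $G''$-factorization. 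None of these points is deep, but together they are exactly what makes the combined product a valid factorization relative to $G''$.
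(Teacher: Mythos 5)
Your proof is correct and takes essentially the same route as the paper's: both invoke the extremal Hammersley--Clifford theorem (Theorem~1 of \cite{engelke2020}) to write $\lambda = \lambda_{C_1}\lambda_{C_2}/\lambda_{D_2}$, substitute the clique--separator factorization of $\lambda_{C_1}$ on $G'$, recognize the resulting expression as the clique--separator factorization relative to $G''$, and apply the theorem once more in the converse direction. Your additional bookkeeping (completeness of $D_2$ in $G'$, transitivity of marginalization, and the degenerate case $D_2 \in \Cli'$ where $\lambda_{D_2}$ cancels) is exactly what the paper's final displayed equality asserts without comment.
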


\begin{remark}
    Since the pairwise Markov property only requires
    \begin{align*}
        (i,j) \notin E
        \implies
        \exInd{Y_i}{Y_j}{Y_{\without{i, j}}}
        ,
    \end{align*}
    but not vice versa,
    a distribution can satisfy this for a number of distinct graphs.
    In particular, adding edges to a graph strictly weakens the condition above.
\end{remark}

\begin{proof}
    Since all graphs involved in the lemma are connected and decomposable,
    Theorem~1 from \cite{engelke2020} can be used to show
    \begin{align*}
        \lambdaM{y}
        &=
        \frac{
            \lambdaM[C_1]{y}
            \lambdaM[C_2]{y}
        }{
            \lambdaM[D_2]{y}
        }
        \\
        &=
        \frac{
            \lambdaM[C_2]{y}
        }{
            \lambdaM[D_2]{y}
        }
        \frac{
            \prod_{C \in \Cli'}
            \lambdaM[C]{y}
        }{
            \prod_{D \in \Sep'}
            \lambdaM[D]{y}
        }
        \\
        &=
        \frac{
            \prod_{C \in \Cli''}
            \lambdaM[C]{y}
        }{
            \prod_{D \in \Sep''}
            \lambdaM[D]{y}
        }
        ,
    \end{align*}
    with $\Cli'', \Sep''$ being the cliques and separators of $G''$,
    and $\Cli', \Sep'$ those of $G'$.
    Here, the function $\lambda_{I}$, for non-empty $I \subset V$, is the exponent measure density corresponding to the $I$-th marginal $Y_I$ conditionally on $\max_{i \in I} Y_i > 0$.
    By the same theorem,
    the above decomposition of $\lambda(y)$ implies that $Y$ satisfies the pairwise Markov property
    relative to $G''$.
\end{proof}

\begin{proofOf}[prop:completeDecomposable]
    Setting $\Gamma = \Gamma_N$,
    the condition $\Gamma_{ij} = \GammaO_{ij}$ for all $(i,j) \in \Ebar$
    is satisfied by construction,
    and \cref{lemma:completeTwoCliquesGamma} guarantees that each $\Gamma_n$ is \CND{}.
    The condition $\Theta_{ij} = 0$ for all $(i,j) \notin \Ebar$ is satisfied too,
    since \cref{lemma:reverseMarginalPreservesZeros} can be applied in each step.
    The uniqueness of this completion follows from \cref{cor:completionUnique}.
\end{proofOf}

\subsubsection{General graphs}
\label{subsubsec:proofsGeneralGraphs}

The proofs shown here closely follow the proofs in \cite{speedKivery1986}
and are adjusted where necessary to hold for \CND{} variogram matrices
instead of \PD{} covariance matrices. Recall the map $\sigma(\,\cdot\,)$ in \cref{eq:thetaSigma}.

\begin{definition}
    \label{def:KLdivGamma}
    Let $\Gamma_1, \Gamma_2 \in \Dcal_d$ and
    $\Sigma_1 = \proj{\Gamma_1}$,
    $\Sigma_2 = \proj{\Gamma_2}$.
    Let $P_1, P_2$ denote the probability measures of two
    degenerate normally distributed random vectors
    with mean $\zeroVec$ and covariance matrices $\Sigma_1, \Sigma_2 \in \Pcal_d^\oneVec$,
    and let $p_1, p_2$ be the corresponding densities on $\set{\oneVec}^\perp$ with respect to the $(d-1)$-dimensional Lebesgue measure.
    The \KL{} divergence $\KLdiv{}$ of these matrices is defined as
    \begin{align*}
        \KLdiv{\Gamma_1 \vert \Gamma_2}
        :=
        \KLdiv{\Sigma_1 \vert \Sigma_2}
        :=
        \KLdiv{P_1 \vert P_2}
        =
        \E[P_1]{\log p_1(x) - \log p_2(x)}
        .
    \end{align*}
\end{definition}

\begin{lemma}
    \label{lemma:divergence1}
    With $\Gamma_i$, $\Sigma_i$, $i=1,2$ as above,
    and $\pdet{\,\cdot\,}$ denoting the pseudo-determinant
    (see \cref{subsec:pseudoDet}),
    we have
    \begin{align*}
        \KLdiv{\Gamma_1 \vert \Gamma_2}
        =
        - \half \lr{
            \log \pdet{
                \Sigma_2\pinv \Sigma_1
            }
            +
            d - 1
            -
            \trace{
                \Sigma_2\pinv \Sigma_1
            }
        }
        .
    \end{align*}
\end{lemma}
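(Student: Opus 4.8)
The plan is to reduce the degenerate divergence on the hyperplane $\set{\oneVec}^\perp$ to a standard, nondegenerate Gaussian computation by passing to coordinates, and then to translate the resulting ordinary determinant and trace back into the pseudo-determinant and \MP{} inverse using the properties collected in \cref{lemma:pseudoDet1}. First I would fix a matrix $U \in \Rd[d \times (d-1)]$ whose columns form an orthonormal basis of $\set{\oneVec}^\perp$, so that $U\T U = I_{d-1}$ and $U U\T = \ID$, and set $\tilde U = \lr{U, \oneVec/\sqrt d}$, which is orthogonal. Since each $\Sigma_i \in \Pcal_d^\oneVec$ is symmetric \PSD{} with kernel exactly $\laSpan{\set{\oneVec}}$, it is block-diagonalized by $\tilde U$, giving $\Sigma_i = U A_i U\T$ and $\Sigma_i\pinv = U A_i\inv U\T$ with $A_i := U\T \Sigma_i U \in \Rdd[(d-1)]$ positive definite. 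The isometry $x \mapsto U x$ maps $\Rd[d-1]$ onto $\set{\oneVec}^\perp$ and pushes $\normal{\zeroVec, A_i}$ forward to $P_i$; since the \KL{} divergence is invariant under a common measurable bijection applied to both arguments, $\KLdiv{P_1 \vert P_2} = \KLdiv{\normal{\zeroVec, A_1} \vert \normal{\zeroVec, A_2}}$.

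Next I would invoke the classical formula for the divergence of two centered, nondegenerate Gaussians in dimension $d-1$,
\[
\KLdiv{\normal{\zeroVec, A_1} \vert \normal{\zeroVec, A_2}}
=
\half \lr{ \log \frac{\abs{A_2}}{\abs{A_1}} + \trace{A_2\inv A_1} - (d-1) },
\]
which is obtained by computing $\E[P_1]{\log p_1 - \log p_2}$ directly, using $\E[P_1]{x\T A_2\inv x} = \trace{A_2\inv A_1}$ for the quadratic term.

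Finally I would translate the two nontrivial terms. For the trace, cyclicity together with $U\T U = I_{d-1}$ gives $\trace{\Sigma_2\pinv \Sigma_1} = \trace{U A_2\inv U\T U A_1 U\T} = \trace{A_2\inv A_1}$. For the determinant, $\Sigma_2\pinv \Sigma_1 = U A_2\inv A_1 U\T$, and conjugating by the orthogonal matrix $\tilde U$ shows this is similar to the block-diagonal matrix $\mathrm{Diag}\lr{A_2\inv A_1, \zeroVec}$; the similarity invariance and block-diagonal multiplicativity in \cref{lemma:pseudoDet1} then yield $\pdet{\Sigma_2\pinv \Sigma_1} = \abs{A_2\inv A_1} = \abs{A_1}/\abs{A_2}$, so that $\log\lr{\abs{A_2}/\abs{A_1}} = -\log \pdet{\Sigma_2\pinv \Sigma_1}$. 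Substituting both identities into the classical formula reproduces exactly the claimed expression.

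The main obstacle is that $\Sigma_2\pinv \Sigma_1$ is in general not symmetric, so its pseudo-determinant cannot be read off by naively restricting to $\set{\oneVec}^\perp$; it must instead be handled through a genuine similarity, namely conjugation by the orthogonal $\tilde U$, after which the block-diagonal structure makes the remaining pseudo-determinant and trace bookkeeping routine.
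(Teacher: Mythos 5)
Your proof is correct, but it takes a different route from the paper's. The paper never leaves the hyperplane: it writes the degenerate densities directly as $p_i(x) \propto \pdet{\Sigma_i}^{-1/2} \expF{-\half x\T \Sigma_i\pinv x}$, expands $\E[P_1]{\log p_1 - \log p_2}$ using $\E[P_1]{x x\T} = \Sigma_1$ and cyclicity of the trace, then needs two pieces of pseudo-determinant calculus to finish: the multiplicativity $\pdet{\Sigma_2\pinv \Sigma_1} = \pdet{\Sigma_2\pinv}\pdet{\Sigma_1}$ for symmetric matrices sharing the kernel $\laSpan{\set{\oneVec}}$ (\cref{lemma:pseudoDet2}) together with $\pdet{\Sigma_2\pinv} = 1/\pdet{\Sigma_2}$, and the fact that $\trace{\Sigma_1\pinv \Sigma_1} = \trace{\ID} = d-1$. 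You instead reduce to coordinates via an orthonormal basis $U$ of $\set{\oneVec}^\perp$, import the textbook KL formula for nondegenerate centered Gaussians in dimension $d-1$, and translate back using only similarity invariance and block-diagonal multiplicativity from \cref{lemma:pseudoDet1} (conjugation by the orthogonal matrix $\tilde U$ correctly handles the non-symmetric product $\Sigma_2\pinv \Sigma_1$, as you note). Your route is more self-contained: it does not presuppose the degenerate density formula with $\pdet{\cdot}$ and $\lr{\cdot}\pinv$, nor the common-kernel multiplicativity lemma, both of which are themselves usually established by exactly this kind of coordinate reduction. The paper's intrinsic computation, on the other hand, keeps all objects in the form in which they are reused downstream — \cref{lemma:pseudoDet2} and the projection-trace identity reappear in the proof of \cref{lemma:divergence2} — so the machinery it invokes is not overhead but shared infrastructure for the whole matrix-completion argument.
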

\begin{proof}
	Since $\E[P_1]{x x\T} = \Sigma_1$ and since $p_i(x) \propto \pdet{\Sigma_i}^{-1/2} \expF{-\half x\T \Sigma_i\pinv x}$ with proportionality constant not depending on $i \in \{1, 2\}$, we have
    \begin{align*}
        \KLdiv{P_1 \vert P_2}
        &=
        \E[P_1]{\log p_1(x) - \log p_2(x)}
        \\
        &=
        \half \E[P_1]{
            - \log \pdet{\Sigma_1}
            - x\T \Sigma_1\pinv x
            + \log \pdet{\Sigma_2}
            + x\T \Sigma_2\pinv x
        }
        \\
        &=
        \half \log \frac{\pdet{\Sigma_2}}{\pdet{\Sigma_1}}
        +
        \half \E[P_1]{
            - x\T \Sigma_1\pinv x
            + x\T \Sigma_2\pinv x
        }
        \\
        &=
        - \half \log \pdet{
            \Sigma_2\pinv \Sigma_1
        }
        +
        \half \E[P_1]{
            - \trace{\Sigma_1\pinv x x\T}
            + \trace{\Sigma_2\pinv x x\T}
        }
        \\
        &=
        - \half \log \pdet{
            \Sigma_2\pinv \Sigma_1
        }
        -
        \half \trace{
            \Sigma_1\pinv \Sigma_1
        }
        +
        \half \trace{
            \Sigma_2\pinv \Sigma_1
        }.
    \end{align*}
    The statement then follows since $\Sigma_1\pinv\Sigma_1$
    is equal to the projection matrix $\ID$ onto $\set{\oneVec}^\perp$ and the trace of a projection matrix is equal to its rank.
\end{proof}

\begin{lemma}
    \label{lemma:pdetLogConcave}
    Consider a fixed $\Gamma_2 \in \Dcal_d$.
    Then the function $\KLdiv{\,\cdot\, \vert \Gamma_2}$ is convex on $\Dcal_d$.
\end{lemma}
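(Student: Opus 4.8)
The plan is to start from the closed-form expression in \cref{lemma:divergence1}. For a fixed $\Gamma_2$, writing $\Sigma_i = \sigma(\Gamma_i) = \ID\lr{-\half\Gamma_i}\ID$, it gives
\[
\KLdiv{\Gamma_1 \vert \Gamma_2}
= -\half\log\pdet{\Sigma_2\pinv\Sigma_1} - \half(d-1) + \half\trace{\Sigma_2\pinv\Sigma_1}.
\]
Here $\half\trace{\Sigma_2\pinv\Sigma_1}$ is linear in $\Sigma_1$ and $-\half(d-1)$ is a constant, so the whole question reduces to showing that $\Sigma_1 \mapsto -\log\pdet{\Sigma_2\pinv\Sigma_1}$ is convex. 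Moreover, the map $\sigma$ in \cref{eq:thetaSigma} is \emph{linear} and, by \cref{prop:thetaHomeo}, a bijection between the convex set $\Dcal_d$ and $\Pcal_d^\oneVec$; hence convexity in $\Gamma_1$ is equivalent to convexity in $\Sigma_1$, and I would carry out the argument for $\Sigma_1 \in \Pcal_d^\oneVec$.

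The next step is to reduce the pseudo-determinant to an ordinary determinant on the common range. Fix $V \in \Rd[d\times(d-1)]$ whose columns form an orthonormal basis of $\set{\oneVec}^\perp$, so that $V\T V = I_{d-1}$ and $VV\T = \ID$. Every $\Sigma \in \Pcal_d^\oneVec$ then satisfies $\Sigma = V\tilde\Sigma V\T$ and $\Sigma\pinv = V\tilde\Sigma\inv V\T$ with $\tilde\Sigma := V\T\Sigma V$ symmetric \PD{}. Thus $\Sigma_2\pinv\Sigma_1 = V\lr{\tilde\Sigma_2\inv\tilde\Sigma_1}V\T$, and since $V\T V = I_{d-1}$ the nonzero eigenvalues of this matrix coincide with those of $\tilde\Sigma_2\inv\tilde\Sigma_1$ (cf.\ the cyclic pseudo-determinant identity in \cref{lemma:pseudoDet1}). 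As $\tilde\Sigma_2\inv\tilde\Sigma_1$ is invertible, I obtain $\pdet{\Sigma_2\pinv\Sigma_1} = \det\lr{\tilde\Sigma_2\inv\tilde\Sigma_1} = \det\tilde\Sigma_1 / \det\tilde\Sigma_2$, so that $-\log\pdet{\Sigma_2\pinv\Sigma_1} = -\log\det\tilde\Sigma_1 + \log\det\tilde\Sigma_2$, and only the first summand depends on $\Sigma_1$.

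To conclude, I would invoke the classical fact that $X \mapsto \log\det X$ is concave on the cone of symmetric \PD{} matrices, so that $X \mapsto -\log\det X$ is convex there. The map $\Sigma_1 \mapsto \tilde\Sigma_1 = V\T\Sigma_1 V$ is linear, whence $\Sigma_1 \mapsto -\log\det\tilde\Sigma_1$ is a convex function precomposed with a linear map, hence convex on $\Pcal_d^\oneVec$. Adding the affine trace term and the constants preserves convexity, giving convexity of $\KLdiv{\,\cdot\,\vert\Gamma_2}$ in $\Sigma_1$, and therefore in $\Gamma_1$ by the linearity of $\sigma$. Convexity of the domain $\Dcal_d$ itself is immediate from \cref{eq:defVariogramSet}, since the constraints $\Gamma=\Gamma\T$, $\diag{\Gamma}=\zeroVec$ are linear and $v\T\Gamma v<0$ is a convex condition in $\Gamma$.

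The only genuinely technical point is the pseudo-determinant reduction in the middle step: one must use that $\Sigma_1,\Sigma_2$ are singular with the \emph{same} kernel $\laSpan{\set{\oneVec}}$, so that compressing by $V$ to $\set{\oneVec}^\perp$ is legitimate and $\pdet{\,\cdot\,}$ collapses to the ordinary $(d-1)\times(d-1)$ determinant. Once this identity is in place, the rest is the standard concavity of $\log\det$ together with the fact that affine terms leave convexity unchanged, so I expect no further obstacle.
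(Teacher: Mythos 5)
Your proposal is correct and follows essentially the same route as the paper's proof: reduce to the pseudo-determinant term by linearity of the trace and of $\sigma$, exploit the common kernel $\laSpan{\set{\oneVec}}$ to collapse $\pdet{\,\cdot\,}$ to an ordinary $(d-1)\times(d-1)$ determinant via a linear basis change, and conclude with log-concavity of the determinant on positive definite matrices. The only cosmetic difference is that the paper splits off the $\Sigma_2$-dependence through the multiplicativity of the pseudo-determinant (\cref{lemma:pseudoDet2}), whereas you obtain the ratio $\det\tilde\Sigma_1/\det\tilde\Sigma_2$ directly from the orthonormal compression $\tilde\Sigma_i = V\T\Sigma_i V$ --- the underlying mechanism is identical.
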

\begin{proof}
    Since $\tr(\cdot)$ and $\sigma(\cdot)$ are linear maps,
    it remains to be shown that,
    for a fixed $\Sigma_2 \in \Pcal_d^\oneVec$,
    the map $A \mapsto \log \pdet{\Sigma_2\pinv A}$ is concave on $\Pcal_d^\oneVec$.
    Since $\Sigma_2\pinv$ and $A$ are symmetric with the same kernel,
    \cref{lemma:pseudoDet2} shows that 
    \begin{align*}
        \log \pdet{\Sigma_2\pinv A}
        =
        \log \pdet{\Sigma_2\pinv}
        +
        \log \pdet{A}
        ,
    \end{align*}
    which is a concave function in $A$ if and only if $A \mapsto \log \pdet{A}$ is concave.

    To show this,
    recall the notation from \cref{lemma:pseudoDet2} with $V = \laSpan{\set{\oneVec}}$,
    and let $s \in \brackets{0, 1}$,
    and $A_1, A_2 \in \Qcal_V$ \PSD{}.
    Then indeed
    \begin{align*}
        \log\pdet{
            s A_1 + (1-s) A_2
        }
        &=
        \log\abs{
            s B'_1 + (1-s) B'_2
        }
        \\ &\geq
        s \log\abs{
            B'_1
        }
        +
        (1-s) \log\abs{
            B'_2
        }
        \\ &=
        s \log\pdet{
            A_1
        }
        +
        (1-s) \log\pdet{
            A_2
        }
        ,
    \end{align*}
    where the inequality follows from
    the log-concavity of the determinant for \PD{} matrices
    \citep[see e.g.][p.~73f]{boydVandenberghe2004}.
\end{proof}

In the following results, let
$S \subseteq V \times V$ be such that $(i,j) \in S$ implies $(j,i) \in S$,
i.e., $S$ corresponds to the edges of an undirected graph on $V$,
optionally extended by some pairs $(i,i)$, for $i \in V$.

\begin{lemma}
    \label{lemma:divergence2}
    For
    $\Gamma_i, \Gamma'_i \in \Dcal_d$,
    $\Sigma_i = \proj{\Gamma_i}$,
    $\Theta_i = \ppinv{\Gamma_i}$,
    with $i \in \mathbb{N}$,
    the divergence $\KLdiv{\,\cdot\,\vert\,\cdot\,}$ has the following properties.
    \begin{enumerate}
        \item \label{enum:KL1}
            $\KLdiv{\Gamma_1 \vert \Gamma_2} \geq 0$,
            with equality if and only if $\Gamma_1=\Gamma_2$.
        \item \label{enum:KL2}
            Given $\Gamma_1, \Gamma_3 \in \Dcal_d$, if there exists a $\Gamma_2 \in \Dcal_d$ such that
            \begin{enumerate}
                \item \label{enum:KL2a}
                    $\lr{\Gamma_2}_{ij} = \lr{\Gamma_1}_{ij}$, for $(i,j) \in S$, and
                \item \label{enum:KL2b}
                    $\lr{\Theta_2}_{ij} = \lr{\Theta_3}_{ij}$, for $(i,j) \notin S$,
            \end{enumerate}
            then
            \begin{align*}
                \KLdiv{\Gamma_1 \vert \Gamma_3}
                =
                \KLdiv{\Gamma_1 \vert \Gamma_2}
                +
                \KLdiv{\Gamma_2 \vert \Gamma_3}
                .
            \end{align*}
            If such a $\Gamma_2$ exists, it is unique.
        \item \label{enum:KL3}
            If $\set{\Gamma_n}$ and $\set{\Gamma'_n}$ are sequences contained in compact
            subsets of $\Dcal_d$, then
            $\KLdiv{\Gamma_n \vert \Gamma'_n} \rightarrow 0$
            implies
            $\Gamma_n - \Gamma'_n \rightarrow 0$ as $n \to \infty$.
    \end{enumerate}
\end{lemma}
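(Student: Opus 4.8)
The plan is to handle the three items in turn, with the Pythagorean identity in~\ref{enum:KL2} being the technical core. Item~\ref{enum:KL1} is immediate from \cref{def:KLdivGamma}: writing $\KLdiv{\Gamma_1 \vert \Gamma_2} = \KLdiv{P_1 \vert P_2}$, the Kullback--Leibler divergence between probability measures is nonnegative and vanishes exactly when $P_1 = P_2$. Since $\Gamma \mapsto \Sigma = \proj{\Gamma}$ is a bijection onto $\Pcal_d^\oneVec$ (\cref{prop:thetaHomeo}) and a centered degenerate Gaussian on $\set{\oneVec}^\perp$ is determined by its covariance, $P_1 = P_2$ is equivalent to $\Sigma_1 = \Sigma_2$, i.e. $\Gamma_1 = \Gamma_2$.

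For~\ref{enum:KL2} I would start from the closed form of \cref{lemma:divergence1} and compute the defect $\KLdiv{\Gamma_1 \vert \Gamma_3} - \KLdiv{\Gamma_1 \vert \Gamma_2} - \KLdiv{\Gamma_2 \vert \Gamma_3}$ directly. The three $\log\pdet$ contributions cancel automatically: by the multiplicativity in \cref{lemma:pseudoDet2} together with $\pdet{\Sigma\pinv} = 1/\pdet{\Sigma}$ (\cref{lemma:pseudoDet1}) one has $\log\pdet{\Sigma_2\pinv\Sigma_1} + \log\pdet{\Sigma_3\pinv\Sigma_2} = \log\pdet{\Sigma_3\pinv\Sigma_1}$, while the surviving constant is absorbed using $\trace{\Sigma_i\pinv\Sigma_i} = \trace{\ID} = d-1$. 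What remains is
\begin{align*}
  \KLdiv{\Gamma_1 \vert \Gamma_3} - \KLdiv{\Gamma_1 \vert \Gamma_2} - \KLdiv{\Gamma_2 \vert \Gamma_3}
  &= -\half \trace{\lr{\Theta_2 - \Theta_3}\lr{\Sigma_1 - \Sigma_2}},
\end{align*}
so it suffices to show the right-hand trace is zero. Setting $M = \Theta_2 - \Theta_3$, we have $M\oneVec = \zeroVec$ hence $\ID M \ID = M$, and $\Sigma_1 - \Sigma_2 = \ID\lr{-\half\lr{\Gamma_1 - \Gamma_2}}\ID$; cyclicity of the trace then gives $\trace{M\lr{\Sigma_1 - \Sigma_2}} = -\half \sum_{i,j} M_{ij}\lr{\Gamma_1 - \Gamma_2}_{ij}$. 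Each summand vanishes: if $(i,j) \in S$ then $\lr{\Gamma_1 - \Gamma_2}_{ij} = 0$ by~\ref{enum:KL2a}, and if $(i,j) \notin S$ then $M_{ij} = 0$ by~\ref{enum:KL2b}. This is the step I expect to be the crux, and it is precisely why the two hypotheses are placed on complementary index sets.

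Uniqueness of $\Gamma_2$ follows by applying the additivity twice. If $\Gamma_2$ and $\Gamma_2'$ both satisfy~\ref{enum:KL2a}--\ref{enum:KL2b}, then they agree on $S$, so both triples $\lr{\Gamma_2', \Gamma_2, \Gamma_3}$ and $\lr{\Gamma_2, \Gamma_2', \Gamma_3}$ meet the hypotheses; adding the two resulting decompositions and cancelling the common terms leaves $\KLdiv{\Gamma_2' \vert \Gamma_2} + \KLdiv{\Gamma_2 \vert \Gamma_2'} = 0$, whence $\Gamma_2 = \Gamma_2'$ by~\ref{enum:KL1}.

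Finally, for~\ref{enum:KL3} I would argue by contradiction via compactness. The map $\lr{\Gamma_1,\Gamma_2} \mapsto \KLdiv{\Gamma_1 \vert \Gamma_2}$ is continuous on $\Dcal_d \times \Dcal_d$: in \cref{lemma:divergence1} the maps $\proj{\,\cdot\,}$ and $\ppinv{\,\cdot\,}$ are continuous (\cref{prop:thetaHomeo}), the trace is continuous, and $\log\pdet{\Sigma_2\pinv\Sigma_1} = \log\pdet{\Sigma_1} - \log\pdet{\Sigma_2}$ is continuous because all $\Sigma_i$ share the kernel $\laSpan{\set{\oneVec}}$ and the pseudo-determinant is continuous along convergent sequences in a fixed $\Qcal_V$ (\cref{lemma:pseudoDet2}). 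If $\Gamma_n - \Gamma_n' \not\to \zeroVec$, pass to a subsequence along which $\norm{\Gamma_n - \Gamma_n'}$ is bounded away from $0$; since the sequences lie in compact subsets of $\Dcal_d$, a further subsequence satisfies $\Gamma_n \to \Gamma_\ast$ and $\Gamma_n' \to \Gamma_\ast'$ with $\Gamma_\ast, \Gamma_\ast' \in \Dcal_d$ and $\Gamma_\ast \neq \Gamma_\ast'$. Continuity then yields $\KLdiv{\Gamma_\ast \vert \Gamma_\ast'} = \lim_n \KLdiv{\Gamma_n \vert \Gamma_n'} = 0$, contradicting~\ref{enum:KL1}.
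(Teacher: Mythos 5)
Your proposal is correct and follows essentially the same route as the paper's proof: nonnegativity plus injectivity of $\sigma$ for item~\ref{enum:KL1}, the Pythagorean identity via \cref{lemma:divergence1} with the cross term reduced by cyclicity and $\ID$-invariance to $\sum_{i,j}(\Theta_2-\Theta_3)_{ij}(\Gamma_2-\Gamma_1)_{ij}$, which vanishes entrywise by the complementary hypotheses, and a compactness--continuity contradiction for item~\ref{enum:KL3}. The only cosmetic difference is in the uniqueness step, where you apply the additivity to two symmetric triples rather than the paper's single degenerate triple $\Gamma_1'=\Gamma_3'=\Gamma_2$; both reduce to $\KLdiv{\Gamma_2\vert\Gamma_2'}+\KLdiv{\Gamma_2'\vert\Gamma_2}=0$.
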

\begin{proof}
    Statement \ref{enum:KL1} is a well-known property of the \KL{} divergence,
    which can be applied since $\sigma$ is injective (see \cref{prop:thetaHomeo}),
    and each $\Sigma \in \Pcal_d^\oneVec$ characterizes a different distribution.

    To show statement \ref{enum:KL2}, use the multiplicative property of the pseudo-determinant on $\Pcal_d^{\oneVec}$ to compute
    \begin{align*}
        &\hphantom{=}
        -2 \lr{
            \KLdiv{\Gamma_1 \vert \Gamma_2}
            +
            \KLdiv{\Gamma_2 \vert \Gamma_3}
        }
        \\
        &=
        \log \pdet{
            \Sigma_2\pinv \Sigma_1
        }
        +
        \trace{
            \Sigma_1\pinv \Sigma_1
        }
        -
        \trace{
            \Sigma_2\pinv \Sigma_1
        }
        +
        \log \pdet{
            \Sigma_3\pinv \Sigma_2
        }
        +
        \trace{
            \Sigma_2\pinv \Sigma_2
        }
        -
        \trace{
            \Sigma_3\pinv \Sigma_2
        }
        \\
        &=
        \log \pdet{\Sigma_3\pinv \Sigma_2 \Sigma_2\pinv \Sigma_1}
        +
        \trace{
            \Sigma_1\pinv \Sigma_1
            -
            \Sigma_2\pinv \Sigma_1
            +
            \Sigma_2\pinv \Sigma_2
            -
            \Sigma_3\pinv \Sigma_2
        }
        \\
        &=
        \log \pdet{\Sigma_3\pinv \Sigma_1}
        +
        \trace{
            \lr{
                \Sigma_2\pinv - \Sigma_3\pinv
            }
            \lr{
                \Sigma_2 - \Sigma_1
            }
            -
            \Sigma_3\pinv \Sigma_1
            +
            \Sigma_1\pinv \Sigma_1
        }
        \\
        &=
        -2
        \KLdiv{\Gamma_1 \vert \Gamma_3}
        +
        \trace{
            \lr{
                \Sigma_2\pinv - \Sigma_3\pinv
            }
            \lr{
                \Sigma_2 - \Sigma_1
            }
        }
        .
    \end{align*}
    Since $\Sigma_i = \ID\tlr{-\half\Gamma_i}\ID$ and since $\ID \Theta_i = \Theta_i \ID = \Theta_i$, the invariance of the trace operator under cyclic permutations permits writing
    \begin{align*}
        \trace{
            \lr{
                \Sigma_2\pinv - \Sigma_3\pinv
            }
            \lr{
                \Sigma_2 - \Sigma_1
            }
        }
        &=
        -\half \trace{
            \lr{
                \Theta_2 - \Theta_3
            }
            \ID
            \lr{
                \Gamma_2 - \Gamma_1
            }
        	\ID
        }
        \\
        &=
        -\half
        \trace{
            \lr{
                \Theta_2 - \Theta_3
            }
            \lr{
                \Gamma_2 - \Gamma_1
            }
        }
    	\\
    	&=
    	-\half
    	\sum_{(i,j) \in V \times V} \lr{\Theta_2-\Theta_3}_{ij} \lr{\Gamma_2-\Gamma_1}_{ij}
        .
    \end{align*}
    Each term in the sum is zero by \ref{enum:KL2a} and \ref{enum:KL2b}.

    Uniqueness can be shown as in the \PD{} case by considering
    $\Gamma_2, \Gamma_2'$ that satisfy \ref{enum:KL2a} and \ref{enum:KL2b}.
    If $\Gamma_2, \Gamma_2'$ both satisfy the said properties with respect to $\Gamma_1$ and $\Gamma_3$, then they obviously also satisfy those properties with respect to $\Gamma_1' = \Gamma_3' = \Gamma_2$.
    But then,
    property~\ref{enum:KL1} and the first part of property~\ref{enum:KL2} yield
    \begin{align*}
        &
        0
        =
        \KLdiv{\Gamma_2 \vert \Gamma_2}
        =
        \KLdiv{\Gamma_2 \vert \Gamma_2'}
        +
        \KLdiv{\Gamma_2' \vert \Gamma_2}
        \\
        \implies \quad &
        \KLdiv{\Gamma_2 \vert \Gamma_2'}
        =
        \KLdiv{\Gamma_2' \vert \Gamma_2}
        =
        0
        \\
        \implies \quad &
        \Gamma_2 = \Gamma_2'
        .
    \end{align*}

    Lastly, suppose $\set{\Gamma_n}$ and $\set{\Gamma'_n}$ are 
    as in statement \ref{enum:KL3},
    but $\Gamma_n - \Gamma'_n \not\rightarrow 0$.
    Then there are convergent subsequences
    $\Gamma_{n_i} \rightarrow \Gamma^*$ and 
    $\Gamma'_{n_i} \rightarrow \Gamma^{*'}$
    with $\Gamma^* \neq \Gamma^{*'}$.
    By continuity of $\KLdiv{}$,
    $
        \KLdiv{\Gamma_{n_i}\mid \Gamma'_{n_i}}
        \rightarrow
        \KLdiv{\Gamma^* \mid \Gamma^{*'}}
        \neq
        0
    $,
    which is a contradiction.
\end{proof}

\begin{corollary}
    \label{cor:completionUnique}
    Let $G=(V,E)$ be a connected graph,
    $\GammaO$ a \PCND{} matrix on $G$,
    and $\Gamma$ a graphical completion of $\GammaO$,
    in the sense of \cref{def:matrixCompletionProblem}.
    Then
    \cref{lemma:divergence2}, \cref{enum:KL2},
    with $\Gamma_1 = \Gamma_3 = \Gamma$
    and $S = \Ebar$
    shows the uniqueness of the completion from
    \cref{lemma:completeTwoCliquesGamma,prop:completeDecomposable,prop:completeGeneral}.
\end{corollary}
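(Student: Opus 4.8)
The plan is to obtain uniqueness as a direct application of the uniqueness clause already contained in \cref{lemma:divergence2}, \cref{enum:KL2}, by recognizing any graphical completion as the interpolating matrix $\Gamma_2$ appearing there. First I would suppose, towards uniqueness, that $\Gamma, \Gamma' \in \Dcal_d$ are two completions of $\GammaO$ in the sense of \cref{def:matrixCompletionProblem}, with precision matrices $\Theta = \ppinv{\Gamma}$ and $\Theta' = \ppinv{\Gamma'}$. By definition of a completion, both agree with the prescribed values on $\Ebar$ and have vanishing off-graph precision entries, that is
\[
  \Gamma_{ij} = \GammaO_{ij} = \Gamma'_{ij} \text{ for } (i,j) \in \Ebar,
  \qquad
  \Theta_{ij} = 0 = \Theta'_{ij} \text{ for } (i,j) \notin \Ebar.
\]

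The key step is to choose $S = \Ebar$ and $\Gamma_1 = \Gamma_3 = \Gamma$ in \cref{enum:KL2}. The point is that the reference matrix $\Gamma$ plays a double role: in the role of $\Gamma_1$ it fixes the prescribed edge values, while in the role of $\Gamma_3$ it fixes the prescribed zero pattern of the precision matrix, so that the two defining constraints of a completion become exactly the two hypotheses (a) and (b) of \cref{enum:KL2}. With this identification, both candidates satisfy these hypotheses relative to $(\Gamma_1, \Gamma_3) = (\Gamma, \Gamma)$: the matrix $\Gamma$ does so trivially, and $\Gamma'$ does so because $\Gamma'_{ij} = \GammaO_{ij} = (\Gamma_1)_{ij}$ for $(i,j) \in S$ and $\Theta'_{ij} = 0 = (\Theta_3)_{ij}$ for $(i,j) \notin S$.

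Finally I would invoke the uniqueness assertion of \cref{enum:KL2}, which states that such an interpolant is unique, giving $\Gamma = \Gamma'$. The underlying mechanism—already established in the proof of the lemma—is the additivity identity applied with $\Gamma_1 = \Gamma_3 = \Gamma$, namely $0 = \KLdiv{\Gamma \vert \Gamma} = \KLdiv{\Gamma \vert \Gamma'} + \KLdiv{\Gamma' \vert \Gamma}$, combined with the nonnegativity of the divergence from \cref{enum:KL1}; together these force both summands to vanish, whence $\Gamma = \Gamma'$.

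I do not expect a genuine obstacle, since the corollary is in essence a bookkeeping consequence of \cref{lemma:divergence2}. The only points needing care are verifying that every graphical completion indeed belongs to $\Dcal_d$, so that the divergence and hence the lemma are applicable, and correctly matching the two completion constraints to the hypotheses of \cref{enum:KL2} through the dual use of $\Gamma$ as both $\Gamma_1$ and $\Gamma_3$.
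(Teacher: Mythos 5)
Your proposal is correct and takes exactly the route the paper intends: the corollary is a direct instantiation of the uniqueness clause of \cref{lemma:divergence2}, \cref{enum:KL2}, with $S = \Ebar$ and $\Gamma_1 = \Gamma_3 = \Gamma$, and your verification that any two graphical completions satisfy hypotheses (a) and (b) in this setting—together with the additivity-plus-nonnegativity mechanism forcing $\KLdiv{\Gamma \vert \Gamma'} = \KLdiv{\Gamma' \vert \Gamma} = 0$—is precisely how the paper establishes that uniqueness inside the lemma. Nothing is missing.
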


Recall $S \subseteq V \times V$ as introduced prior to \cref{lemma:divergence2}.

\begin{lemma}
    \label{lemma:prop3SpeedKivery}
    Let $S_1, \dots, S_m \subseteq S$ be such that their union is $S$.
    Let
    $\Gamma \in \Dcal_d$,
    $\Theta \in \Pcal_d^{\oneVec}$ (not necessarily $\Theta = \theta(\Gamma)$),
    and
    $\Gamma_n \in \Dcal_d$,
    $\Theta_n = \ppinv{\Gamma_n}$,
    for $n \in \mathbb{N}_0$,
    with $\Gamma_0 = \Gamma$.

    If each $\Gamma_n$, $n \geq 1$, satisfies
    \begin{alignat*}{2}
        \lr{\Theta_n}_{ij}
        &=
        \Theta_{ij}
        && \qquad \mathrm{if} \,
        (i,j) \in S_t
        ,
        \\
        (\Gamma_{n-1})_{ij}
        &=
        (\Gamma_n)_{ij}
        && \qquad \mathrm{if} \,
        (i,j) \notin S_t
        ,
    \end{alignat*}
    with $t \equiv n \mod m$,
    then the sequence $\lr{\Gamma_n}_n$ converges to the unique
    $\Gamma^S \in \Dcal_d$ that, writing $\Theta^S = \ppinv{\Gamma^S}$, satisfies
    \begin{alignat}{2}
        \label{eq:prop3speedKiiveri1}
        \Theta^S_{ij}
        &=
        \Theta_{ij}
        && \qquad \mathrm{if} \,
        (i,j) \in S
        ,
        \\
        \label{eq:prop3speedKiiveri2}
        \Gamma^S_{ij}
        &=
        \Gamma_{ij}
        && \qquad \mathrm{if} \,
        (i,j) \notin S
        .
    \end{alignat}
\end{lemma}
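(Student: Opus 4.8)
The plan is to recognise this as the semidefinite analogue of the convergence proof for iterative proportional scaling in \cite{speedKivery1986}, with the \KL{} divergence $\KLdiv{\,\cdot\,\vert\,\cdot\,}$ of \cref{def:KLdivGamma} as the potential that the iteration monotonically decreases. The key observation is that each update $\Gamma_{n-1}\mapsto\Gamma_n$ is an information projection: $\Gamma_n$ retains the variogram entries of $\Gamma_{n-1}$ off $S_t$ and forces the precision entries on $S_t$ to equal $\Theta$. I would first record the three-point identity that drives everything. For any $\Gamma'\in\Dcal_d$ whose precision $\ppinv{\Gamma'}$ agrees with $\Theta$ on $S_t$, I apply \cref{lemma:divergence2}, part~\ref{enum:KL2}, with the symmetric index set taken to be the complement $(V\times V)\setminus S_t$ (again a symmetric index set, as required), so that the ``variogram agreement on the set'' condition becomes agreement off $S_t$ and the ``precision agreement off the set'' condition becomes agreement on $S_t$. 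Choosing $\Gamma_1=\Gamma_{n-1}$, $\Gamma_2=\Gamma_n$, $\Gamma_3=\Gamma'$, both hypotheses hold by construction ($\Gamma_n=\Gamma_{n-1}$ off $S_t$ and $\ppinv{\Gamma_n}=\Theta=\ppinv{\Gamma'}$ on $S_t$), yielding
\begin{equation*}
  \KLdiv{\Gamma_{n-1}\vert\Gamma'} = \KLdiv{\Gamma_{n-1}\vert\Gamma_n} + \KLdiv{\Gamma_n\vert\Gamma'}.
\end{equation*}

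Uniqueness of a matrix satisfying \cref{eq:prop3speedKiiveri1,eq:prop3speedKiiveri2} I would obtain exactly as in \cref{cor:completionUnique}: any two such matrices both have precision equal to $\Theta$ on $S$ and variogram equal to $\Gamma$ off $S$, so the uniqueness clause of \cref{lemma:divergence2}, part~\ref{enum:KL2}, forces them to coincide. For existence and convergence I would fix the concrete anchor $\Gamma''=\theta\inv(\Theta)=\farris{\Theta\pinv}\in\Dcal_d$, which by \cref{prop:thetaHomeo} has $\ppinv{\Gamma''}=\Theta$ everywhere and hence satisfies the precision condition on every $S_t$. Plugging $\Gamma'=\Gamma''$ into the three-point identity shows that $c_n:=\KLdiv{\Gamma_n\vert\Gamma''}$ is non-increasing and bounded below by $0$ (\cref{lemma:divergence2}, part~\ref{enum:KL1}); thus $c_n$ converges and the increments $\KLdiv{\Gamma_{n-1}\vert\Gamma_n}=c_{n-1}-c_n$ tend to $0$.

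The remaining work is to upgrade vanishing increments to genuine convergence of $\Gamma_n$. First I would establish compactness: writing the divergence via \cref{lemma:divergence1} in terms of the $d-1$ generalised eigenvalues $\mu_i$ of $\sigma(\Gamma_n)$ relative to $\sigma(\Gamma'')$ on $\set{\oneVec}^\perp$ gives $\KLdiv{\Gamma_n\vert\Gamma''}=\tfrac12\sum_i(\mu_i-1-\log\mu_i)$. Since $\mu\mapsto\mu-1-\log\mu$ is nonnegative and proper on $(0,\infty)$, the uniform bound $c_n\le c_0$ keeps every $\mu_i$ bounded away from $0$ and from $\infty$, so $\set{\Gamma_n}$ stays in a compact subset of $\Dcal_d$. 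With both $\set{\Gamma_{n-1}}$ and $\set{\Gamma_n}$ in this compact set and $\KLdiv{\Gamma_{n-1}\vert\Gamma_n}\to0$, \cref{lemma:divergence2}, part~\ref{enum:KL3}, yields $\Gamma_n-\Gamma_{n-1}\to0$, and hence $\Gamma_{n+j}-\Gamma_n\to0$ for every fixed lag $j$ by telescoping.

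Finally I would identify every subsequential limit. Let $\Gamma^*$ be a limit of some convergent subsequence. Entries off $S$ are never modified by any update, so $\Gamma^*_{ij}=\Gamma_{ij}$ there, giving \cref{eq:prop3speedKiiveri2}. For \cref{eq:prop3speedKiiveri1}, I fix a residue $t$ and shift the subsequence by a bounded lag to indices congruent to $t \pmod m$; by the vanishing of fixed-lag differences these shifted iterates still converge to $\Gamma^*$, and at such indices $\ppinv{\Gamma_n}=\Theta$ on $S_t$, so continuity of $\theta$ (\cref{prop:thetaHomeo}) gives $\ppinv{\Gamma^*}=\Theta$ on $S_t$; ranging over $t$ delivers agreement on all of $S=\bigcup_tS_t$. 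Thus every subsequential limit satisfies both conditions, so by uniqueness they all equal $\Gamma^S$, whence the whole sequence converges. I expect the main obstacle to be this compactness/coercivity step combined with the cyclic bookkeeping: one must rule out the iterates drifting to the boundary of $\Dcal_d$ and ensure that a single limit point simultaneously meets the constraints from all $m$ blocks, which is exactly where properness of the divergence and the fixed-lag argument are needed.
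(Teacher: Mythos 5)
Your proof is correct and takes essentially the same route as the paper's: the same telescoping Pythagorean decomposition from \cref{lemma:divergence2}, part~\ref{enum:KL2}, with anchor $\farris{\Theta\pinv}$, compactness of the iterates, vanishing increments upgraded via part~\ref{enum:KL3}, and identification plus uniqueness of subsequential limits again via part~\ref{enum:KL2}. The only difference is cosmetic: you justify compactness through the eigenvalue form of the divergence (coercivity of $\mu \mapsto \mu - 1 - \log\mu$, via \cref{lemma:divergence1}) and track the cyclic structure with fixed-lag differences, whereas the paper invokes convexity of $\KLdiv{\,\cdot\, \vert \farris{\Theta\pinv}}$ (\cref{lemma:pdetLogConcave}) for the sublevel-set compactness and works with $m$-tuples of consecutive iterates; both variants are sound.
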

\begin{proof}
    Using \cref{lemma:divergence2}, \cref{enum:KL2},
    decompose the \KL{} divergence $\KLdiv{\Gamma_0|\gamma\lr{\Theta\pinv}}$
    as follows:
    \begin{align*}
        \KLdiv{\Gamma_{n-1} | \gamma\lr{\Theta\pinv}}
        &=
        \KLdiv{\Gamma_{n-1} | \Gamma_n}
        + \KLdiv{\Gamma_{n} | \gamma\lr{\Theta\pinv}}
        \\
        \Longrightarrow \quad
        \KLdiv{\Gamma_{0} | \gamma\lr{\Theta\pinv}}
        &=
        \KLdiv{\Gamma_n | \gamma\lr{\Theta\pinv}}
        + \sum_{k=1}^n \KLdiv{\Gamma_{k-1} | \Gamma_k}
        .
    \end{align*}
    From here
    it follows that $\sum_{k=1}^\infty \KLdiv{\Gamma_{k-1}|\Gamma_k}$
    is a convergent series so that
    $\KLdiv{\Gamma_{k-1}|\Gamma_k} \rightarrow 0$ for $k \rightarrow \infty$,
    and also that
    \begin{align*}
        \set{\Gamma_k}_k
        \subseteq
        \setM{
            \Gamma' \in \Dcal_d
        }{
            \KLdiv{\Gamma' \vert \gamma\lr{\Theta\pinv}}
            \leq
            \KLdiv{\Gamma_0 \vert \gamma\lr{\Theta\pinv}}
        }
        =: A
        ,
    \end{align*}
    which is a compact set due to the convexity of
    $\KLdiv{\Gamma|\gamma\lr{\Theta\pinv}}$
    as a function of $\Gamma$ (see \cref{lemma:pdetLogConcave})
    and the facts that
    $\Dcal_d$ is a convex cone
    and $\Gamma \mapsto \KLdiv{\Gamma|\gamma\lr{\Theta\pinv}}$ attains its minimum uniquely at $\Gamma = \gamma\lr{\Theta\pinv}$.
    
    Hence, the sequence $v_s := (\Gamma_{sm+1}, \dots, \Gamma_{sm+m})$, $s \in \N$,
    has a convergent subsequence indexed by $s \in \N^* \subseteq \N$
    with limit $(\Gamma_1^*, \dots, \Gamma_m^*)$.
    For any $2 \leq t \leq m$,
    the entries of this limit satisfy
    \begin{align*}
        (\Gamma_t^* - \Gamma_{t-1}^*)
        &=
        (\Gamma_t^* - \Gamma_{sm+t})
        +
        (\Gamma_{sm+t} - \Gamma_{sm+t-1})
        +
        (\Gamma_{sm+t-1} - \Gamma_{t-1}^*)
        .
    \end{align*}
    For $s \in \N^* \rightarrow \infty$,
    the first and last term in this sum converge to zero by the definition of $\Gamma^*_t$,
    and the second term converges to zero by \cref{lemma:divergence2}, \cref{enum:KL3}.
    Hence, $\Gamma_t^* = \Gamma_{t-1}^* =: \Gamma^*$ for all $2 \leq t \leq m$.
    Since the elements of the sequence $\Gamma_{sm+t}$, $s \in \N^*$, $t = \oneToX{m}$,
    satisfy \cref{eq:prop3speedKiiveri2} (always),
    and \cref{eq:prop3speedKiiveri1} infinitely often,
    the limit $\Gamma^*$ satisfies both conditions, as well.

    The above argument can be repeated for all other convergent subsequences
    of $(v_s)_{s \in \N \setminus \N^*}$
    and \cref{lemma:divergence2}, \cref{enum:KL2},
    then shows that the limits of all these subsequences must be identical,
    hence $(\Gamma_k)_{k \in \N}$ converges to the unique
    $\Gamma^S$ satisfying \cref{eq:prop3speedKiiveri1,eq:prop3speedKiiveri2}.
\end{proof}

\begin{proofOf}[prop:completeGeneral]
    The result follows from \cref{lemma:prop3SpeedKivery} with
    $S_i = E_i^c$ and $\Theta$ being the Laplacian matrix of the graph $G$ (see \cref{def:graphLaplacian}).
    Uniqueness follows from \cref{cor:completionUnique}.
\end{proofOf}

\subsubsection{Example of non-completable matrix}
\label{subsubsec:proofNonCompletableGraphs}

\begin{proofOf}[exmp:noCompletion]
    A useful property of variogram matrices is the fact
    that they can equivalently be interpreted as Euclidean distance matrices.
    For instance, \cite{gower1982} shows that
    a matrix is (strictly) \CND{} if and only if
    it is the $d \times d$ matrix consisting of the squared distances
    $\norm{p_i - p_j}^2$
    of a set of points
    $p_1, \dots, p_d \in \Rd[d-1]$ that do not lie in a lower dimensional affine hyper-plane.

    If there was a completion of the matrix $\GammaO$ from the \namecref{exmp:noCompletion},
    then the above interpretation as a (partial) Euclidean distance matrix 
    would imply the existence of a set of points $p_1, \dots, p_d$,
    with distances $\norm{p_i - p_{i+1}} = 1$ for $i = \oneToX{d-1}$,
    and $\norm{p_1 - p_d} = 2d$.
    However, by the triangle inequality this leads to the contradiction
    \begin{align*}
        2d
        =
        \norm{p_1 - p_d}
        \leq
        \sum_{i=1}^{d-1} \norm{p_i - p_{i+1}}
        =
        d
        .
    \end{align*}
    \cite{bakonyiJohnson1995} show that such a counter-example can be constructed
    for any non-decomposable graph.
\end{proofOf}

\subsection{Proofs of statistical inference related results}

\subsubsection{Matrix completion as likelihood optimization}

\begin{proofOf}[prop:completionOptProblem]
    Let
    $U: \Rdd \rightarrow \Rd[d(d-1)/2]$
    denote the mapping that maps a matrix to the vector containing the entries in its upper triangular part (excluding the diagonal).
    Note that the restriction $\restrictTo{U}{\Pcal_d}$ is invertible,
    since the lower triangular part of $\Theta \in \Pcal_d$ is defined by symmetry
    and the diagonal is such that the row sums are zero.
    In the computations below,
    we consider $P \in U(\Pcal_d)$ and
    write $\Theta :\equiv \Theta(P)$ for notational convenience.

    Consider the function $f(P) = \log\pdet{\Theta} + \half \trace{\overline\Gamma\Theta}$.
    \citet[][Proposition~A.5]{roettger2023}
    show that
    \begin{align*}
        \nabla_P \log \pdet{\Theta}
        =
        U(
            -\gamma\lr{\Theta\pinv}
        )
        .
    \end{align*}
    Furthermore, by symmetry of the matrices $\Theta$ and $\overline\Gamma$,
    \begin{align*}
        \nabla_P
        \trace{\Theta\overline\Gamma}
        =
        \nabla_P
        \sum_{i,j \in \oneToX{d}}
        \Theta_{ij}
        \overline\Gamma_{ij}
        =
        \nabla_P
        \sum_{\substack{
            i,j \in \oneToX{d}
            \\
            i<j
        }}
        2 \Theta_{ij} \overline\Gamma_{ij}
        =
        U(
            2 \overline\Gamma
        )
        ,
    \end{align*}
    and hence
    \begin{align*}
        \nabla_P f(\Theta)
        =
        U(\overline\Gamma - \gamma(\Theta))
        .
    \end{align*}
    The map $\Theta \mapsto \trace{\overline\Gamma\Theta}$
    is linear
    and the proof of
    \cref{lemma:pdetLogConcave}
    shows that the map $\Theta \mapsto \log\pdet{\Theta}$
    is concave.
    Hence, the maximizer of $f$ under the constraint
    $\Theta_{ij} = 0$ for $(i,j) \notin \Ebar$
    satisfies
    \begin{align*}
        \gamma\lr{\Theta}_{ij} = \overline\Gamma_{ij}
        , \quad
        \forall (i,j) \in \Ebar
        .
    \end{align*}
    \cref{prop:completeGeneral} shows that the unique solution 
    satisfying this condition and the constraint is $\theta\lr{\comp[G]{\overline\Gamma}}$.
    
\end{proofOf}

\subsubsection{Matrix completion is continuous}

\begin{proofOf}[lemma:matrixCompletionContinuous]
	Enumerate the edges as $E=\set{e_1, \hdots, e_m}$ with $m=\abs{E}$ the cardinality of $E$.
    Let $\ppinv{}$ denote the mapping sending a variogram matrix $\Gamma$ to its precision matrix $\Theta = \ppinv{\Gamma}$
    as constructed in \cref{prop:ThetaReps}.
    Let $\pi_G$ be the restriction map
    \begin{alignat*}{3}
        \myFunctionDefinition{\pi_G}{
            \Rdd
        }{
            \Rd[m]
        }{
            M
        }{
            v
        }
    \end{alignat*}
    where $v$ has entries $v_k$ for $k=1,\dots, m$, with
    \begin{align*}
        v_k = M_{ij}
        \text{~for~}
        e_k = (i,j),
        i<j
        .
    \end{align*}
    In words, $\pi_G$ extracts from a matrix $M$ the $m$ elements $M_{ij}$ corresponding to the edges $(i, j) \in E$ and stacks them in a vector.

    Let $\Qcal_G$ be the set of symmetric $d \times d$ matrices with zero row sums and with zeros in off-diagonal positions corresponding to non-edges of $G$:
    \[
    	\Qcal_G := \setM{M \in \Rdd}{
    	M = M\T, \;
    	M \oneVec = \zeroVec, \;
    	M_{ij} = 0 \; \forall (i, j) \not\in \Ebar
    }.
    \]
    Clearly, $M \in \Qcal_G$ is determined by its elements $M_{ij}$ for $(i, j) \in E$ such that $i < j$, i.e., by $\pi_G(M)$.
    Indeed, the elements of $M$ below the diagonal are determined by the symmetry constraint $M = M\T$ and its diagonal elements are determined by the zero row-sum constraint $M \oneVec = \zeroVec$.
    Since all restrictions imposed on $M$ are linear, $\Qcal_G$ is a linear subspace of $\Rdd$ of dimension $m$.
    The space $\Qcal_G$ is thus isomorphic with $\Rd[m]$: identify $M \in \Qcal_G$ with the elements $M_{ij}$ for index pairs $(i, j) \in E$ such that $i < j$.
    Furthermore, the map $\pi_G$ above, when restricted to $\Qcal_G$, is a linear isomorphism between $\Qcal_G$ and $\Rd[m]$; in particular, it is continuous.

	The set of precision matrices $\Theta \in \Pcal_d^1$ with $\Theta_{ij} = 0$ for $(i, j) \not\in \Ebar$, i.e., those that figure in the matrix completion problem in \cref{def:matrixCompletionProblem}, is equal to
	\[
		\Pcal_G
		:= \setM{M \in \Qcal_G}{
					\ker M = \laSpan{\set{\oneVec}}, \;
					M ~ \text{\PSD}
				}.
	\]
	Let $\lambda_1(S) \ge \ldots \ge \lambda_d(S)$ be the $d$ real eigenvalues of a symmetric matrix $S \in \Rdd$, counted with multiplicities and ordered decreasingly. We have
    \[
    	\Pcal_G
    	= \setM{M \in \Qcal_G}{\lambda_{d-1}(M) > 0}.
    \]
    The functions $\lambda_j$ are well-known to be Lipschitz on $\setM{S \in \Rdd}{S = S\T}$.
    It follows that $\Pcal_G$ is open in $\Qcal_G$. Upon the above identification of $\Qcal_G$ with $\Rd[m]$ via $\pi_G$, we can thus view $\Pcal_G$ as an open subset of $\Rd[m]$. Formally, this subset is denoted as
    \[
    	\mathring\Pcal_G := \pi_G(\Pcal_G) \subset \Rd[m].
    \]
    Let $f_1$ denote the inverse of the restriction of $\pi_G$ to $\Pcal_G$, that is,
    \begin{alignat*}{3}
	\myFunctionDefinition{f_1}{
		\mathring\Pcal_G
	}{
		\Pcal_G
	}{
		v
	}{
		M
		=
		\pi_G\inv\lr{v}
	}
	\end{alignat*}
	Since $\pi_G$ was continuously invertible on the set $\Qcal_G$ that contains $\Pcal_G$, the function $f_1$ is a continuous bijection too.

    The set of variogram matrices $\Gamma$ corresponding to precision matrices $M$ in $\Pcal_G$ is
    \[
    	\Dcal_G
    	:= \theta^{-1}\tlr{\Pcal_G}
    	= \setM{\Gamma \in \Dcal_d}{\theta(\Gamma) \in \Pcal_G}
    	= \set{\theta^{-1}\tlr{M} : M \in \Pcal_G}.
    \]
    By definition, these are exactly the variogram matrices $\Gamma \in \Dcal_d$ of which the associated precision matrices $M = \theta(\Gamma)$ have zero elements $M_{ij} = 0$ for $(i, j) \not\in \Ebar$. In other words, $\Dcal_G$ corresponds to all possible solutions of the matrix completion problem in \cref{def:matrixCompletionProblem} with respect to $G$.

    By \cref{prop:completeGeneral}, a variogram matrix $\Gamma$ in $\Dcal_G$ is uniquely determined by the values of $\Gamma_{ij}$ for $(i, j) \in \Ebar$. Since $\Gamma$ has zero diagonal, this means that $\Gamma \in \Dcal_G$ is uniquely determined by $\pi_G(\Gamma) = (\Gamma_{ij} : (i, j) \in E, \, i < j)$. Another way to say the same thing is that the map $\pi_G$ restricted to $\Dcal_G$ is injective. For clarity, let
    \[
    	\mathring\Dcal_G := \pi_G(\Dcal_G) \subset \Rd[m]
    \]
    denote the image of $\Dcal_G$ under $\pi_G$ and let $f_2$ denote the restriction of $\pi_G$ to $\Dcal_g$:
    \begin{alignat*}{3}
    \myFunctionDefinition{f_2}{
    	\Dcal_G
    }{
    	\mathring\Dcal_G
    }{
    	\Gamma
    }{
    	\pi_G\lr{\Gamma}
        .
    }
    \end{alignat*}
    The inverse mapping of $f_2$ is the completion map $\comp[G]{}$ in~\cref{def:notationComp}.\footnote{Actually, this comes with some abuse of notation: in \cref{def:notationComp}, the set $\mathring{D}_G$ and the completion map $\comp[G]{}$ are defined with the placeholder $\undefined$ for pairs $(i, j) \not\in \Ebar$, whereas in this proof, these unknown elements are omitted altogether.}
    Knowing that $f_2$ is continuous (since $\pi_G$ is continuous), we need to show that $\comp[G]{}$ is continuous as well.

    To do this,
    consider the mapping
    \begin{alignat*}{3}
        \myFunctionDefinition{f}{
            \mathring \Pcal_G
        }{
            \mathring \Dcal_G
        }{
            v
        }{
            \lr{
                f_2 \circ \theta^{-1} \circ f_1
            } \lr{
                v
            }
        }
        .
    \end{alignat*}
    The map $f$ is represented schematically in \cref{fig:f}.
    It sends the restriction $v = \pi_G(M) \in \mathring\Pcal_G \subset \Rd[m]$
    of a precision matrix $M \in \Pcal_G$ to the restriction
    $\pi_G(\Gamma) \in \mathring\Dcal_G \subset \Rd[m]$ of the associated variogram matrix $\Gamma = \theta^{-1}(M)$;
    indeed, we have $M = f_1\tlr{v}$ and thus $f\tlr{v} = \pi_G\tlr{\theta^{-1}\tlr{M}}$.
    The map $f$ is a composition of three continuous bijections and thus a continuous bijection as well.
    The domain of $f$ is $\mathring\Pcal_G$, which was shown to be an open subset of $\Rd[m]$.
    The image of $f$ is $\mathring\Dcal_G$, a subset of $\Rd[m]$ as well.
    By the Brouwer Invariance of Domain Theorem \citep[see e.g.][]{kulpa1998}, $\mathring\Dcal_G$ is open and $f$ is a homeomorphism.
    But since
    \[
    	\comp[G]{} = f_2^{-1} = \theta^{-1} \circ f_1 \circ f^{-1},
    \]
    it follows that $\comp[G]{}$ is continuous too, as required.
    \begin{figure}
        \centering
        \inputTikz[]{completionContinuous}
        \caption{\label{fig:f} Illustration of the definition of $f$}
    \end{figure}
\end{proofOf}

\begin{proofOf}[thm:matrixCompletionConsistent]
    First, we show that
    with probability tending to one
    $\widehat\GammaO$ allows a completion.
    To this end, recall the definition of the set of \CND{} matrices
    from \cref{eq:defVariogramSet},
    \begin{align}
        \label{eq:defVariogramSetProof}
        \Dcal_d
        =
        \setM{
            M \in \Rdd
        }{
            M = M\T
            \,\land\,
            \diag{M} = \zeroVec
            \,\land\,
            v\T M v < 0
            \,\forall\,
            \zeroVec \neq v \perp \oneVec
        }
        ,
    \end{align}
    and let
    $K = \setM{v \in \Rd}{v \perp \oneVec, \norm[\infty]{v} = 1}$.
    The set $K$ is compact and, hence,
    the value
    $\Delta_M = \max_{v \in K} v\T M v$
    exists.
    Since for $v \neq \zeroVec$ we have
    \begin{align*}
        v\T M v
        =
        \norm[\infty]{v}^2
        \cdot
        \lr{v \norm[\infty]{v}\inv}\T
        M
        \lr{v \norm[\infty]{v}\inv}
        ,
    \end{align*}
    with $\lr{v \norm[\infty]{v}\inv} \in K$,
    the inequality condition in \cref{eq:defVariogramSetProof}
    is equivalent to $\Delta_M < 0$.

    Next, 
    let $G$, $\Gamma$, and $\widehat{\GammaO}$ be as in
    the \namecref{thm:matrixCompletionConsistent}.
    Let $\varepsilon = -\Delta_\Gamma/\tlr{2d^2} > 0$,
    and consider the ball
    $B_\varepsilon = \setM{M \in \Rdd}{\norm[\infty]{M - \Gamma} \leq \varepsilon}$,
    where $\norm[\infty]{M}$
    denotes the infinity norm applied to $M$
    interpreted as a $d^2$-dimensional vector.
    Then any $M \in B_\varepsilon$ satisfies
    \begin{align*}
        \max_{v \in K}
        v\T M v
        &=
        \max_{v \in K}
        \sum_i
        \sum_j
        M_{ij} v_i v_j
        \\ &\leq
        \max_{v \in K}
        \sum_i
        \sum_j
        \Gamma_{ij} v_i v_j
        +
        \varepsilon \abs{v_i v_j}
        \\ &\leq
        \max_{v \in K}
        v\T \Gamma v
        - \Delta_\Gamma / 2
        \\ &=
        \Delta_\Gamma / 2
        \\ &<
        0
        .
    \end{align*}
    Let $\Rcal_d = \setM{M \in \Rdd}{M = M\T \,\land\, \diag{M}=\zeroVec}$
    and observe that
    $\Rcal_d \cap B_\varepsilon \subseteq \Dcal_d$.
    For a set of matrices $S \subseteq \Rdd$
    denote $\restrictTo{S}{G} = \setM{\restrictTo{M}{G}}{M \in S}$,
    with $\restrictTo{\cdot}{G}$ as in \cref{eq:defRestrictToGraph},
    and observe that
    \begin{align*}
        \restrictTo{\Rcal_d}{G}
        \cap
        \restrictTo{B_\varepsilon}{G}
        \subseteq
        \restrictTo{\Dcal_d}{G}
        =
        \mathring\Dcal_G
        .
    \end{align*}
    By assumption,
    any realization of
    $\widehat\GammaO$ is always in $\restrictTo{\Rcal_d}{G}$
    and hence
    \begin{align*}
        \P{\widehat\GammaO \in \mathring\Dcal_G}
        =
        \P{\widehat\GammaO \in \restrictTo{B_\varepsilon}{G}}
        =
        \P{\max_{(i,j) \in E} \abs{\widehat\GammaO_{ij} - \Gamma_{ij}} \leq \varepsilon}
        \longrightarrow
        1
        .
    \end{align*}
    Together with the continuity from \cref{lemma:matrixCompletionContinuous}
    this completes the proof.
\end{proofOf}

\end{document}